\newtheorem{theorem}{Theorem}
\newtheorem{definition}[theorem]{Definition}
\newtheorem{lemma}[theorem]{Lemma}
\newtheorem{corollary}[theorem]{Corollary}
\newtheorem{example}[theorem]{Example}
\newenvironment{acknowledgement}{\smallskip\small\noindent\emph{Acknowledgement}}{}
\newcommand{\cmpqed}{}
\newcommand{\cmpbreak}{}
\newcommand{\arxivbreak}{\break}
\newif\ifnonCMP
\newcommand{\bbLbrack}{[\kern-0.4em{[}\,}
\newcommand{\bbRbrack}{\,]\kern-0.4em{]}}
\newcommand{\Mbb}{{\boldsymbol{\EuScript{M}}}}
\newcommand{\Nbb}{{\boldsymbol{\EuScript{N}}}}
\newcommand{\Rbb}{{\boldsymbol{\EuScript{R}}}}
\newcommand{\Ubb}{{\boldsymbol{\EuScript{U}}}}
\newcommand{\II}{{\boldsymbol{1}}}
\newcommand{\CC}{{\mathbb C}}
\newcommand{\RR}{{\mathbb R}}
\newcommand{\NN}{{\mathbb N}}
\newcommand{\ZZ}{{\mathbb Z}}
\newcommand{\CoinX}[1]{C_0^\infty({#1})}
\newcommand{\DD}{{\mathscr D}}
\newcommand{\EE}{{\mathscr E}}
\newcommand{\HH}{{\mathscr H}}
\newcommand{\Gc}{{\mathcal G}}
\newcommand{\Hc}{{\mathcal H}}
\newcommand{\ogth}{{\mathfrak o}}
\newcommand{\tgth}{{\mathfrak t}}
\newcommand{\supp}{{\rm supp}\,}
\DeclareMathOperator{\Span}{span}
\DeclareMathOperator{\ad}{ad}
\renewcommand{\Im}{{\rm Im}\,}
\DeclareMathOperator{\Sym}{Sym}
\newcommand{\dvol}{d\textrm{vol}}
\newcommand{\WF}{{\rm WF}\,}
\newcommand{\Mb}{{\boldsymbol{M}}}
\newcommand{\Nb}{{\boldsymbol{N}}}
\newcommand{\Lc}{{\mathcal{L}}}
\newcommand{\Mc}{{\mathcal{M}}}
\newcommand{\Nc}{{\mathcal{N}}}
\newcommand{\Sc}{{\mathcal{S}}}
\newcommand{\Ct}{{\sf C}}
\newcommand{\BkGrnd}{{\sf BkGrnd}}
\newcommand{\Loc}{{\sf Loc}}
\newcommand{\CLoc}{{\sf CLoc}}
\newcommand{\FLoc}{{\sf FLoc}}
\newcommand{\SpinLoc}{{\sf SpinLoc}}
\newcommand{\Set}{{\sf Set}}
\newcommand{\Alg}{{\sf Alg}}
\newcommand{\Phys}{{\sf Phys}}
\newcommand{\Af}{{\mathscr A}}
\newcommand{\Bf}{{\mathscr B}}
\newcommand{\Cf}{{\mathscr C}}
\newcommand{\Df}{{\mathscr D}}
\newcommand{\Ff}{{\mathscr F}}
\newcommand{\Rf}{{\mathscr R}}
\newcommand{\Sf}{{\mathscr S}}
\newcommand{\Tf}{{\mathscr T}}
\newcommand{\Uf}{{\mathscr U}}
\newcommand{\Vf}{{\mathscr V}}
\newcommand{\FfL}{\Ff_{\sf L}}
\newcommand{\FfS}{\Ff_{\sf S}}
\newcommand{\Rc}{{\sf R}}
\newcommand{\Vc}{{\mathcal V}}
\newcommand{\Wf}{{\mathscr W}}
\newcommand{\id}{{\rm id}}
\newcommand{\nto}{\stackrel{.}{\to}}
\newcommand{\Fld}{{\rm Fld}}
\DeclareMathOperator{\Aut}{Aut}
\newcommand{\rce}{{\rm rce}}
\newcommand{\tc}{{\textrm{tc}}}
\newcommand{\raisemath}[1]{\mathpalette{\raisem@th{#1}}}
\newcommand{\raisem@th}[3]{\raisebox{#1}{$#2#3$}}
\newcommand{\act}[2]{\leftidx{^{\raisemath{-1.5pt}{#1}}}{#2}{}}
\newcommand{\GL}{{\rm GL}}
\newcommand{\SL}{{\rm SL}}
\newcommand{\SO}{{\rm SO}}
\newcommand{\ett}{\widetilde{\eta}}
\newcommand{\xit}{\widetilde{\xi}}
\newcommand{\phit}{\widetilde{\phi}}
\begin{document}
\title{An analogue of the Coleman--Mandula theorem for quantum field theory in curved spacetimes} 

\author[1]{Christopher J. Fewster\thanks{\tt chris.fewster@york.ac.uk}}
\affil{Department of Mathematics,
	University of York, Heslington, York YO10 5DD, United Kingdom.}

\date{\today}
\maketitle
%
%
\maketitle
%
%
\begin{abstract} 
The Coleman--Mandula (CM) theorem states that the Poincar\'e and internal symmetries of a Minkowski spacetime quantum field theory
cannot combine nontrivially in an extended symmetry group. We establish an analogous result
for quantum field theory in curved spacetimes, assuming local covariance, the timeslice property, a local dynamical form of Lorentz invariance, and additivity.
Unlike the CM theorem, our result is valid in dimensions $n\ge 2$
and for free or interacting theories. It is formulated for theories defined on a category of all globally hyperbolic spacetimes equipped with a global coframe, on which the restricted Lorentz group acts, and makes use of a general analysis of  
symmetries induced by the action of a group $G$ on the category of spacetimes.
Such symmetries are shown to be canonically associated with a cohomology class
in the second degree nonabelian cohomology of $G$ with coefficients in the global
gauge group of the theory. Our main result proves that the cohomology class is trivial if
$G$ is the universal cover $\Sc$ of the restricted Lorentz group.  Among other consequences, it follows that the extended
symmetry group is a direct product of the global gauge group and $\Sc$, all fields transform in multiplets of $\Sc$, fields of different spin do not mix under the extended group, and the occurrence of noninteger spin is controlled by the centre of the global
gauge group. The general analysis is also applied to rigid scale covariance.
\end{abstract}
{\noindent\em Dedicated to the memory of Rudolf Haag}

\section{Introduction}
\label{sec:intro}
%


In the issue of Communications in Mathematical Physics 
dedicated to Rudolf Haag's 80th birthday, Brunetti, Fredenhagen
and Verch~\cite{BrFrVe03} introduced \emph{locally covariant quantum field theory}, 
a formulation of QFT in curved spacetimes that is a far-reaching generalization
of Haag's framework of local quantum physics~\cite{Haag} (also called algebraic QFT). 
Locally covariant QFTs are expressed as functors from a category of spacetimes 
$\BkGrnd$ to a category of physical systems $\Phys$. The morphisms of
$\BkGrnd$ correspond to embeddings of one spacetime as a subspacetime of another,
while the morphisms of $\Phys$ correspond to embeddings of one physical system
as a subsystem of another. A functor $\Af:\BkGrnd\to\Phys$ therefore associates
a physical system to every spacetime and also specifies how each spacetime
embedding gives an embedding of these physical systems. Thus, $\Af$ defines
the theory on all spacetimes and incorporates the principle of locality from the start. 
Locally covariant QFT has proved to be a fruitful framework for the general
analysis of QFT in curved spacetime and has allowed various structural
results or properties of flat spacetime QFT to be transferred to  
curved spacetimes (see~\cite{FewVerch_aqftincst:2015} for a review). Examples include the spin-statistics
connection~\cite{Verch01}, the analysis of superselection sectors~\cite{Br&Ru05}, Reeh--Schlieder and split properties~\cite{Sanders_ReehSchlieder,Few_split:2015}, punctured Haag duality~\cite{Ruzzi_punc:2005}, and modular nuclearity~\cite{LechSan:2016}; one can also
discuss the question of whether 
a theory represents the same physics
in all spacetimes~\cite{FewVer:dynloc_theory}.   These ideas
also play a central role in constructions of perturbative QFT
in curved spacetimes~\cite{Ho&Wa01,Ho&Wa02,Rejzner_book}.  

The aim of this paper is to formulate and prove an analogue of the Coleman--Mandula (CM) theorem~\cite{ColeMand:1967} for locally covariant QFT on general parallelizable
globally hyperbolic spacetimes of dimension $n\ge 2$. 
The CM theorem originated as part of an intensive
effort in the 1960's to understand whether the internal and
Poincar\'e symmetries of a QFT in Minkowski space could be combined (`mixed') in a larger
symmetry group other than as a direct product.
These investigations led to a series
of no-go theorems of increasing
scope based on group theoretic grounds~\cite{Michel:1964,ORaif:1965,Jost:1966} or,
as with the CM result itself (and its generalizations to dimensions $n>4$~\cite{PelcHor:1997}), on dynamical considerations centred on the $S$-matrix. Later, supersymmetry offered a potential loophole to
these results, because fermionic charges interchange bosonic and fermionic fields and therefore also change spin. One of Haag's most highly cited papers was his joint work with {\L}opusza\'{n}ski and Sohnius~\cite{HaagLopSoh:1975}, 
in which they showed that the structure of the super Lie algebra in theories obeying certain basic requirements is tightly constrained: in the massive case, for example, internal and Poincar\'e symmetries commute and the fermionic charges must commute with translations and transform as rank-$1$ spinors under the Lorentz group.  
  
The CM theorem concerns a particular spacetime of high symmetry. For a generic 
spacetime with trivial isometry group, it is obvious that the internal and 
geometric symmetries combine as a direct product, and one might think that the 
CM theorem has nothing to say except for spacetimes of high symmetry 
(see~\cite{CostaMorisson:2016} for a recent CM analogue in de Sitter spacetime).
However, the viewpoint of locally covariant QFT suggests a different approach. Rather
than focus on particular spacetimes, we will prove a result (Theorem~\ref{thm:CM}) 
that applies to the theory across \emph{all} spacetimes, and is expressed in 
terms of properties of the corresponding functor. We caution that our result 
should not be viewed as a direct generalization of the CM theorem, but 
nonetheless maintain that it is a natural analogue thereof in the context of 
locally covariant QFT. Theorem~\ref{thm:CM} shares with the CM theorem an 
emphasis on dynamics, but its method of proof is quite
different, and the statement differs from the CM theorem in important respects:
notably, it is valid in all spacetime dimensions $n\ge 2$ and it is not assumed 
that the QFT in question is interacting -- whereas there are well-known free 
theories and two-dimensional models that evade CM.  We comment more on these 
points below after first explaining the main ideas of our approach. 

It is necessary to recall two ways in which symmetry can be exhibited by a locally
covariant theory $\Af:\BkGrnd\to\Phys$. First, the \emph{spacetime symmetries} of 
a spacetime $\Mb$ are just the automorphisms $\psi:\Mb\to\Mb$ in $\BkGrnd$. Any
such automorphism is mapped automatically to an automorphism $\Af(\psi)$ of the physical system $\Af(\Mb)$ of the theory on $\Mb$, and for two such symmetries one has $\Af(\psi)\circ\Af(\varphi)=\Af(\psi\circ\varphi)$ by functoriality. In this way, the 
(generically trivial) group $\Aut(\Mb)$ of spacetime symmetries of $\Mb$ is represented in the automorphism group of $\Af(\Mb)$. Second, the \emph{internal symmetries} of the theory have a natural description. Any functor $\Af$ has an associated group, $\Aut(\Af)$, consisting of all natural isomorphisms of $\Af$ to itself. In locally covariant QFT, $\Aut(\Af)$ is the global gauge group of the theory~\cite{Fewster:gauge}. It follows 
from the definition that internal symmetries commute with spacetime symmetries. 
For this reason we will focus on their combination with the Lorentz group.

In order to give the Lorentz group some purchase in curved spacetimes, Theorem~\ref{thm:CM} is formulated for locally covariant theories defined on $\BkGrnd=\FLoc$, the
category of all $n$-dimensional globally hyperbolic
spacetimes equipped with a global coframe $e=(e^\mu)_{\mu=0}^{n-1}$ for the metric $g=\eta_{\mu\nu}e^\mu\otimes e^\nu$. 
Among other requirements, a $\FLoc$-morphism $\psi$ between spacetimes with frames $e$ and $e'$ obeys $\psi^*e'{}^\mu=e^\mu$ (see section~\ref{sec:LocFLoc}). The category $\FLoc$ provides a minimal setting for general locally covariant theories and was introduced in order to discuss the spin-statistics connection~\cite{Few_Regensburg:2015,Fewster:MG2015}; it has also found use in the perturbative programme~\cite{Rejzner_book}. For our purposes the key point is that the restricted
Lorentz group $\Lc_0$ acts on $\FLoc$, by modifying the coframe 
as $e\mapsto \Lambda e$, where $(\Lambda e)^\mu= \Lambda^\mu_{\phantom{\mu}\nu}e^\nu$.
This group action leaves the metric and (time)-orientation unchanged, and
physical theories should be covariant with respect to it.

Lorentz covariance in this sense is neither an internal nor a spacetime symmetry
(indeed, it maps between backgrounds that are not generally linked by any morphism
of $\FLoc$). A similar situation occurs for rigid scaling, which 
also acts on $\FLoc$ and the category $\Loc$ often used in locally covariant QFT;
not all theories display rigid scale covariance, but it is useful
to be able to distinguish and analyze those that do. We therefore make a systematic analysis of theories that are covariant under a group action on $\BkGrnd$ (section~\ref{sec:WGA}) and illustrate it using rigid scaling (section~\ref{sec:scaling}) before passing to the discussion of Lorentz symmetry and our main result (section~\ref{sec:CM}).  

The outline is as follows. Suppose a group $G$ acts functorially on the category $\BkGrnd$ so that $g\in G$ maps any spacetime $\Mb$ to some $\act{g}{\Mb}$ and each morphism $\psi:\Mb\to\Nb$
to some $\act{g}{\psi}:\act{g}{\Mb}\to\act{g}{\Nb}$, with the identity acting
trivially and $\act{gh}{\Mb}=\act{g}{(\act{h}{\Mb})}$, $\act{gh}{\psi}=\act{g}{(\act{h}{\psi})}$. Given a theory
$\Af:\BkGrnd\to\Phys$, each element $g\in G$ determines
a new theory $\act{g}{\Af}$ obtained by defining
$\act{g}{\Af}(\Mb)=\Af(\act{g}{\Mb})$ and  $\act{g}{\Af}(\psi)=\Af(\act{g}{\psi})$.\footnote{
This action is written contravariantly, $\act{gh}{\Af}=\act{h}{(\act{g}{\Af})}$, to avoid a proliferation of inverses.} We say that $\Af$ is $G$-covariant if all these theories are physically equivalent,
meaning that there is a natural isomorphism between $\Af$ and each $\act{g}{\Af}$.
As will be shown, these isomorphisms determine a group $2$-cocycle
of $G$ with coefficients in the (potentially nonabelian) gauge group $\Aut(\Af)$. It turns out (theorem~\ref{thm:cohom}) that this $2$-cocycle is intrinsic to $\Af$; any other system of isomorphisms between $\Af$ and the $\act{g}{\Af}$ results in a cohomologous $2$-cocycle; in other words the $G$-covariance determines a distinguished cohomology class $[\Af]_G\in H^2(G,\Aut(\Af))$. Associated with this class is a canonical group extension $E$ of $G$ by $\Aut(\Af)$, 
under which the fields of the theory transform in multiplets (theorem~\ref{thm:grouprep}). 
A key question is whether such $E$-multiplets might contain inequivalent submultiplets for the action of $G$ that are mixed under the action of $E$.
This can be excluded (for irreducible $G$-multiplets) if $E$ is simply a direct product $E=\Aut(\Af)\times G$, which holds if $[\Af]_G$ is trivial (corollary~\ref{cor:nomix}). 

Theorem~\ref{thm:CM} uses this general analysis to prove that any theory 
defined on $\FLoc$ obeying the timeslice property, additivity and
dynamical local Lorentz invariance is $\Sc$-covariant with a trivial 
$2$-cocycle, where $\Sc$ is the universal covering of the 
restricted Lorentz group. These conditions will be described in 
detail later; the first two are standard and express the existence of dynamics 
and the ability to build up the theory from subspacetimes (as expected for a 
theory of quantum fields). The third uses relative Cauchy 
evolution~\cite{BrFrVe03}, the dynamical response to perturbations in the 
background structures, to express invariance with respect to local changes of 
frame. Theorem~\ref{thm:CM} is proved by 
an explicit geometrical construction, using  
smooth deformations of the background frame to connect a given framed spacetime
$\Mb$ to $\act{\Lambda}{\Mb}$, which differs from $\Mb$ only by a rigid Lorentz frame rotation. The timeslice property induces an isomorphism
between $\Af(\Mb)$ and $\Af(\act{\Lambda}{\Mb})$ which depends on the 
deformation only via its homotopy class (as a result of local dynamical Lorentz invariance)
so the covering group $\Sc$ enters in a manner reminiscent of Dirac's belt trick. 
One then shows these individual isomorphisms implement $\Sc$-covariance with trivial $2$-cocycle.
As a consequence, the extended group
is a direct product $E=\Aut(\Af)\times\Sc$, and all fields of the theory transform in 
multiplets under true representations of $\Sc$. Further, the possibility of noninteger spin can be related to the structure of the centre of the global gauge group. Thus, a theory of observables alone, with trivial global gauge group, can only admit integer spin; the same is true, for different reasons, of any theory initially defined on the category $\Loc$ of globally hyperbolic spacetimes.  

We have mentioned that Theorem~\ref{thm:CM} drops some crucial assumptions of the CM theorem. 
For example, the CM theorem requires interaction because some free Minkowski theories have symmetries that mix fields of different spin. Theorem~\ref{thm:CM} replaces this  by 
the assumption that the theory can be formulated in all spacetimes in a locally covariant fashion and that the symmetries under discussion are present in general
spacetimes. To illustrate the point, consider free scalar and Proca fields $\phi$ and $A$ with equal nonzero mass in $n=4$ Minkowski space. The current $j_{ab}=\phi\stackrel{\leftrightarrow}{\partial}_a A_b$ is conserved 
on-shell and generates a group action that mixes $\phi$ 
and $A$ in a nonlocal fashion~\cite[\S 5]{Lopus:1971}. However, this symmetry
does not extend to curved spacetimes\footnote{Replacing $\partial_a$ by covariant derivatives, $\nabla^a j_{ab} = -\phi R_{bc}A^c$ on-shell, for example; in general there is no conserved rank-$2$ combination of $\phi$ and $A$ and their derivatives.} and so there is no contradiction with Theorem~\ref{thm:CM}:
from a curved spacetime perspective, this higher spin symmetry is a quirk of the 
vacuum representation of the Minkowski theory. Similar remarks apply to  
factorizing models in $n=2$ Minkowski space that evade the CM theorem~\cite{Parke:1980}. Further comments and extensions are discussed in section~\ref{sec:conclusion}. 


\section{$G$-covariance}\label{sec:WGA}

\subsection{Motivating examples}\label{sec:LocFLoc} 

Three  categories of spacetimes will be needed: $\Loc$, $\FLoc$ and $\SpinLoc$. 
$\Loc$ is the category of oriented globally hyperbolic spacetimes~\cite{BrFrVe03}
with objects $\Mb=(\Mc,g,\ogth,\tgth)$ comprising a smooth paracompact manifold $\Mc$
of fixed dimension $n\ge 2$ and at most finitely many components, 
a smooth Lorentzian metric $g$ on $\Mc$ with signature $+-\cdots -$,
and an orientation $\ogth$ and time-orientation $\tgth$ represented
as equivalence classes of nonvanishing $n$-forms or time-like $1$-forms. 
It is required that $\Mb$ be globally hyperbolic: every $J_\Mb^+(p)\cap J^-_\Mb(q)$
is compact ($p,q\in\Mb$) and there are no closed timelike curves; equivalently $\Mb$ has Cauchy surfaces. Morphisms in $\Loc$ are smooth isometric embeddings, preserving the orientation and time-orientation, and with causally convex image; thus all causal relations between points in the image of a morphism are already present in its domain.  

$\FLoc$ is the category of framed globally hyperbolic spacetimes,\footnote{See \cite{Few_Regensburg:2015,Fewster:MG2015}; a related
category appears in \cite[Ch.~6]{Ferguson_PhD}.} the objects of which are all pairs $\Mbb=(\Mc,e)$, where $\Mc$ is a smooth $n$-dimensional manifold with smooth global coframe $e=(e^\nu)_{\nu=0}^{n-1}$ such that
\[
\FfL(\Mc,e):= (\Mc, \eta_{\mu\nu} e^\mu \otimes e^\nu, [e^0\wedge\cdots\wedge e^{n-1}],[e^0])
\]
defines an object of $\Loc$. Here $\eta_{\mu\nu} e^\mu \otimes e^\nu$ is the 
\emph{$e$-metric}, where $\eta=\text{diag}(+1,-1,\cmpbreak\ldots,-1)$, $[e^0]$ is the
equivalence class of nonvanishing $e$-timelike covector fields containing $e^0$, and $[e^0\wedge\cdots\wedge e^{n-1}]$ is the equivalence class of nonvanishing $n$-forms containing $e^0\wedge\cdots \wedge e^{n-1}$. Thus we
form the spacetime metric and (time-)orientation from the coframe
and require the resulting structure to be globally hyperbolic.  
A morphism $\psi:(\Mc,e)\to (\Mc',e')$ in $\FLoc$ is determined by a
smooth map $\psi:\Mc\to\Mc'$ that induces a $\Loc$-morphism $\FfL(\Mc,e)\to \FfL(\Mc',e')$ and obeys $\psi^*e' = e$. In this way, 
$\FfL$ is promoted to a functor $\FfL:\FLoc\to\Loc$.

Finally, $\SpinLoc$ is the category of globally hyperbolic spacetimes with spin structure, restricting to those for which the spin bundle is trivial (which includes all orientable globally hyperbolic
spacetimes in $n=4$ dimensions~\cite{Isham_spinor:1978}).
Let $\Sc$ be the universal cover of the restricted Lorentz group $\Lc_0=\SO_0(1,n-1)$,
with covering homomorphism $\pi:\Sc\to\Lc_0$.  In brief,\footnote{
The presentation here is streamlined and will be described in detail elsewhere~\cite{Few_spinstats}.} the objects of $\SpinLoc$ are exactly those of $\FLoc$, but a $\SpinLoc$ morphism from $(\Mc,e)$ to $(\Mc',e')$ is a pair $(\psi,\Xi)$ where the $\Loc$-morphism $\psi:\FfL(\Mc,e)\to\FfL(\Mc',e')$ and map $\Xi\in C^\infty(\Mc,\Sc)$ obey 
$\psi^*e' = \pi(\Xi)e$. Composition of morphisms is given by 
$(\psi',\Xi')\circ (\psi,\Xi)= (\psi'\circ\psi,(\psi^*\Xi')\Xi)$, 
where 
$((\psi^*\Xi')\Xi)(p)=\Xi'(\psi(p))\Xi(p)$. 

There is a functor $\FfS:\FLoc\to\SpinLoc$ given by $\FfS(\Mbb)=\Mbb$,
$\FfS(\psi)=(\psi,1)$, and a functor $\Uf:\SpinLoc\to\Loc$ given by $\Uf(\Mbb)=
\FfL(\Mbb)$, $\Uf(\psi,\Xi)=\psi$, with composition $\Uf\circ \FfS=\FfL$. 
Therefore any theory $\Af$ on $\Loc$ induces theories $\Af\circ\Uf$ on $\SpinLoc$ 
and $\Af\circ\FfL$ on $\FLoc$, while any theory $\Bf$ on $\SpinLoc$
(e.g., the Dirac field~\cite{Sanders_dirac:2010}) induces a theory $\Bf\circ\FfS$ on $\FLoc$.  

The category $\FLoc$ has a number of advantages: it is an operationally motivated arena for curved spacetime physics in which measurements are made with respect to a system of rods and clocks. Unlike $\Loc$, it admits theories of both integer and noninteger spin; unlike $\SpinLoc$, the objects and morphisms are given entirely in terms of observable structures. 

All three categories admit physically relevant group actions:
\begin{example}\label{ex:scaling}  
The multiplicative group $\RR^+$ acts on $\Loc$ by rigid metric scaling:
for each $\lambda\in\RR^+$,  there is a functor $\Rf(\lambda):\Loc\to\Loc$ defined on objects by 
\[
\Rf(\lambda)(\Mc,g,\ogth,\tgth) = (\Mc,\lambda^2 g,\ogth,\tgth)
\]
and so that $\Rf(\lambda)(\psi)$ has the same underlying map of manifolds as $\psi$
for any morphism $\psi$ of $\Loc$. The length of a curve in $\Rf(\lambda)(\Mb)$ 
is $\lambda$ times its length in $\Mb$; alternatively, one may think of $\Rf(\lambda)(\Mb)$
as a version of $\Mb$ in which the fundamental unit of length has been divided by $\lambda$. Given a theory $\Af:\Loc\to\Phys$
we obtain a new theory $\Af\circ\Rf(\lambda)$ for each $\lambda\in\RR^+$; the theory is (rigidly) scale covariant if all these theories are equivalent, i.e., naturally isomorphic functors ---
see section~\ref{sec:scaling} for a specific example. Of course
scaling acts in similar ways on both $\FLoc$ and $\SpinLoc$.  
\end{example}
\begin{example}\label{ex:Lorentz}
The Lorentz group $\Lc$ acts functorially on $\FLoc$ by $\Tf(\Lambda)(\Mc,e) = (\Mc,\Lambda e)$,  where $(\Lambda e)^\mu=\Lambda^\mu_{\phantom{\mu}\nu}e^\nu$ is the Lorentz-transformed coframe;
the action of $\Tf(\Lambda)$ on morphisms is defined so as to preserve the underlying map of manifolds. It is clear that $\Tf(\Lambda'\Lambda)=\Tf(\Lambda')\circ\Tf(\Lambda)$. In the present paper we only
consider the action of the restricted Lorentz group $\Lc_0$ (i.e., the identity component of $\Lc$) for which $\FfL(\Tf(\Lambda))$ is the
identity; the discrete transformations will be discussed elsewhere. 
A theory $\Af:\FLoc\to\Phys$ is (rigidly) Lorentz covariant if 
$\Af$ and $\act{\Lambda}{\Af}:=\Af\circ\Tf(\Lambda)$ are equivalent for all $\Lambda\in\Lc_0$.  
\end{example}
\begin{example}\label{ex:spin}
The universal cover $\Sc$ of $\Lc_0$ acts on $\FLoc$ by means of $\Tf\circ\pi$.
It also acts on $\SpinLoc$, by means of functors $\Sc(S)$ agreeing with $(\Tf\circ\pi)(S)$ on objects and giving $\Sc(S)(\psi,\Xi)=(\psi,S\Xi S^{-1})$ on morphisms. All theories
$\Af:\SpinLoc\to\Phys$ are $\Sc$-covariant via $\Sf$: to each $S\in\Sc$ there is a
natural isomorphism $\eta(S):\Af\nto\act{S}{\Af}$ with components $\eta(S)_\Mbb=\Af(\id_{\FfL(\Mbb)},S)$,
as shown by the calculation
\[
\eta(S)_{\Mbb'}\Af(\psi,\Xi)=\Af(\psi,S\Xi) = \Af(\psi,S\Xi S^{-1})\Af(\id_{\FfL(\Mbb)},S)=
\act{S}{\Af}(\psi,\Xi)\eta_\Mbb(S)
\]
for any $(\psi,\Xi):\Mbb\to\Mbb'$. The corresponding $2$-cocycle is trivial (see below).
\end{example}

\subsection{General analysis} The examples above motivate the
study of the following situation. Let $G$ be any group and suppose there is a  homomorphism $\Tf:G\to\Aut(\Ct)$, where $\Ct$ is a category and $\Aut(\Ct)$ is the group of invertible functors from $\Ct$ to itself.  Clearly $\Tf(g)$ has inverse $\Tf(g^{-1})$ and so every 
morphism of $\Ct$ is contained
in the image of each $\Tf(g)$. For brevity, we will often write the action of $\Tf(g)$
on objects $C$ and morphisms $\gamma$ of $\Ct$ by 
$\act{g}{C}:=\Tf(g)(C)$, and
$\act{g}{\gamma}:=\Tf(g)(\gamma)$.

\begin{definition} \label{def:weakaction}
A functor $\Af:\Ct\to\Ct'$ is \emph{$G$-covariant\footnote{There is an
unhappy collision of terminology: $\Af$ is a covariant functor in the usual category theory sense; $G$-covariance is an additional and somewhat different property.} 
(via $\Tf$)} if 
all the functors $\act{g}{\Af}=\Af\circ\Tf(g)$ are naturally isomorphic;
any family $\eta(g):\Af\nto\act{g}{\Af}$ ($g\in G$) of natural isomorphisms 
with $\eta(1)=\id_\Af$ is an \emph{implementation} of the $G$-covariance. 
\end{definition} 
Here $\Ct'$ is any category. It will be shown that
all implementations of a $G$-covariance are equivalent in the sense of nonabelian cohomology, and correspond to a uniquely determined element of the second cohomology set $H^2(G,\Aut(\Af))$. 
 
Let us briefly recall that if $G$ and $A$ are (not necessarily abelian) groups then a $2$-cochain of $G$ with coefficients in $A$ is a pair $(\xi,\phi)$ consisting of maps $\xi:G\times G\to A$ and $\phi:G\to\Aut(A)$; 
$(\xi,\phi)$ is a $2$-cocycle if  
\begin{align}
\phi(g')\phi(g)\phi(g'g)^{-1} &= \ad(\xi(g',g)) & (g',g\in G),\label{eq:cocycle_phi}\\
\xi(g'',g')\xi(g''g',g) &=\phi(g'')(\xi(g',g)) \xi(g'',g'g)   & (g'',g',g\in G) \label{eq:cocycle_xi}
\end{align} 
and the set of such $2$-cocycles is denoted $Z^2(G,A)$. Two $2$-cocycles $(\xi,\phi),(\xit,\phit)\in Z^2(G,A)$ are cohomologous precisely if there is a map $\zeta:G\to A$ 
such that 
\begin{equation}\label{eq:cohom_phixi}
\phit(g)=\ad(\zeta(g))\circ\phi(g) \quad\text{and}\quad 
\xit(g',g) = \zeta(g')\phi(g')(\zeta(g))\xi(g',g)\zeta(g'g)^{-1} 
\end{equation}
for all $g',g\in G$. The corresponding equivalence classes form the cohomology set $H^2(G,A)$, with the class of the trivial $2$-cocycle
$(1_A,\id_A)$ as a distinguished element making $H^2(G,A)$
a pointed set. Here $1_A(g',g)=1\in A$ for all $g',g\in G$. Cocycles of the form $(1_A,\phi)$, where $\phi$ is (necessarily) a homomorphism
are called \emph{neutral}, as are the corresponding cohomology classes.
A $2$-cocycle $(\xi,\phi)$ is \emph{normalized} if $\phi(1)=1$ and $\xi(g,1)=\xi(1,g)=1$ for
all $g\in G$. 

With these definitions established, our first result is:
\begin{theorem}\label{thm:cocycle}
Any implementation $\eta$ of a $G$-covariance of $\Af:\Ct\to\Ct'$ determines a normalized $2$-cocycle
$(\xi,\phi)\in Z^2(G,\Aut(\Af))$ given by
\begin{align}
\xi(g',g)_{\act{g'g}{C}}&=\eta(g')_{\act{g}{C}}\eta(g)_C\eta(g'g)^{-1}_C
& ( g',g\in G, C\in\Ct) \label{eq:xigg}\\
\phi(g)(\alpha)_{\act{g}{C}}&=\eta(g)_C\alpha_C\eta(g)^{-1}_C & (\alpha\in\Aut(\Af), g\in G, C\in\Ct).\label{eq:phig}
\end{align}  
\end{theorem}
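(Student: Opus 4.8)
The plan is a direct verification; the one genuine subtlety is that \eqref{eq:xigg} and \eqref{eq:phig} specify the natural transformations $\xi(g',g)$ and $\phi(g)(\alpha)$ only through their components on objects of the form $\act{g'g}{C}$, respectively $\act{g}{C}$. Because $\Tf$ takes values in $\Aut(\Ct)$, each $\Tf(g)$ is a bijection on objects and on morphisms, so every object and morphism of $\Ct$ does arise in this way, and the formulas therefore determine families of morphisms indexed by all of $\Ct$. Thus the first step is to show these families are natural and hence genuinely lie in $\Aut(\Af)$; after that, only the two cocycle identities and normalization remain, and these are pure bookkeeping.

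For the (co)domains, writing the $C$-component of $\eta(g):\Af\nto\act{g}{\Af}$ as $\eta(g)_C:\Af(C)\to\Af(\act{g}{C})$, the right-hand side of \eqref{eq:xigg} is the composite $\Af(\act{g'g}{C})\to\Af(C)\to\Af(\act{g}{C})\to\Af(\act{g'g}{C})$, so $\xi(g',g)_{\act{g'g}{C}}$ is an automorphism of $\Af(\act{g'g}{C})$, and likewise $\phi(g)(\alpha)_{\act{g}{C}}$ is an automorphism of $\Af(\act{g}{C})$. For naturality of $\xi(g',g)$ I would take any $\delta:\act{g'g}{C_1}\to\act{g'g}{C_2}$, write it uniquely as $\delta=\act{g'g}{\gamma}$ with $\gamma:C_1\to C_2$, and expand $\xi(g',g)_{\act{g'g}{C_2}}\Af(\delta)$; applying in turn naturality of $\eta(g'g)$, then of $\eta(g)$, then of $\eta(g')$ (the last at the morphism $\act{g}{\gamma}$) moves $\Af(\delta)$ to the left and produces $\Af(\delta)\,\xi(g',g)_{\act{g'g}{C_1}}$. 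The identical pattern, using naturality of $\eta(g)$ and of $\alpha$, gives naturality of $\phi(g)(\alpha)$; that $\phi(g)$ is an endomorphism of the group $\Aut(\Af)$ is immediate from the formula (insert $\eta(g)_C^{-1}\eta(g)_C$), and its invertibility is a consequence of \eqref{eq:cocycle_phi} once that is proved, so $\phi:G\to\Aut(\Aut(\Af))$.

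The cocycle identities are then verified by evaluating both sides on a generic object of the appropriate form and cancelling. For \eqref{eq:cocycle_phi}, evaluating $\phi(g')(\phi(g)(\alpha))$ and $\xi(g',g)\,\phi(g'g)(\alpha)\,\xi(g',g)^{-1}$ at $\act{g'g}{C}$, and telescoping the $\eta(g'g)_C^{\pm1}$ pairs on the right, both collapse to $\eta(g')_{\act{g}{C}}\eta(g)_C\,\alpha_C\,\eta(g)_C^{-1}\eta(g')_{\act{g}{C}}^{-1}$. For \eqref{eq:cocycle_xi}, I would evaluate all four factors at $\act{g''g'g}{C}$, reading off each component through the identifications $\act{g''g'g}{C}=\act{g''g'}{(\act{g}{C})}=\act{g''}{(\act{g'g}{C})}$; cancelling the adjacent pairs $\eta(g''g')_{\act{g}{C}}^{\pm1}$, $\eta(g'g)_C^{\pm1}$ and $\eta(g'')_{\act{g'g}{C}}^{\pm1}$, both sides become $\eta(g'')_{\act{g'g}{C}}\eta(g')_{\act{g}{C}}\eta(g)_C\,\eta(g''g'g)_C^{-1}$. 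Finally, normalization follows from $\eta(1)=\id_\Af$: then $\phi(1)(\alpha)_C=\alpha_C$ gives $\phi(1)=\id$, while $\xi(g,1)_{\act{g}{C}}=\eta(g)_C\eta(1)_C\eta(g)_C^{-1}$ and $\xi(1,g)_{\act{g}{C}}=\eta(1)_{\act{g}{C}}\eta(g)_C\eta(g)_C^{-1}$ both reduce to the identity. I do not expect any real obstacle here; the only thing needing care is tracking which object each component equation is evaluated at, since the naturality and cocycle computations silently use that $\Tf$ is valued in invertible functors, so that every object of $\Ct$ is $\act{h}{C}$ for whichever $h\in G$ one needs.
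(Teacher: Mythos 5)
Your proof is correct and follows essentially the same route as the paper's: a direct verification that \eqref{eq:xigg} and \eqref{eq:phig} define natural transformations (using that each $\Tf(g)$ is invertible, so every object and morphism arises as $\act{g}{C}$ resp.\ $\act{g}{\gamma}$), followed by componentwise telescoping to obtain the cocycle identities and normalization. The only cosmetic difference is that you deduce invertibility of $\phi(g)$ from \eqref{eq:cocycle_phi} after the fact, whereas one can observe directly that $\beta\mapsto\eta(g)^{-1}_C\beta_{\act{g}{C}}\eta(g)_C$ provides its inverse; the paper leaves this point implicit.
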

\begin{proof}
Eqs.~\eqref{eq:xigg} and \eqref{eq:phig} are easily seen to define automorphisms $\xi(g',g)_C$ and $\phi(g)(\alpha)_C$ of $\Af(C)$ for every $C\in\Ct$ by the properties of $\Tf(g)$ described above. The rest of 
the proof is broken into several calculations.  

{\noindent\em Naturality and automorphism properties of $\phi$:} Suppose that $\gamma:C\to C'$. Then
\begin{align*}
\Af(\act{g}{\gamma})\phi(g)(\alpha)_{\act{g}{C}} &=
\eta(g)_{C'} \Af(\gamma)\alpha_C\eta(g)_C^{-1} =
\eta(g)_{C'} \alpha_{C'}\Af(\gamma)\eta(g)_C^{-1} \\
&=
\eta(g)_{C'} \alpha_{C'}\eta(g)_{C'}^{-1}\Af(\act{g}{\gamma})
= \phi(g)(\alpha)_{\act{g}{C'}}\Af(\act{g}{\gamma})
\end{align*}
so each $\phi(g)(\alpha)\in\Aut(\Af)$. It is clear from \eqref{eq:phig}
that $\phi(g)(\alpha\beta)=\phi(g)(\alpha)\phi(g)(\beta)$ so
$\phi:g\to\phi(g)$ is a map from $G$ to $\Aut(\Aut(\Af))$. 

{\noindent\em Naturality of $\xi(g',g)$:} This is proved by calculating, for 
arbitrary $\gamma:C\to C'$, 
\begin{align*}
\xi(g',g)_{\act{g'g}{C'}}\Af(\act{g'g}{\gamma})
&=\eta(g')_{\act{g}{C'}}\eta(g)_{C'}\eta(g'g)^{-1}_{C'}\Af(\act{g'g}{\gamma}) 
\\&
=\eta(g')_{\act{g}{C'}}\eta(g)_{C'}\Af(\gamma)\eta(g'g)^{-1}_{C} \\
&=\eta(g')_{\act{g}{C'}}\Af(\act{g}{\gamma})\eta(g)_{C}\eta(g'g)^{-1}_{C}\\
&=\Af(\act{g'g}{\gamma}) \eta(g')_{\act{g}{C}}\eta(g)_{C}\eta(g'g)^{-1}_{C} \\
&= \Af(\act{g'g}{\gamma}) \xi(g',g)_{\act{g'g}{C}}.
\end{align*}

{\noindent\em Cocycle property:} Normalization of $\phi$
is obvious from \eqref{eq:phig}; $\xi(g,1)=\xi(1,g)=1$  
is immediate using $\eta(1)=\id_{\Af}$. Let $\alpha\in\Aut(\Af)$
and compute
\begin{align*}
\ad(\xi(g',g))(\alpha)_{\act{{g'g}}{C}} &= \xi(g',g)_{\act{{g'g}}{C}}\alpha_{\act{{g'g}}{C}}\xi(g',g)_{\act{{g'g}}{C}}^{-1} \\
&=
\eta(g')_{\act{g}{C}}\eta(g)_C \eta(g'g)_C^{-1}\alpha_{\act{{g'g}}{C}}
\eta(g'g)_C\eta(g)_C ^{-1}\eta(g')_{\act{g}{C}}^{-1}\\
&= \phi(g')(\phi(g)(\phi(g'g)^{-1}(\alpha)))_{\act{{g'g}}{C}}
\end{align*}
for any $g',g\in G$ and $C\in\Ct$,
so $\ad\xi(g',g)=\phi(g')\phi(g)\phi(g'g)^{-1}$ as required by 
\eqref{eq:cocycle_phi}. Finally, 
let $g'',g',g\in G$ and $C\in\Ct$ be arbitrary, then
\begin{align*}
(\xi(g'',g')\xi(g''g',g))_{\act{{g''g'g}}{C}} &=
\eta(g'')_{\act{{g'g}}{C}} \eta(g')_{\act{g}{C}}   \eta(g)_C \eta(g''g'g)^{-1}_C
\\
&=
\eta(g'')_{\act{{g'g}}{C}}  \xi(g',g)_{\act{g'g}{C}}
\eta(g'')^{-1}_{\act{g'g}{C}} 
\xi(g'',g'g)_{\act{{g''g'g}}{C}} \\
&=\phi(g'')(\xi(g',g))_{\act{g''g'g}{C}}\xi(g'',g'g)_{\act{{g''g'g}}{C}}
\end{align*}
so the cocycle condition \eqref{eq:cocycle_xi} also holds. 
Thus $(\xi,\phi)\in Z^2(G,A)$. \cmpqed
\end{proof}
For example, the $2$-cocycle mentioned in Example~\ref{ex:spin}
is trivial, because $\eta(S'S)_\Mbb\cmpbreak=\arxivbreak \Af(\id_{\FfL(\act{S}{\Mbb})},S')\Af(\id_{\FfL(\Mbb)},S)
=\eta(S')_{\act{S}{\Mbb}}\eta(S)_\Mbb$, and
$\eta(S)_\Mbb\alpha_\Mbb=\alpha_{\act{S}{\Mbb}}\eta(S)_\Mbb$ by naturality of $\alpha\in\Aut(\Af)$ and the definition of $\eta(S)$.

The $2$-cocycle given by Theorem~\ref{thm:cocycle} is
intrinsic to $\Af$.
\begin{theorem} \label{thm:cohom}
If $\Af$ is $G$-covariant, the $2$-cocycles
of its implementations form a distinguished cohomology class $[\Af]_G\in H^2(G,\Aut(\Af))$. 
\end{theorem}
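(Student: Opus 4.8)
The plan is to show that any two implementations $\eta$ and $\widetilde{\eta}$ of the $G$-covariance of $\Af$ yield cohomologous $2$-cocycles under the prescription of Theorem~\ref{thm:cocycle}. Since at least one implementation exists (one may always take $\eta(1)=\id_\Af$, because $\Tf(1)=\id_\Ct$ forces $\act{1}{\Af}=\Af$), it follows that the $2$-cocycles of all implementations lie in a single, well-defined cohomology class, and we take this class as $[\Af]_G\in H^2(G,\Aut(\Af))$.

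The heart of the argument is the construction of the ``gauge transformation'' $\zeta$ relating $\eta$ and $\widetilde{\eta}$. For each $g\in G$ the composite $\widetilde{\eta}(g)\circ\eta(g)^{-1}$ is a natural automorphism of $\act{g}{\Af}$; transporting it along the invertible functor $\Tf(g)$ gives an element $\zeta(g)\in\Aut(\Af)$, characterised on components by $\zeta(g)_{\act{g}{C}}=\widetilde{\eta}(g)_C\,\eta(g)_C^{-1}$ for all $C\in\Ct$ (every object of $\Ct$ has the form $\act{g}{C}$ for a unique $C$). I would first check that $\zeta(g)$ really is natural: writing an arbitrary morphism of $\Ct$ as $\act{g}{\gamma}$ for a unique $\Ct$-morphism $\gamma$ --- which is precisely where invertibility of $\Tf(g)$ as a functor enters --- the naturality square for $\zeta(g)$ is obtained by pasting together those of $\eta(g)$ and $\widetilde{\eta}(g)$; moreover $\zeta(1)=\id_\Af$. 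It is worth flagging that the more obvious candidate $\zeta(g)_C=\widetilde{\eta}(g)_C^{-1}\eta(g)_C$ is also a natural automorphism of $\Af$, but it produces a coboundary of the ``wrong handedness'' (of the shape $\phi(g)\circ\ad(\cdot)$ instead of $\ad(\cdot)\circ\phi(g)$); measuring the discrepancy between $\eta$ and $\widetilde{\eta}$ over the target object $\act{g}{C}$, not over $C$, is what makes \eqref{eq:cohom_phixi} hold on the nose.

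With $\zeta$ in hand, the verification of \eqref{eq:cohom_phixi} is a direct substitution using the identity $\widetilde{\eta}(g)_C=\zeta(g)_{\act{g}{C}}\,\eta(g)_C$ together with the defining formulas \eqref{eq:xigg} and \eqref{eq:phig}. For $\phi$ one computes, for $\alpha\in\Aut(\Af)$,
\[
\phit(g)(\alpha)_{\act{g}{C}}=\widetilde{\eta}(g)_C\,\alpha_C\,\widetilde{\eta}(g)_C^{-1}
=\zeta(g)_{\act{g}{C}}\,\phi(g)(\alpha)_{\act{g}{C}}\,\zeta(g)_{\act{g}{C}}^{-1},
\]
so $\phit(g)=\ad(\zeta(g))\circ\phi(g)$. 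For $\xi$ one substitutes the corresponding expressions for $\widetilde{\eta}(g')_{\act{g}{C}}$, $\widetilde{\eta}(g)_C$ and $\widetilde{\eta}(g'g)_C^{-1}$ into \eqref{eq:xigg} applied to $\widetilde{\eta}$, and then commutes $\zeta(g)_{\act{g}{C}}$ to the left past $\eta(g')_{\act{g}{C}}$ --- the one genuinely non-formal move, which by \eqref{eq:phig} converts it into $\phi(g')(\zeta(g))_{\act{g'g}{C}}$ --- leaving the block $\eta(g')_{\act{g}{C}}\eta(g)_C\eta(g'g)_C^{-1}=\xi(g',g)_{\act{g'g}{C}}$ in the middle. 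The outcome is
\[
\xit(g',g)_{\act{g'g}{C}}=\zeta(g')_{\act{g'g}{C}}\,\phi(g')(\zeta(g))_{\act{g'g}{C}}\,\xi(g',g)_{\act{g'g}{C}}\,\zeta(g'g)^{-1}_{\act{g'g}{C}},
\]
which is exactly the second relation of \eqref{eq:cohom_phixi} evaluated at $\act{g'g}{C}$; as every object of $\Ct$ has this form, the identity holds in $\Aut(\Af)$.

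The only real obstacle I anticipate is organisational rather than conceptual: in the computation for $\xi$ one must carry every component back to the common object $\act{g'g}{C}$ (using $\act{g'}{(\act{g}{C})}=\act{g'g}{C}$ at the level of both objects and morphisms), and the commutation of $\zeta(g)$ past $\eta(g')$ must be handled carefully. Everything else is formal, and the step from ``any two implementations give cohomologous cocycles'' to ``the cocycles of all implementations constitute the distinguished class $[\Af]_G$'' is immediate.
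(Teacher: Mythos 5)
Your construction of $\zeta(g)_{\act{g}{C}}=\widetilde{\eta}(g)_C\,\eta(g)_C^{-1}$, the naturality check, and the substitution into \eqref{eq:cohom_phixi} all match the paper's argument, and the observation about the ``handedness'' of the candidate $\zeta$ correctly identifies why the components are indexed at $\act{g}{C}$ rather than at $C$.

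However, your proof establishes only half of what the paper's proof (and, on a fair reading, the statement ``the $2$-cocycles of its implementations \emph{form} a distinguished cohomology class'') requires. You show that any two implementations yield cohomologous cocycles, i.e.\ that the cocycles of implementations are \emph{contained in} a single class. The paper also proves the converse: every cocycle cohomologous to one arising from an implementation itself arises from some implementation. Concretely, given an implementation $\eta$ with cocycle $(\xi,\phi)$ and any cohomologous $(\xit,\phit)$, there is a map $\zeta:G\to\Aut(\Af)$ realising the equivalence in \eqref{eq:cohom_phixi}, and one checks that $\ett(g)_C:=\zeta(g)_{\act{g}{C}}\eta(g)_C$ is again an implementation whose cocycle is precisely $(\xit,\phit)$. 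This direction is what justifies saying that the collection of cocycles \emph{is} the class rather than merely a subset of it, and it is not a corollary of your one-sided argument --- it needs the extra (short) verification that $\ett$ built from an arbitrary $\zeta$ is still natural. Your proof would be complete if you added that step.

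A smaller point: the parenthetical ``one may always take $\eta(1)=\id_\Af$'' is not by itself why an implementation exists; existence of the natural isomorphisms $\eta(g)$ for general $g$ is precisely the content of the $G$-covariance hypothesis, and the normalisation $\eta(1)=\id_\Af$ is the only thing one has to arrange by hand (which is possible since $\act{1}{\Af}=\Af$).
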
 
\begin{proof} We show that all implementations
induce cohomologous $2$-cocycles, and all elements of
the corresponding cohomology class arise from implementations.

First, let $g:\mapsto\eta(g)$ be an implementation, let $\zeta:G\to\Aut(\Af)$ be any map and set $\ett(g)_C=\zeta(g)_{\act{g}{C}}\eta(g)_C$. Then
$g\mapsto \ett(g)$ also implements the $G$-covariance,
and  $\eta$ and $\ett$ define cohomologous $2$-cocycles. To see this, note that
each $\ett(g)_C:\Af(C)\to\Af(\act{g}{C})$ is certainly an isomorphism. If $\gamma:C\to C'$ then
\begin{align*}
\ett(g)_{C'}\Af(\gamma) &= \zeta(g)_{\act{g}{C'}}\eta(g)_{C'} \Af(\gamma) 
= \zeta(g)_{\act{g}{C'}}\Af(\act{g}{\gamma}) \eta(g)_C 
=
\Af(\act{g}{\gamma})  \zeta(g)_{\act{g}{C}}  \eta(g)_C \\&=
\Af(\act{g}{\gamma})\ett(g)_{C},
\end{align*}
which establishes naturality, so $g\mapsto \ett(g)$ implements the $G$-covariance. 
The corresponding $2$-cocycle $(\xit,\phit)$ is computed as follows:
\begin{align*}
\phit(g)(\alpha)_{\act{g}{C}} &= 
\zeta(g)_{\act{g}{C}}\eta(g)_C \alpha_C \eta(g)^{-1}_C \zeta(g)^{-1}_{\act{g}{C}} = (\ad \zeta(g))(\phi(g)(\alpha))_{\act{g}{C}},
\end{align*}
while
\begin{align*}
\xit(g',g)_{\act{g'g}{C}} &=
\zeta(g')_{\act{g'g}{C}} \eta(g')_{\act{g}{C}} \zeta(g)_{\act{g}{C}}
\eta(g)_C \eta(g'g)_C^{-1}\zeta(g'g)_{\act{g'g}{C}}^{-1} \\
&=
\zeta(g')_{\act{g'g}{C}} \phi(g')(\zeta(g))_{\act{g'g}{C}} 
\eta(g')_{\act{g}{C}}  
\eta(g)_C \eta(g'g)_C^{-1}\zeta(g'g)_{\act{g'g}{C}}^{-1} \\
&=(\zeta(g')\phi(g')(\zeta(g))\xi(g',g)\zeta(g'g)^{-1})_{\act{g'g}{C}}.
\end{align*}
The conditions in \eqref{eq:cohom_phixi}  
are met so the $2$-cocycles are cohomologous.

To prove the result, we suppose that implementations $\eta$ and
$\ett$ have been given. If the morphisms 
$\zeta(g)_{\act{g}{C}}:=
\ett(g)_C\eta(g)^{-1}_C$ form the components of an automorphism
$\zeta(g)\in\Aut(\Af)$ for each $g$, then the first part of the proof
demonstrates that the implementations induce the same cohomology
class. As the maps $\zeta(g)_{\act{g}{C}}$ are clearly isomorphisms
it remains to check naturality: if $\gamma:C\to C'$,  then
\begin{align*}
\zeta(g)_{\act{g}{C'}}\Af(\act{g}{\gamma}) &= 
\ett(g)_{C'}\eta(g)^{-1}_{C'} \Af(\act{g}{\gamma}) = 
\ett(g)_{C'}\Af(\gamma) \eta(g)^{-1}_{C}  = 
\Af(\act{g}{\gamma})  \ett(g)_{C}\eta(g)^{-1}_{C} \\&=
\Af(\act{g}{\gamma}) \zeta(g)_{\act{g}{C}},
\end{align*}
which establishes naturality as every morphism is the image of $\Tf(g)$. 
Finally, if $(\xit,\phit)\sim(\xi,\phi)$ then one has
$\zeta:G\to\Aut(\Af)$ obeying \eqref{eq:cohom_phixi}, whereupon
$\ett(g)$, defined using $\zeta$ as above, implements the $G$-covariance with cocycle $(\xit,\phit)$.\cmpqed
\end{proof}
If $[\Af]_G\in H^2(G,\Aut(\Af))$ is trivial, then one may 
choose an implementation 
corresponding to the trivial cocycle $(1_A,\id_A)$. In this case, 
one has
\begin{equation} \label{eq:eta_triv}
\eta(g)_C \alpha_C = \alpha_{\act{g}{C}}\eta(g)_C, \quad
\eta(g'g)_C = \eta(g')_{\act{g}{C}} \eta(g)_C, \qquad (g',g\in G, C\in\Ct).
\end{equation} 

Returning to the general case, 
suppose $\Af:\Ct\to\Ct'$ is $G$-covariant and
choose an implementation $g\mapsto\eta(g)$ with normalized $2$-cocycle $(\xi,\phi)\in Z^2(G,\Aut(\Af))$. The $2$-cocycle induces
a group extension of $G$ by $\Aut(\Af)$, described by
a short exact sequence of group homomorphisms
\begin{equation}\label{eq:shortexact}
1\rightarrow \Aut(\Af) \rightarrow E \stackrel{q}{\rightarrow} G \rightarrow 1,
\end{equation}
where the extension $E=\Aut(\Af)\times G$ as a set, and is equipped with the product
\begin{equation}\label{eq:Eproduct}
(a',g') (a,g)= (a'\phi(g')(a)\xi(g',g),g'g)
\end{equation}
for which $(1,1)$ is the identity. The unlabelled map $\Aut(\Af)\to E$ in \eqref{eq:shortexact} is $a\mapsto (a,1)$, and embeds $\Aut(\Af)$
as a normal subgroup of $E$, while $q(a,g)=g$ and realizes
$G$ as the quotient $G\cong E/\Aut(\Af)$. 
See, e.g., \cite{EilMac:1947b,AzcaIzqu:1995}. The group
extension is determined by the cohomology class $[\Af]_G$
up to a suitable equivalence of extensions. Some familiar 
cases arise as follows: the trivial cocycle gives the direct
product $\Aut(\Af)\times G$; a neutral cocycle $(1,\phi)$ gives the
semidirect product $\Aut(\Af)\rtimes_\phi G$; if $\Aut(\Af)$
is abelian then $(\xi,1)$ gives a central extension. 

A $G$-covariant theory is also covariant under the corresponding
group extension, which almost trivialises the cocycle (one might say that
it is neutralised). 
\begin{theorem} 
If $\Af:\Ct\to\Ct'$ is $G$-covariant via $\Tf$, 
then $\Af$ is $E$-covariant via $\Tf\circ q$, with
a neutral cocycle in $Z^2(E,\Aut(\Af))$.  
\end{theorem}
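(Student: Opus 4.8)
The plan is to exhibit an explicit implementation of the $E$-covariance and read off its $2$-cocycle. First I would observe that $\Tf\circ q:E\to\Aut(\Ct)$ is a homomorphism sending $(a,g)$ to $\Tf(g)$, so that $\act{(a,g)}{C}=\act{g}{C}$ and hence $\act{(a,g)}{\Af}=\act{g}{\Af}$ for all $(a,g)\in E$ and $C\in\Ct$. In particular, the family $(a,g)\mapsto\eta(q(a,g))=\eta(g)$ already provides natural isomorphisms $\Af\nto\act{(a,g)}{\Af}$ taking the value $\eta(1)=\id_\Af$ at the identity of $E$, so $\Af$ is $E$-covariant via $\Tf\circ q$; by Theorem~\ref{thm:cocycle} the corresponding $2$-cocycle is the pullback $(\xi\circ(q\times q),\phi\circ q)$, which is not neutral in general. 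To neutralise it I would reparametrise, setting $\zeta:E\to\Aut(\Af)$ to be $\zeta(a,g)=a$ and defining $\ett(a,g)_C=a_{\act{g}{C}}\,\eta(g)_C$. Exactly as in the first part of the proof of Theorem~\ref{thm:cohom} (now with the reparametrising cochain a function on $E$ rather than $G$), the components $\ett(a,g)_C$ are isomorphisms, are natural because $a$ is a natural transformation and $\eta(g)$ an implementation, and satisfy $\ett(1,1)=\id_\Af$; hence $(a,g)\mapsto\ett(a,g)$ is also an implementation of the $E$-covariance.

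Next I would compute the $2$-cocycle $(\hat\xi,\hat\phi)\in Z^2(E,\Aut(\Af))$ of this implementation from \eqref{eq:xigg}--\eqref{eq:phig}, equivalently by feeding $\zeta$ into \eqref{eq:cohom_phixi}. Using $\act{(a',g')(a,g)}{C}=\act{g'g}{C}$ one finds $\hat\phi(a,g)=\ad(a)\circ\phi(g)$, and invoking the definition \eqref{eq:Eproduct} of the product in $E$,
\[
\hat\xi\big((a',g'),(a,g)\big)=a'\,\phi(g')(a)\,\xi(g',g)\,\big(a'\,\phi(g')(a)\,\xi(g',g)\big)^{-1}=1 ,
\]
so $\hat\xi$ is the trivial map $1_{\Aut(\Af)}$. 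It then remains only to check that $\hat\phi$ is a group homomorphism $E\to\Aut(\Aut(\Af))$, which I would do by writing $\hat\phi(a',g')\hat\phi(a,g)=\ad(a')\,\phi(g')\,\ad(a)\,\phi(g)=\ad(a')\,\ad(\phi(g')(a))\,\phi(g')\phi(g)$ and then using the cocycle identity \eqref{eq:cocycle_phi} to replace $\phi(g')\phi(g)$ by $\ad(\xi(g',g))\,\phi(g'g)$; the right-hand side then collapses to $\ad\!\big(a'\,\phi(g')(a)\,\xi(g',g)\big)\,\phi(g'g)=\hat\phi\big((a',g')(a,g)\big)$. Since $\hat\phi(1,1)=\id$, this exhibits $(1_{\Aut(\Af)},\hat\phi)$ as a neutral $2$-cocycle, completing the proof.

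All of this is short and essentially mechanical, so I do not anticipate a serious obstacle; the one point deserving care is the bookkeeping that the reparametrising cochain $\zeta(a,g)=a$ is precisely the one that absorbs the pulled-back cocycle. This is no accident, since the group law \eqref{eq:Eproduct} on $E$ was built from $(\xi,\phi)$ exactly so that $\zeta((a',g')(a,g))$ coincides with the product $\zeta(a',g')\,\phi(g')(\zeta(a,g))\,\xi(g',g)$ appearing on the right of \eqref{eq:cohom_phixi}. One must also keep the two levels of structure apart: $\zeta$ is a function of the element $(a,g)\in E$ itself and not a function on $G$ factored through $q$, which is why the neutralisation succeeds here even though the naive pullback $(a,g)\mapsto\eta(g)$ was not neutral.
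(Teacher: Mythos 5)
Your proposal is correct and takes essentially the same approach as the paper: the implementation $\ett(a,g)_C=a_{\act{g}{C}}\eta(g)_C$ is exactly the $\rho(\alpha,g)$ of the paper's proof sketch, and the resulting cocycle $(1,\hat\phi)$ with $\hat\phi(a,g)=\ad(a)\circ\phi(g)$ agrees with the $(1,\varphi)$ stated there. (Your final check that $\hat\phi$ is a homomorphism is redundant, since \eqref{eq:cocycle_phi} with $\xi\equiv 1$ already forces this, as the paper notes parenthetically in its definition of neutral cocycles.)
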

\begin{proof}[\ifnonCMP Proof (Sketch)\else Sketch\fi]
The $E$-covariance is implemented by $(\alpha,g)\mapsto\rho(\alpha,g)$, 
where $\rho(\alpha,g)_C=\arxivbreak\alpha_{\Tf(g)(C)}\eta(g)_C$. The $2$-cocycle
is $(1,\varphi)$, with $\varphi(\alpha,g) = \ad  \alpha\circ\phi(g).$ \cmpqed
\end{proof}
%
%

\subsection{Multiplets of locally covariant fields for $G$-covariant theories}\label{sec:multiplets} 
Consider a locally covariant QFT given as a functor $\Af:\BkGrnd\to\Alg$, where
$\BkGrnd$ is $(\sf Spin)\Loc$ or $\FLoc$, and $\Alg$ is the category
of unital $*$-algebras and unit-preserving $*$-monomorphisms. 
Let $\Df:\BkGrnd\to\Set$ be the functor assigning to each $C\in\BkGrnd$ the set of smooth complex-valued  compactly supported test functions on the underlying manifold of $C$, and to each morphism $\psi$, the corresponding push-forward $\Df(\psi)=\psi_*$. Let 
$\Vf$ be the forgetful functor $\Vf:\Alg\to\Set$. 
By definition, a \emph{locally covariant quantum field}~\cite{Verch01,Ho&Wa01,BrFrVe03} is a natural transformation
$\Phi:\Df\nto\Vf\circ\Af$ and the set of all such fields $\Fld(\Af)$ forms a unital $*$-algebra in a natural way~\cite{Fewster2007}: e.g.,
$(\mu\Phi+\nu\Psi)_C(f) = \mu\Phi_C(f)+\nu\Psi_C(f)$ and $(\Phi\Psi)_C(f)=\Phi_C(f)\Psi_C(f)$
($\mu,\nu\in\CC,f\in\Df(C)$) define the linear combination and product
of $\Phi,\Psi\in\Fld(\Af)$. The unit field is $\II_C(f) = \II_{\Af(\Mb)}$ for all $f\in\Df(C)$.\footnote{Using $\Set$
allows for fields depending nonlinearly on the test function. Using the category of vector spaces instead, one obtains a vector space (rather than $*$-algebra) of linear fields.} 

An advantage of theories defined on $\FLoc$ is that one need only consider single-component fields in $\Fld(\Af)$, whereas on $(\sf Spin)\Loc$ one requires a different functor $\Df$ for each tensorial field type. For example, a Proca field theory on $\Loc$ describes a field
smeared against test one-forms, $A_\Mb(\omega)$, whereas the same theory pulled back
to $\FLoc$ has available four single-component fields $A^\mu$, given by $A^\mu_{(\Mc,e)}(f):=A_{\FfL(\Mc,e)}(f e^\mu)$. The same can be done for spinor fields on spacetimes
in $\SpinLoc$, because the spin bundle is trivial. Fully worked out examples will be given elsewhere~\cite{Few_spinstats}. Of course, it is then necessary to discern some structure on the fields, which provides a useful application of $G$-covariance.

Now suppose that a group $G$ acts functorially on $\BkGrnd$, and that both $\Af$
and $\Df$ are $G$-covariant. For simplicity we assume that $G$-covariance of $\Df$
is implemented by a family $\zeta(g)$ with trivial cocycle in $Z^2(G,\Aut(\Df))$,  i.e., 
\[
\zeta(g'g)_C=\zeta(g')_{\act{g}{C}}\zeta(g)_C, \quad\text{and}\quad
\zeta(g)_C\alpha_C=\alpha_{\act{g}{C}}\zeta(g)_C
\]
for all $g',g\in G$, $\alpha\in\Aut(\Df)$ and
$C\in\BkGrnd$. In this situation, the fields in $\Fld(\Af)$ transform under both $G$ and $\Aut(\Af)$.
\begin{theorem} \label{thm:grouprep}
Suppose $\Af:\BkGrnd\to\Alg$ and $\Df:\BkGrnd\to\Set$ are $G$-covariant and
that the $G$-covariance of $\Af$ is implemented by $\eta$, with $2$-cocycle $(\xi,\phi)$, while that of $\Df$ is implemented by $\zeta$, with trivial cocycle. Let $\Phi\in\Fld(\Af)$. Then for each $\alpha\in\Aut(\Af)$ there is a
transformed field $\alpha\cdot\Phi\in\Fld(\Af)$ defined by
\begin{equation}\label{eq:dotdef}
(\alpha\cdot \Phi)_C = \Vf(\alpha_C)\Phi_C, \qquad (C\in\BkGrnd)
\end{equation}
and for each $g\in G$ there is a transformed field $g\ast\Phi\in\Fld(\Af)$
defined by  
\begin{equation}\label{eq:gstardef}
(g\ast\Phi)_{\act{g}{C}}\zeta(g)_C = \Vf(\eta(g)_C)\Phi_C, \qquad (C\in\BkGrnd).
\end{equation} 
One has $1_{\Aut{\Af}}\cdot \Phi=\Phi = 1_G\ast \Phi$ for all $\Phi\in\Fld(\Af)$. 
The following formulae hold for all $\alpha,\beta\in\Aut(\Af)$,
$g',g\in G$ and $\Phi\in\Fld(\Af)$:
\begin{align}
\alpha\cdot(\beta\cdot\Phi) & = (\alpha\beta)\cdot\Phi \label{eq:dotdot}\\
g\ast(\alpha\cdot\Phi) &= \phi(g)(\alpha)\cdot(g\ast\Phi) \label{eq:stardot} \\ 
g'\ast(g \ast\Phi) &=
\xi(g',g)\cdot\left((g'g)\ast\Phi\right)   .
\label{eq:starcocycle}
\end{align}
$\Fld(\Af)$ carries a true group action $\rho$ of the group extension $E$ of $G$ by $\Aut(\Af)$ determined by 
$(\xi,\phi)$, given by $\rho(\alpha,g)\Phi = \alpha\cdot (g\ast\Phi)$.
\end{theorem}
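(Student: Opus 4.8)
The plan is to derive the final statement as a formal consequence of the transformation laws \eqref{eq:dotdot}--\eqref{eq:starcocycle} together with the product \eqref{eq:Eproduct} on $E$, once two prerequisites are in place: that \eqref{eq:dotdef} and \eqref{eq:gstardef} genuinely define natural transformations $\Df\nto\Vf\circ\Af$, i.e. elements of $\Fld(\Af)$, and that the relations \eqref{eq:dotdot}, \eqref{eq:stardot}, \eqref{eq:starcocycle} hold. So I would organise the argument as (i) well-definedness of $\alpha\cdot\Phi$ and $g\ast\Phi$; (ii) the three transformation laws; (iii) the $E$-action itself.

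For (i), $\alpha\cdot\Phi$ is the easy case: $(\alpha\cdot\Phi)_C=\Vf(\alpha_C)\Phi_C$ post-composes $\Phi_C$ with the underlying map of the $*$-automorphism $\alpha_C$, and naturality in $C$ follows by chaining naturality of $\Phi$ with naturality of $\alpha\in\Aut(\Af)$; \eqref{eq:dotdot} and $1_{\Aut(\Af)}\cdot\Phi=\Phi$ are then immediate. For $g\ast\Phi$ I would first note that each $\zeta(g)_C$ is invertible and that every object of $\BkGrnd$ is uniquely of the form $\act{g}{C}$, since $\Tf(g)$ is an invertible functor, so \eqref{eq:gstardef} determines all components of $g\ast\Phi$ unambiguously; checking that $g\ast\Phi$ is natural is the one place requiring real care — one takes a $\BkGrnd$-morphism, writes it as $\act{g}{\psi}$ with $\psi:C\to C'$, substitutes into \eqref{eq:gstardef} at both ends, and closes the naturality square using naturality of $\Phi$, of $\eta(g):\Af\nto\act{g}{\Af}$, and of $\zeta(g):\Df\nto\act{g}{\Df}$, then cancels the invertible $\zeta(g)_C,\zeta(g)_{C'}$. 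That $1_G\ast\Phi=\Phi$ follows from $\eta(1)=\id_\Af$ and $\zeta(1)=\id_\Df$.

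For (ii), relation \eqref{eq:stardot} comes from inserting $(\alpha\cdot\Phi)_C=\Vf(\alpha_C)\Phi_C$ into \eqref{eq:gstardef} for $g\ast(\alpha\cdot\Phi)$ and commuting $\alpha_C$ past $\eta(g)_C$ via \eqref{eq:phig}, namely $\eta(g)_C\alpha_C=\phi(g)(\alpha)_{\act{g}{C}}\eta(g)_C$; comparing with \eqref{eq:gstardef} for $g\ast\Phi$ and invoking uniqueness of components gives $g\ast(\alpha\cdot\Phi)=\phi(g)(\alpha)\cdot(g\ast\Phi)$. Relation \eqref{eq:starcocycle} is the most computational step: iterating \eqref{eq:gstardef} yields $(g'\ast(g\ast\Phi))_{\act{g'g}{C}}\,\zeta(g')_{\act{g}{C}}\zeta(g)_C=\Vf\bigl(\eta(g')_{\act{g}{C}}\eta(g)_C\bigr)\Phi_C$; I would then rewrite $\eta(g')_{\act{g}{C}}\eta(g)_C=\xi(g',g)_{\act{g'g}{C}}\eta(g'g)_C$ using \eqref{eq:xigg}, collapse $\zeta(g')_{\act{g}{C}}\zeta(g)_C=\zeta(g'g)_C$ using triviality of $\zeta$'s cocycle, and match the result against \eqref{eq:gstardef} for $(g'g)\ast\Phi$ and against \eqref{eq:dotdef} for $\xi(g',g)\cdot(-)$; cancelling the invertible $\zeta(g'g)_C$ then gives \eqref{eq:starcocycle}.

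For (iii), the identity axiom is $\rho(1,1)\Phi=1_{\Aut(\Af)}\cdot(1_G\ast\Phi)=\Phi$, and $\rho(\alpha,g)\Phi\in\Fld(\Af)$ since $\ast$ and $\cdot$ both preserve $\Fld(\Af)$. For composition I compute $\rho(\alpha',g')\bigl(\rho(\alpha,g)\Phi\bigr)=\alpha'\cdot\bigl(g'\ast(\alpha\cdot(g\ast\Phi))\bigr)$; applying \eqref{eq:stardot} pulls $\alpha$ out as $\phi(g')(\alpha)$, applying \eqref{eq:starcocycle} rewrites $g'\ast(g\ast\Phi)$ as $\xi(g',g)\cdot((g'g)\ast\Phi)$, and three uses of \eqref{eq:dotdot} collapse the nested $\cdot$-actions, giving $\bigl(\alpha'\phi(g')(\alpha)\xi(g',g)\bigr)\cdot\bigl((g'g)\ast\Phi\bigr)=\rho\bigl((\alpha',g')(\alpha,g)\bigr)\Phi$ by \eqref{eq:Eproduct}. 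Hence $\rho$ is a strict action of $E$; the word "true" records that $G$ alone acts only up to the cocycle $(\xi,\phi)$ via $\ast$, whereas the extension $E$ absorbs it and acts honestly. The only substantive obstacle in the whole proof is the naturality verification for $g\ast\Phi$, together with the index bookkeeping in \eqref{eq:starcocycle}; everything downstream of those two points is a formal rewrite.
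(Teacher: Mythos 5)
Your proposal is correct and follows essentially the same route as the paper: verify naturality of $g\ast\Phi$ by closing the square with naturality of $\Phi$, $\eta(g)$ and $\zeta(g)$ and then cancelling the invertible $\zeta(g)_C$; derive \eqref{eq:stardot} and \eqref{eq:starcocycle} by inserting the definitions and invoking \eqref{eq:phig}, \eqref{eq:xigg} and triviality of $\zeta$'s cocycle; then assemble $\rho$ via \eqref{eq:Eproduct}. The only cosmetic differences are that the paper delegates the $\alpha\cdot\Phi$ verifications to \cite[\S 3.2]{Fewster:gauge} and writes the key computations starting from the right-hand sides before stripping off $\zeta$, but the ingredients and structure are identical to yours.
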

\begin{proof} The statements concerning the action of $\Aut(\Af)$ are proved in~\cite[\S 3.2]{Fewster:gauge}. Turning to the action of $G$, 
we note that \eqref{eq:gstardef} defines a transformed field because
\begin{align*}
\Vf(\Af(\act{g}{\gamma})) (g\ast\Phi)_{\act{g}{C}}\zeta(g)_C &= \Vf(\Af(\act{g}{\gamma})  \eta(g)_C)\Phi_C  =\Vf(\eta(g)_{C'} \Af(\gamma))\Phi_C \\
&= \Vf(\eta(g)_{C'}) \Phi_{C'} \Df(\gamma) =(g\ast\Phi)_{\act{g}{C'}} \zeta(g)_{C'}\Df(\gamma)\\
& =(g\ast\Phi)_{\act{g}{C'}} \Df(\act{g}{\gamma}) \zeta(g)_C,
\end{align*}
for all $\gamma:C\to C'$ in $\BkGrnd$.
As the $\zeta(g)$ are isomorphisms, $g\ast\Phi\in\Fld(\Af)$. 
To prove \eqref{eq:stardot}, suppose $g\in G$ and $\alpha\in \Aut(\Af)$. Calculating
\begin{align*}
\Vf(\phi(g)(\alpha)_{\act{g}{C}})(g\ast\Phi)_{\act{g}{C}}\zeta(g)_C &= \Vf(\eta(g)_C
\alpha_{C} \eta(g)_C^{-1} \eta(g)_C)\Phi_C = 
\Vf(\eta(g)_C \alpha_C)\Phi_C  \\& = (g\ast(\alpha\cdot\Phi))_{\act{g}{C}}\zeta(g)_C,
\end{align*}
we again strip off the isomorphism $\zeta(g)_C$ to obtain the required result. Next, 
\begin{align*}
\Vf(\xi(g',g)_{\act{g'g}{C}})((g'g)\ast\Phi)_{\act{g'g}{C}} \zeta(g'g)_C &= \Vf(\xi(g',g)_{\act{g'g}{C}}\eta(g'g)_C)\Phi_C \\ &= 
\Vf(\eta(g')_{\act{g}{C}}\eta(g)_C) \Phi_C \\ 
&=\Vf(\eta(g')_{\act{g}{C}}) (g\ast\Phi)_{\act{g}{C}}\zeta(g)_C \\
&=(g'\ast (g\ast\Phi))_{\act{g'g}{C}}\zeta(g')_{\act{g}{C}}\zeta(g)_C \\
&= (g'\ast (g\ast\Phi))_{\act{g'g}{C}}\zeta(g'g)_C ,
\end{align*}
for $g',g\in G$,
using the fact that $\zeta$ induces a trivial cocycle.

The final statement follows from \eqref{eq:stardot} and \eqref{eq:starcocycle} by 
the calculation
\begin{align*}
\rho(\alpha',g')\rho(\alpha,g)\Phi &= \alpha'\cdot \left(
g'\ast\left(\alpha\cdot(g\ast\Phi)\right)\right) =
\alpha'\cdot\phi(g')(\alpha)\cdot\left(g'\ast(g\ast\Phi)\right) \\
&= \left(\alpha' \phi(g')(\alpha)\xi(g',g)\right)\cdot\left((g'g)\ast\Phi\right) = \rho((\alpha',g')(\alpha,g))\Phi.  \qquad\cmpqed
\end{align*} 
\end{proof}
For example, the component fields of a Proca field transform in 
a vector representation of $\Lc_0$, $\Lambda\ast A^\mu = (\Lambda^{-1})^\mu_{\phantom{\mu}\nu} A^\nu$. Thus they can be distinguished from the components of a Dirac spinor or  
four independent scalars. 

In general, Theorem~\ref{thm:grouprep} allows one to classify fields by the subrepresentations of $\rho$ in which they transform. A subspace of $\Fld(\Af)$ (or sometimes, a basis for it) carrying an indecomposable subrepresentation of $\rho$ will be called an \emph{$E$-multiplet}, augmenting
the description with attributes of the subrepresentation (e.g., irreducibility) as appropriate.  
The same can be done for the actions of $\Aut(\Af)$ and $G$  (in the latter case, allowing generalized multiplier representations according to  \eqref{eq:starcocycle}) and referring to $\Aut(\Af)$- and $G$-multiplets respectively. One multiplet can be contained in another, if the latter is
reducible.  Note also that if $\zeta(g)$ commutes with complex conjugation, 
then the conjugate field $\Phi^\dagger$ to $\Phi\in\Fld(\Af)$
defined by $\Phi_C^\dagger(f)=\Phi_C(\overline{f})^*$ obeys
$g\ast\Phi^\dagger=(g\ast\Phi)^\dagger$ and transforms in 
the complex conjugate representation of that in which $\Phi$ transforms. 
Thus, self-adjoint fields transform in self-conjugate multiplets. 

The general structure raises the possibility that distinct $G$-multiplets can be
mixed within a larger $E$-multiplet. This can be excluded in some circumstances: 
%
%

\begin{corollary}\label{cor:nomix}
Under the hypotheses of Theorem~\ref{thm:grouprep}, suppose
additionally that $[\Af]_G\in H^2(G,\Aut(\Af))$ is trivial, 
so $E=\Aut(\Af)\times G$. Then no inequivalent irreducible nontrivial $G$-multiplets
can be mixed by the action of $E$.  
\end{corollary}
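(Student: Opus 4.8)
The plan is to exploit the hypothesis that $E=\Aut(\Af)\times G$ is a \emph{direct} product. Since $[\Af]_G$ is trivial one may choose an implementation $\eta$ whose $2$-cocycle is $(1,\id)$, so that $\phi(g)=\id$ and $\xi\equiv 1$; then \eqref{eq:stardot} and \eqref{eq:starcocycle} of Theorem~\ref{thm:grouprep} reduce to $g\ast(\alpha\cdot\Phi)=\alpha\cdot(g\ast\Phi)$ and $g'\ast(g\ast\Phi)=(g'g)\ast\Phi$. Hence $\alpha\mapsto(\Phi\mapsto\alpha\cdot\Phi)$ and $g\mapsto(\Phi\mapsto g\ast\Phi)$ are honest, mutually commuting actions of $\Aut(\Af)$ and of $G$ on $\Fld(\Af)$, with $\rho(\alpha,g)\Phi=\alpha\cdot(g\ast\Phi)$; in particular each $\rho(\alpha,1)$ is a $G$-equivariant linear automorphism of $\Fld(\Af)$.

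I would then invoke a standard module-theoretic fact. For each isomorphism class of irreducible $G$-representation $\pi$, let $\Fld(\Af)_\pi\subseteq\Fld(\Af)$ be its $\pi$-isotypic part for the $G$-action, i.e.\ the sum of all $G$-invariant subspaces on which $G$ acts by a representation isomorphic to $\pi$. A $G$-equivariant automorphism of $\Fld(\Af)$ sends $\pi$-type submodules to $\pi$-type submodules and therefore preserves $\Fld(\Af)_\pi$; by the first paragraph every $\rho(\alpha,1)$ does so, so $\Fld(\Af)_\pi$ is invariant under $\Aut(\Af)$ and, being manifestly $G$-invariant, under all of $E$. Moreover, since an isotypic module is semisimple and $\Hom_G(\pi,\pi')=0$ for inequivalent irreducibles, $\Fld(\Af)_\pi\cap\Fld(\Af)_{\pi'}=0$ whenever $\pi\not\cong\pi'$.

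The conclusion now follows. If $W\subseteq\Fld(\Af)$ is an irreducible $G$-multiplet of type $\pi$ then $W\subseteq\Fld(\Af)_\pi$, and since $\Fld(\Af)_\pi$ is $E$-invariant the entire $E$-submodule generated by $W$ is contained in $\Fld(\Af)_\pi$. Consequently, for inequivalent nontrivial irreducible $G$-multiplets $W_1,W_2$ of types $\pi_1\not\cong\pi_2$, the $E$-submodules they generate lie in the disjoint $E$-invariant subspaces $\Fld(\Af)_{\pi_1}$ and $\Fld(\Af)_{\pi_2}$ and hence meet only in $\{0\}$: no element of $E$ can map part of $W_1$ onto part of $W_2$, which is the asserted absence of mixing. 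For an \emph{irreducible} $E$-multiplet $V$ this sharpens --- $V\cap\Fld(\Af)_{\pi_1}$ is then a nonzero $E$-submodule of $V$, so equals $V$, whence $V$ is $\pi_1$-isotypic as a $G$-module and cannot contain $W_2$ at all --- and when the $G$-action is completely reducible (automatic for finite-dimensional multiplets when $G$ is semisimple, e.g.\ $G=\Sc$ in dimension $n\ge 3$) one recovers the familiar picture in which an irreducible $\Aut(\Af)\times\Sc$-multiplet is the tensor product of an irreducible gauge multiplet with an irreducible spin multiplet. The point requiring the most care is the precise meaning of ``mixed'': absent complete reducibility one cannot split a merely indecomposable $E$-multiplet as a $G$-direct sum, so the robust content is the isotypic segregation of the generated $E$-submodules established above rather than a global $G$-decomposition of $\Fld(\Af)$; everything else is routine.
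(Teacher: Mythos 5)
Your proof is correct and reaches the same conclusion as the paper, but via a genuinely different packaging of the Schur-type reasoning. The paper works directly with intertwiners: writing $e=(\alpha,1)$, it forms $Q=\pi_1\rho(e)\iota_2$ and $R=\pi_2\rho(e)\iota_1$ from the $G$-equivariant injections $\iota_i$ and retractions $\pi_i$ it builds into its notion of $G$-multiplet (so the multiplets are implicitly taken to be $G$-direct summands of $\Fld(\Af)$), shows $Q$ and $R$ intertwine $\sigma_1$ and $\sigma_2$ because the $\Aut(\Af)$- and $G$-actions commute, and then applies Schur's lemma to conclude $\sigma_1\simeq\sigma_2$ if mixing occurs. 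You instead pass to the $\pi$-isotypic components $\Fld(\Af)_\pi$, observe that each $\rho(\alpha,1)$ is a $G$-equivariant automorphism and hence preserves every isotypic component, deduce that $\Fld(\Af)_\pi$ is $E$-invariant, and use pairwise triviality of intersections of distinct isotypic components to confine the $E$-orbit of a $\pi$-type multiplet to $\Fld(\Af)_\pi$. The two routes hinge on the same crucial consequence of cocycle triviality --- $\rho(\alpha,1)$ commutes with $\rho(1,g)$ --- but your version avoids assuming a $G$-equivariant retraction onto each multiplet, so it applies to arbitrary irreducible $G$-submodules rather than only to direct summands, and your closing remark correctly identifies what survives without complete reducibility. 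It is worth noting that your formulation also recovers the paper's conclusion even when retractions $\pi_i$ do exist: a $G$-equivariant map from the $\pi_1$-isotypic component to an irreducible $\pi_2$-module with $\pi_1\not\cong\pi_2$ must vanish, so $\pi_2\rho(e)\iota_1=0$ follows from your isotypic containment. The paper's argument is shorter; yours is more robust and makes the implicit hypothesis visible.
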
 
\begin{proof}
Let $(\sigma_i,U_i)$ ($i=1,2$) be irreducible $G$-representations arising
as $G$-multiplets, i.e., there are linear injections $\iota_{i}:U_i\to\Fld(\Af)$ and surjections $\pi_i:\Fld(\Af)\to U_i$ so that $\pi_i \rho(1,g)= \sigma_i(g)\pi_i$,  $\rho(1,g)\iota_i=\iota_i\sigma_i(g)$, and $\pi_i\iota_i=\id_{U_i}$. 
If the multiplets mix, there is $e\in E$, which can be taken
without loss in the form $e=(\alpha,1)$, so that $Q=\pi_1 \rho(e) \iota_2$
and $R=\pi_2\rho(e)\iota_1$ are not both zero. We assume $R\neq 0$ without loss,
and calculate $\sigma_1(g)Q = Q\sigma_2(g)$ and $R\sigma_1(g)=\sigma_2(g)R$,
so $\Im Q$ and $\ker R$ carry subrepresentations of $\sigma_1$, while $\ker Q$ and $\Im R$ carry subrepresentations of $\sigma_2$. By irreducibility of $\sigma_i$, $R$ has trivial kernel and
cokernel; hence it is an isomorphism giving $\sigma_1\simeq \sigma_2$, contradicting the hypothesis. \cmpqed
\end{proof} 
%
%
%
Our analysis has been purely algebraic. 
We comment further on this in section~\ref{sec:conclusion};
here we mention that,  while there are discontinuous finite dimensional
representations of many groups including $\RR^+$ and $\SL(2,\CC)$, there are also various `automatic continuity' results. For example, all locally bounded finite-dimensional representations of $\SL(2,\CC)$ are continuous in the Lie group topology~\cite{Shtern:2008}.

\section{Scaling}\label{sec:scaling}

As a first illustration we consider the theory of a massless free field with general
curvature coupling. The field equation  $(\Box+\xi R)\phi=0$ is invariant
under rigid scaling of the metric; we will show that this induces a 
$\RR^+$-covariance via the group action on $\Loc$ of Example~\ref{ex:scaling},
and that the local Wick powers transform in nontrivial multiplets.
We work in $n=4$ dimensions with $\hbar=c=1$, so $\phi$ has dimensions of inverse 
length.  For brevity, we write $\Rf(\lambda)(\Mb)=\lambda\Mb$.

\paragraph{Construction of the theory} The locally covariant description of the QFT is a functor $\Wf:\Loc\to\Alg$, where each 
$\Wf(\Mb)$ is the extended algebra of Wick polynomials~\cite{Ho&Wa01}, thereby including
the local Wick powers in $\Fld(\Wf)$. 

Some preliminaries are required: 
for each $\Mb\in\Loc$, set $P_\Mb=\Box_\Mb+\xi R_\Mb$ and let $E^{\,+\smash{/}-}_\Mb$ be the corresponding retarded/advanced Green operators obeying 
$P_\Mb E^\pm_\Mb f=f$, $\supp E^\pm_\Mb f\subset J^\pm_\Mb(\supp f)$, writing also
\[
E_\Mb(f,g) = \int_\Mb f(p)\left([E^-_\Mb-E^+_\Mb]g\right)(p) \dvol_\Mb(p).
\] 
Further, choose a $P_\Mb$-bisolution $W_\Mb\in\Df'(\Mb\times\Mb)$ obeying
\begin{itemize}
\item reality conditions, $\overline{W_\Mb(f,g)}=W_\Mb(\overline{g},\overline{f})$  
\item a commutator condition, $W_\Mb(f,g)-W_\Mb(g,f)=i E_\Mb(f,g)$ 
\item a wavefront set constraint, $\WF(W_\Mb)\subset \Vc_+(\Mb)\times\Vc_-(\Mb)$,
\end{itemize}
where $\Vc_{+/-}(\Mb)$ are the closures of the bundles of future/past-pointing causal covectors on $\Mb$.

Given these definitions, the unital $*$-algebra $\Wf(\Mb)$ can be presented in terms of its generators and relations (we will be brief, and refer the reader to e.g.~\cite{Ho&Wa01,ChiFre:2009} for details). There is a unit $1$, and the other generators are symbols ${:}\Phi^{\otimes k}{:}_{\Mb}(u)$, labelled by $k\in \NN$ and 
\[
u\in\Tf^{(k)}(\Mb):=\left\{u \in \EE'_{\text{sym}}(\Mb^{\times k}):~\WF(u)\cap \left(\Vc_+(\Mb)^{\times k}\cup \Vc_-(\Mb)^{\times k}\right)=\emptyset\right\}
\]
so that $u\mapsto {:}\Phi^{\otimes k}{:}_{\Mb}(u)$ is linear. 
Here `sym' denotes the symmetric subspace and $\EE(X)$ is the space of smooth densities on $X$, while $\Df(X)$ are smooth
compactly supported functions, so $\Df(X)$ is canonically included in $\EE'(X)$ without specifying a volume element. The symbols and their adjoints obey relations that are conveniently expressed in terms of a formal power series 
\[
\Gc_{\Mb}[f]=\II+ \sum_{k=1}^\infty \frac{i^k}{k!}
{:}\Phi^{\otimes k}{:}_{\Mb}(f^{\otimes k}) ,\qquad (f\in\Tf^{(1)}(\Mb))
\] 
with coefficients in $\Wf(\Mb)$. Writing the $\Wf(\Mb)$-product as $\star_\Mb$, the relations are:
\begin{itemize}
\item hermiticity, $\Gc_\Mb[f] = \Gc_\Mb[-\overline{f}]^*$
\item field equation, $\Gc_\Mb[P_\Mb f]=\Gc_\Mb[0]$
\item Wick's formula, $\Gc_{\Mb}[f] \star_{\Mb}\Gc_{\Mb}[g] = \Gc_{\Mb}[f+g] e^{-W_\Mb(f,g)}$,
\end{itemize}
understood as identities between formal Taylor coefficients about $f=0$ (or $f=g=0$ for Wick's formula) under the rule ${:}\Phi^{\otimes k}{:}_{\Mb}(u)=i^{-k}\langle\delta^k\Gc_\Mb/\delta f^k|_{f=0},u\rangle$, which are
furthermore required to remain valid under linearity and taking limits 
in $\Tf^{(\bullet)}(\Mb)$ with respect to a suitable topology (or pseudo-topology) -- see~\cite{DabBro:2014} and~\cite[\S 4.4.2]{Rejzner_book} for a discussion of various possible choices.  

This completes the description of the extended algebra $\Wf(\Mb)$ (by contrast, the \emph{unextended} algebra is the unital $*$-subalgebra $\Af(\Mb)$ generated by $\Phi_\Mb(f):={:}\Phi{:}_{\Mb}(f)$, for $f\in\Df(\Mb)$).  
For completeness, however, we spell out the relations in more detail. 
Hermiticity asserts ${:}\Phi^{\otimes k}{:}_{\Mb}(u)={:}\Phi^{\otimes k}{:}_{\Mb}(\overline{u})^*$ for all $u\in\Tf^{(k)}(\Mb)$,\footnote{Here we use 
$\langle\delta^k\Hc[\overline{f}]^*/\delta f^k,u\rangle=
\langle\delta^k\Hc[\overline{f}]/\delta \overline{f}^k,\overline{u}\rangle^*$.}
while Wick's formula corresponds to the relations
\begin{align*}
{:}\Phi^{\otimes k}{:}_{\Mb}(u)\star_\Mb{:}\Phi^{\otimes \ell}{:}_{\Mb}(v) &=(-i)^{k+\ell}  \left\langle\frac{\delta^k}{\delta f^k}\otimes \frac{\delta^\ell}{\delta g^\ell} 
\Gc_{\Mb}[f+g] \left.e^{-W_\Mb(f,g)}\right|_{f,g=0},u\otimes v\right\rangle \\
&=\sum_{j=0}^{\min\{k,\ell\}} {:}\Phi^{\otimes (k+\ell-2j)}{:}_{\Mb}(u\otimes_j v)
\end{align*}
for all $u\in\Tf^{(k)}(\Mb)$, $v\in\Tf^{(\ell)}(\Mb)$.
Here, $u\otimes_j v$ is the symmetrized $j$-times $W_\Mb$-contracted tensor product
given by $u\otimes_j v=j!\binom{k}{j}\binom{\ell}{j} \Sym(w)$, where $w\in \EE'(\Mb^{\times(k+\ell-2j)})$ is defined by $w(f\otimes g) = (W_\Mb^{\otimes j} v_g)(u_f)$
for $f\in\EE(\Mb^{\times(k-j)})$, $g\in\EE(\Mb^{\times(\ell-j)})$, regarding
$W_\Mb^{\otimes j}$ as a map $\Tf^{(j)}(\Mb)\to\Df'(\Mb^{\times j})$ and 
denoting $u_f(\cdot)=u(f\otimes \cdot)\in \Tf^{(j)}(\Mb)$, $v_g(\cdot)=v(g\otimes \cdot)\in \Tf^{(j)}(\Mb)$. The microlocal conditions on $W_\Mb$ and $\Tf^{(\bullet)}(\Mb)$ ensure that all this is well-defined. Lastly, combining the field equation with Wick's formula gives $\Gc_\Mb[f+P_\Mb h]=\Gc_\Mb[f]$ and 
therefore, taking one functional derivative in $h$ and the rest in $f$, yields the 
relations ${:}\Phi^{\otimes (k+1)}{:}_{\Mb}(w)=0$ for any $w$ in the closure 
of  $\Span \{\Sym (u\otimes P_\Mb v): u\in\Tf^{(k)}(\Mb),~v\in\Tf^{(1)}(\Mb)\}\subset \Tf^{(k+1)}(\Mb)$
for $k\in\NN_0$ ($\Tf^{(0)}(\Mb)=\CC$ by convention). 
The generating function evidently provides a very compact formulation of these relations and permits
efficient computation with them. 

Returning to the definition of $\Wf$ as a functor, 
to each morphism $\psi:\Mb\to\Nb$ in $\Loc$, there is a corresponding 
$\Wf(\psi):\Wf(\Mb)\to\Wf(\Nb)$ which acts on generators by
\begin{equation}\label{eq:WfGc}
\Wf(\psi) \Gc_\Mb[f] = \Gc_\Nb[\psi_*f] e^{(W_\Mb(f,f)-W_\Nb(\psi_*f,\psi_*f))/2}
\end{equation}
and extends to an $\Alg$-morphism (cf.~\cite[\S 3]{Ho&Wa01}) ultimately because $P_\Nb\psi_*f=\psi_*P_\Mb f$ for $f\in\Tf^{(1)}(\Mb)$. Although $\Wf$ depends on the choice of $W_\Mb$'s, different choices
result in equivalent theories. For our purposes we assume without loss that $W_{\lambda\Mb}(f,g) = \lambda^6
W_\Mb(f,f')$ for all $f,f'\in\Df(\Mb)$, $\lambda\in\RR^+$. 
(This is consistent with the commutator condition because $\Box_{\lambda\Mb}=\lambda^{-2}\Box_\Mb$ and $\dvol_{\lambda \Mb}=\lambda^4\dvol_\Mb$, giving 
$E_{\lambda\Mb}^\pm f = \lambda^2 E_\Mb^\pm f$ and $E^\pm_{\lambda \Mb}(f,f')  = 
\lambda^6 E^\pm_\Mb(f,f')$.)

\paragraph{Covariance under rigid scaling} 
We now show that $\Wf$ is $\RR^+$-covariant under rigid scaling, for any $\xi\in\RR$, by exhibiting natural isomorphisms $\eta(\lambda):\Wf\nto\act{\lambda}{\Wf}$ for each $\lambda\in\RR^+$, with components defined as
\[
\eta(\lambda)_\Mb {:}\Phi^{\otimes k}_\Mb{:}(u) = \lambda^{-3k}
 {:}\Phi^{\otimes k}_{\lambda\Mb}{:}(u), \qquad(u\in \Tf^{(k)}(\Mb),~\Mb\in\Loc).
\]
Equivalently, $\eta(\lambda)_\Mb \Gc_\Mb[f] = \Gc_{\lambda\Mb}[\lambda^{-3}f]$,
in which form compatibility with the relations may be verified easily.  
Hollands and Wald studied these maps in~\cite[\S 4.3]{Ho&Wa01} (notation differs) 
and showed that they are $\Alg$-isomorphisms. Naturality was not proved in~\cite{Ho&Wa01} but is easily checked: if $\psi:\Mb\to\Nb$ then
\begin{align*}
\eta(\lambda)_\Nb\Wf(\psi)\Gc_\Mb[f] &= \eta(\lambda)_\Nb\left(
\Gc_\Nb[\psi_*f]e^{(W_\Mb(f,f)-W_\Nb(\psi_*f,\psi_*f))/2}\right) \\
&=\Gc_{\lambda\Nb}[\lambda^{-3}\psi_*f]e^{(W_\Mb(f,f)-W_\Nb(\psi_*f,\psi_*f))/2} 
\\ &=\Wf(\act{\lambda}{\psi})\eta(\lambda)_\Mb\Gc_\Mb[f],
\end{align*}
using $W_{\lambda\Mb}=\lambda^6 W_\Mb$. This 
proves that $\Wf$ is $\RR^+$-covariant. It is clear that $\eta(\lambda'\lambda)_\Mb=\eta(\lambda')_{\lambda\Mb}\eta(\lambda)_\Mb$, so the corresponding $2$-cocycle takes the form $(\id,\phi)$ where 
$\phi:\RR^+\to\Aut(\Aut(\Wf))$ remains to be determined.

For illustrative purposes, we restrict to the action of $\phi$ on a subgroup of $\Aut(\Wf)$
which --- on the basis of an analysis of the $\xi=0$ unextended theory~\cite{Fewster:gauge} ---
is expected to constitute all `regular' gauge transformations. In the case $\xi\neq 0$, 
this subgroup is a $\ZZ_2$, with action defined by $\sigma_\Mb\Gc_\Mb[f]=\Gc_\Mb[\sigma f]$ ($\sigma=\pm 1)$, while if $\xi=0$ it is the nonabelian semidirect product $\ZZ_2\ltimes\RR$, with group product 
$(\sigma',\mu')(\sigma,\mu)=(\sigma'\sigma,\mu'\sigma +\mu)$ and action specified by
\begin{equation}\label{eq:gauge}
(\sigma,\mu)_\Mb \Gc_\Mb[f]=\Gc_\Mb[\sigma f] e^{i\mu\int_\Mb f\dvol_\Mb}.
\end{equation}
Here, $\mu$ has dimensions of inverse length, like $\Phi$.  One may treat the two cases together
by restricting to $\mu=0$ if $\xi\neq 0$. Noting that
\begin{align*}
\eta(\lambda)_\Mb(\sigma,\mu)_\Mb \Gc_\Mb[f] &=
\eta(\lambda)_\Mb \Gc_\Mb[\sigma f] e^{i\mu\int  f\dvol_\Mb} 
=\Gc_\Mb[\sigma  f/\lambda^{3}] e^{i\frac{\mu}{\lambda}\int f/\lambda^3\dvol_{\lambda\Mb}} \\
&=(\sigma,\mu/\lambda)_{\lambda\Mb} 
\eta(\lambda)_\Mb\Gc_\Mb[f],
\end{align*}
we have  $\phi((\sigma,\mu)) = (\sigma,\mu/\lambda)$, 
which is consistent with the dimensions of $\mu$. Thus, the $2$-cocycle
for rigid scaling is nontrivial for minimal coupling $\xi=0$, and
(at least its restriction to the regular subgroup) is trivial for $\xi\neq 0$. 

\paragraph{Action on local Wick powers} Scaling induces a group action on 
$\Fld(\Wf)$ because $\Df$ is also $\RR^+$-covariant, implemented by $\lambda\mapsto \zeta^{(\alpha)}(\lambda)$, where $\alpha\in\RR$ and $
\zeta(\lambda)_\Mb^{(\alpha)} f= \lambda^{-4\alpha}f$ ($\Df(\Mb)=\Df(\lambda\Mb)$, 
because the manifolds coincide). One may check that the corresponding
cocycle is trivial for all $\alpha$; we take $\alpha=1$, 
so fields transform as densities of weight zero,
and now drop the superscript $\alpha$. 

As suggested by the notation, the generators ${:}\Phi^{\otimes k}{:}_{\Mb}(u)\in\Wf(\Mb)$ are (distributionally) smeared $k$-multilocal fields, Wick ordered with respect to $W_\Mb$.\footnote{Indeed, if $W_\Mb$ is of positive
type, $W_\Mb(\overline{f},f)\ge 0$ for all $f\in\Df(\Mb)$, one can define a state on
$\Wf(\Mb)$ in which all such elements have vanishing expectation value.} Owing to \eqref{eq:WfGc}, they do not transform covariantly for $k>1$, because there is no choice
of $W_\Mb$ such that $(\psi\times\psi)^*W_\Nb=W_\Mb$ for all $\psi:\Mb\to\Nb$. 
However, locally covariant Wick powers can be defined as follows. First, let $H_\Mb$ be the local Hadamard bidistribution, defined near the diagonal in $\Mb\times\Mb$ by 
\[
H_\Mb(p,q) = \frac{U_\Mb(p,q)}{4\pi^2 \sigma_{\Mb+}(p,q)} + V_\Mb(p,q) \log(\sigma_{\Mb+}(p,q)/\ell^2)
\]
where $\ell$ is a fixed length scale, common to all spacetimes, and
 $\sigma_\Mb(p,p')$ is the signed squared geodesic separation
of $p$ and $p'$, with a positive sign for spacelike separation.
The subscript $+$ indicates that 
$f(\sigma_{\Mb+}(p,q))=\lim_{\epsilon\to 0+} f(\sigma_\Mb(p,q)+2i\epsilon (T_\Mb(p)-T_\Mb(q))+\epsilon^2)$, where $T_\Mb$ increases to the future; $U_\Mb$ and $V_\Mb$ are smooth, and are fixed by requiring $U_\Mb(p,p)=1$
and $(P_\Mb\otimes 1)H_\Mb(p,q)=O(\log(\sigma_\Mb(p,q)))$. At the diagonal, $W_\Mb-H_\Mb$ is continuous and $V_\Mb$ is a multiple of the Ricci scalar: $V_\Mb(p,p) = (6\xi-1) R_\Mb|_p/(96\pi^2)$ (see, e.g.~\cite{DeWBre:1960}). 

With $H_\Mb$ so defined, set $\Hc_\Mb[f]=\Gc_\Mb[f] e^{(H_\Mb(f,f)-W_\Mb(f,f))/2}$ on $f$ of sufficiently small support that $H_\Mb$ is defined on $\supp f\times \supp f$. Then 
\begin{equation}\label{eq:lcWick}
\Phi^k_\Mb(f) = \frac{1}{i^{k}}\left\langle\left. \frac{\delta^k\Hc_\Mb}{\delta h^k}\right|_{h=0},f\delta^{(k)}_\Mb\right\rangle
\end{equation}
defines a local $k$'th Wick power smeared against $f\in\Df(\Mb)$, where 
\[
(f\delta^{(k)}_\Mb)(F) = \int_\Mb \rho_\Mb(p)^{-k} F(p,\ldots,p)f(p)\dvol_\Mb(p) \qquad (F\in\EE(\Mb^{\times k}))
\]
defines $f\delta^{(k)}_\Mb\in \Tf^{(k)}(\Mb)$; here  $\rho_\Mb$ is the density induced by $\dvol_\Mb$.
 
Under scaling, the transformed field obeys $(\lambda\ast\Phi^k)_\Mb( f)= 
\eta(\lambda)_\Mb\Phi_\Mb^k(\lambda^4 f)$, given our choice of $\zeta$ (see Theorem~\ref{thm:grouprep}). Noting that
\begin{align*}
\eta(\lambda)_\Mb\Hc_\Mb[\lambda^4 f] &= \Hc_{\lambda\Mb}[\lambda f] e^{\lambda^8 H_\Mb(f,f)-\lambda^{2}H_{\lambda\Mb}(f,f)} \\
&= \Hc_{\lambda\Mb}[\lambda f ] e^{-\lambda^2\int 
\left(H_{\lambda\Mb}(p,q)-\lambda^{-2} H_\Mb(p,q)\right) f(p)f(q)\dvol_{\lambda\Mb}^{\times 2}(p,q)}
\end{align*}
and using \eqref{eq:lcWick} together with the observations that
$\lambda^4 f\delta^{(k)}_\Mb = \lambda^{4k} f\delta^{(k)}_{\lambda\Mb}$
and  $H_{\lambda\Mb}(p,p)-\lambda^{-2}H_\Mb(p,p)\arxivbreak = V_{\lambda\Mb}(p,p)\log\lambda^2$, a short calculation gives
\begin{equation}\label{eq:almost_hom}
\lambda\ast\Phi^k = \lambda^k\sum_{j=0}^{\lfloor k/2\rfloor} \frac{k!}{j!(k-2j)!}
\left(\frac{6\xi-1}{96\pi^2}\right)^j (\log\lambda^2)^j \Rc^j \Phi^{k-2j},
\end{equation}
where $\Rc^j\Phi^k\in\Fld(\Wf)$ is the field $(\Rc^j\Phi^k)_\Mb(f)=\Phi^k_\Mb(R_\Mb^j f)$. 
Aside from the special cases $k=1$ or $\xi=1/6$, in which $\lambda\ast\Phi^k=\lambda^k\Phi^k$,
all Wick powers obey `almost homogeneous scaling'~\cite{Ho&Wa01}, and each $\Phi^k$ ($k\ge 2$) belongs to a $\lfloor k/2\rfloor$-dimensional indecomposable (and reducible) $\RR^+$-multiplet. Wick powers can be redefined within certain parameters \cite{Ho&Wa01}, but homogeneous scaling cannot be regained: for example, it is possible to redefine $\Phi^2$ by adding a fixed multiple of the Ricci scalar, but this still transforms inhomogeneously. 
We emphasise that, nonetheless, the theory $\Wf$ has rigid scale covariance for all $\xi\in\RR$.

The above discussion can be compared with~\cite{Pinamonti:2009}, which considered
theories defined on a category $\CLoc$ that admits conformal
isometries as morphisms. Only the $\xi=1/6$ conformally coupled version of
$\Wf$ is defined on $\CLoc$ and only locally conformally covariant fields can be discussed
in that setting (these include Wick powers, related to those given above within the allowed renormalization freedoms). Our approach allows us to examine
a broader class of theories that are scale covariant alongside theories that are not. 
By including the mass-squared
parameter into the background category one can even discuss theories with mass
(here the background objects are pairs $(\Mb,m^2)$ and $\RR^+$ acts by
$\Rf(\lambda)(\Mb,m^2)=(\lambda\Mb,m^2/\lambda^2)$). Elsewhere, 
it is hoped to explore  
the St\"uckelberg--Petermann renormalization group~\cite{BruDueFre2009} in our framework.

%

Summarising, this example demonstrates the need for a cohomological
description of $G$-covar{\-}iance using nonabelian coefficients, the possibility of a nontrivial action of the group $G$ ($\RR^+$ for us) on the global gauge group
and the possibility that fields can arise as indecomposable (but reducible) multiplets.

\section{An analogue of the Coleman--Mandula theorem}\label{sec:CM} 

\subsection{Hypotheses, statement of main result and consequences} 

The purpose of this section is to prove Theorem~\ref{thm:CM}, which shows that any theory $\Af:\FLoc\to \Phys$ obeying mild conditions is covariant with respect to the universal covering
group $\Sc$ of the restricted Lorentz group $\Lc_0$ (i.e., $\Sc\cong\SL(2,\CC)$ in $4$
spacetime dimensions) and has trivial cohomology class. Accordingly, the
Lorentz and internal symmetry groups do not mix, and the fields appear in  
$\Sc$ multiplets (if $\Phys=\Alg$, for example). Further consequences are discussed below.

To start, let us note that if $\Bf:\Loc\to\Alg$, then $\Af:=\Bf\circ\FfL:\FLoc\to\Alg$ is certainly $\Lc_0$-covariant, because $\FfL(\act{\Lambda}\Mbb)=\FfL(\Mbb)$ and
$\FfL(\act{\Lambda}{\psi})=\FfL(\psi)$ for all $\Mbb\in\FLoc$ and all $\psi:\Mbb\to\Nbb$. 
Thus $\act{\Lambda}{\Af}=\Af$ for all $\Lambda\in\Lc_0$, so the $\Lc_0$-covariance is implemented by $\Lambda\mapsto\id_\Af$.  
The corresponding $2$-cocycle is obviously trivial, and one obtains in a similar
way that $\Af$ is $\Sc$-covariant with trivial $2$-cocycle. 
We have already shown that any theory $\Af:\SpinLoc\to\Alg$ is
$\Sc$-covariant with trivial $2$-cocycle.\footnote{It follows that $\Af\circ\FfS$ is $\Sc$-covariant with neutral cocycle $(1,\phi)$, where $\phi$ is trivial on $\FfS^*(\Aut(\Af))$, which could \emph{a priori} be a proper subgroup of $\Aut(\Af\circ\FfS)$.}  The purpose of 
Theorem~\ref{thm:CM} is not to describe these cases as such, but rather to show 
why \emph{all} theories on $\FLoc$ obeying our conditions, however constructed, have a trivial cocycle for a common reason.  
We now proceed to assemble the hypotheses
and concepts required in Theorem~\ref{thm:CM}.

\paragraph{Timeslice property}
Given $\Mbb=(\Mc,e)\in\FLoc$, we will say that a set $\Sigma\subset\Mc$
is a Cauchy surface if it is intersected exactly once by every $e$-timelike curve. 
A morphism $\psi:\Mbb\to\Mbb'$ is said to be Cauchy
if the image of $\psi$ contains a Cauchy
surface of $\Mbb'$. Thus a $\FLoc$-morphism $\psi$ is Cauchy if and only if $\FfL(\psi)$ is Cauchy in $\Loc$ according to the terminology of~\cite{FewVer:dynloc_theory}. 
The theory $\Af$ has the \emph{timeslice
property} if $\Af(\psi)$ is an isomorphism for all Cauchy $\psi$. 
 
\paragraph{Relative Cauchy evolution \& dynamical local Lorentz invariance} 
Relative Cau{\-}chy evolution measures the response of the dynamics of 
a theory to a variation in the background structures. In $\FLoc$, variations
of $\Mbb=(\Mc,e)$ are parametrized by a smooth function 
$T\in C_\tc^\infty(\Mc;\GL^+(n;\RR))$ where the subscript indicates that
$\supp T$ (the closure of  the subset of $\Mc$ on which $T$ differs from the identity) is time-compact.  The varied coframe is $Te$, where $(Te)^{\mu}|_p=T^\mu_{\phantom{\mu}\nu}(p)e^\nu|_p$; we restrict to those $T$ for which $\Mbb[T]:=(\Mc,Te)$ is an
object of $\FLoc$. Coframe variations include, but go beyond, the metric
variations studied in~\cite{BrFrVe03,FewVer:dynloc_theory,FewVerch_aqftincst:2015} -- they can also be used to detect
whether a theory is sensitive to local Lorentz transformations. (Frame variations are required
in describing relative Cauchy evolution in the Dirac case~\cite{Sanders_dirac:2010,Ferguson_PhD}
but in an auxiliary role, whereas here they are primary.) 

Let $\Mc^\pm=I_\Mbb^\pm(\Sigma^\pm)$ where $\Sigma^\pm$
are smooth spacelike Cauchy surfaces obeying $\supp T\subset I_\Mbb^+(\Sigma^-)\cap I_\Mbb^-(\Sigma^+)$. Then 
$\Mbb^\pm=(\Mc^\pm,e|_{\Mc^\pm})$ are objects of $\FLoc$ and the subset inclusions of $\Mc^\pm$ in $\Mc$ induce Cauchy morphisms 
$\iota^\pm:\Mbb^\pm\to\Mbb$ and $\iota^\pm[T]:\Mbb^\pm\to
\Mbb[T]$. The {\em relative Cauchy evolution} $\rce_{\Mbb}[T]$
is defined by 
\[
\rce_{\Mbb}[T] = \Af(\iota^-)\Af(\iota^-[T])^{-1}\Af(\iota^+[T])\Af(\iota^+)^{-1}
\]
and is clearly an automorphism of $\Af(\Mbb)$, assuming $\Af$ has the timeslice property.
The specific choice of frame should be irrelevant in physical theories, motivating:
\begin{definition} A theory $\Af:\FLoc\to\Phys$ with the timeslice property
satisfies {\em dynamical local Lorentz invariance} if
$\rce_\Mbb[\tilde{\Lambda}]=\id$ for all $\Mbb\in\FLoc$ and all
$\tilde{\Lambda}\in C^\infty_{\tc}(\Mc;\Lc_0)$ 
that are null-homotopic relative to the complement of a time-compact
subset of $\Mc$.
\end{definition}
This condition holds in any theory induced from $\Loc$ 
or $\SpinLoc$ of the form $\Af=\Bf\circ\FfL$ or
$\Af=\Cf\circ\FfS$.\footnote{The $\Loc$ case is trivial, because $\FfL(\Mbb[T])=\FfL(\Mbb)$; 
	$\SpinLoc$ needs a short calculation.}	
Note that
the restriction to null-homotopic $\tilde{\Lambda}$ is a conservative 
assumption; a stronger definition that dropped the null-homotopy condition
would rule out theories with non-integer spin fields.\footnote{The need to
consider homotopy properties of framings in relation to relative
Cauchy evolution was noted by Ferguson~\cite{Ferguson_PhD}.}

\paragraph{Additivity} The theory $\Af$ is said to be additive 
if each $\Af(\Mbb)$ can be built from knowledge of the theory
on suitable subregions of $\Mbb$. To make this precise, 
note first that if $\Mbb=(\Mc,e)$ and $O$ is
an open $e$-causally convex subset of $\Mc$, then 
$\Mbb|_O:=(O,e|_O)$ defines the $\FLoc$-object corresponding to 
$O$ as a spacetime in its own right, and that
the inclusion of $O$ in $\Mc$ induces a $\FLoc$-morphism 
$\iota_{\Mbb;O}:\Mbb|_O\to\Mbb$.  Our additivity condition
requires that the morphisms
$\Af(\iota_{\Mbb;D})$ are jointly epic as $D$ runs over the set of truncated multi-diamonds (defined below) in $\Mb$: that is, if $\alpha\circ \Af(\iota_{\Mbb;D})=\beta\circ \Af(\iota_{\Mbb;D})$ for all truncated multi-diamonds $D$, then $\alpha=\beta$. This differs slightly
from the definition used in~\cite{FewVer:dynloc_theory,Fewster:gauge} but follows from it if
(as is true for $\Phys=\Alg$) $\Phys$ has unions and equalizers~\cite[Lem~2.5]{Fewster:gauge}.\footnote{There is a typographical error in the proof of \cite[Lem~2.5]{Fewster:gauge}; the calculation in the penultimate line should end with $h\circ m$, not $m$.} A \emph{truncated multi-diamond}
is a subset of the form $\Nc\cap D_\Mbb(B)$ where $\Nc$ is an
open globally hyperbolic neighbourhood of Cauchy surface $\Sigma$
in $\Mbb$, while the \emph{base} $B$ is a finite union of disjoint subsets of $\Sigma$
each of which is an open ball in local coordinates, and is called
a \emph{Cauchy multi-ball}.  
Images of Cauchy multi-balls under ${\sf (F)}\Loc$ morphisms are again Cauchy multi-balls.
(See Def.~2.5 and the subsequent discussion in \cite{FewVer:dynloc_theory}.) 

Given these definitions, our main result can be stated as follows. 
\begin{theorem}\label{thm:CM} In spacetime dimension $n\ge 2$, suppose $\Af:\FLoc\to\Phys$ obeys the timeslice axiom, dynamical local 
Lorentz invariance and additivity. Then $\Af$ is $\Sc$-covariant with trivial cocycle (and hence trivial cohomology class).
\end{theorem}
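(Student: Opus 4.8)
The plan is to construct, for each $\Mbb=(\Mc,e)\in\FLoc$ and each $S\in\Sc$, an isomorphism $\eta(S)_\Mbb\colon\Af(\Mbb)\to\Af(\act{\pi(S)}{\Mbb})=\act{S}{\Af}(\Mbb)$ by evolving along an interpolating coframe deformation, and then to verify that the family $\eta(S)$ implements the $\Sc$-covariance with both ingredients $(\xi,\phi)$ of its cocycle trivial, i.e.\ that \eqref{eq:eta_triv} holds. Regarding $S\in\Sc$ as a homotopy class, rel endpoints, of paths $\gamma\colon[0,1]\to\Lc_0$ with $\gamma(0)=1$ and $\gamma(1)=\Lambda:=\pi(S)$, I would first attach to each such path a candidate $\tau_\Mbb[\gamma]$, and then prove that it depends only on $[\gamma]$.

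To build $\tau_\Mbb[\gamma]$, fix Cauchy surfaces $\Sigma^-$ strictly to the past of $\Sigma^+$, choose $\chi\in C^\infty(\Mc)$ with $\chi\equiv0$ on $J^-_\Mbb(\Sigma^-)$ and $\chi\equiv1$ on $J^+_\Mbb(\Sigma^+)$, and put $T_\gamma:=\gamma\circ\chi\in C^\infty(\Mc;\Lc_0)$, so that $\Mbb_\gamma:=(\Mc,T_\gamma e)$ is a framed spacetime coinciding with $\Mbb$ on $A:=I^-_\Mbb(\Sigma^-)$ and with $\act{\Lambda}{\Mbb}$ on $B:=I^+_\Mbb(\Sigma^+)$. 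Crucially, because $T_\gamma$ takes values in $\Lc_0$ the $e$-metric and the (time-)orientation are unchanged, so $\FfL(\Mbb_\gamma)=\FfL(\Mbb)$ and $\Mbb$, $\Mbb_\gamma$ and $\act{\Lambda}{\Mbb}$ all have the same Cauchy surfaces; hence the subset inclusions yield Cauchy morphisms $\iota_A\colon\Mbb|_A\to\Mbb$, $\iota_A'\colon\Mbb|_A=\Mbb_\gamma|_A\to\Mbb_\gamma$, $\iota_B'\colon\act{\Lambda}{\Mbb}|_B=\Mbb_\gamma|_B\to\Mbb_\gamma$ and $\iota_B''\colon\act{\Lambda}{\Mbb}|_B\to\act{\Lambda}{\Mbb}$. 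Using the timeslice property to invert $\Af$ of Cauchy morphisms, set
\[
\tau_\Mbb[\gamma]:=\Af(\iota_B'')\,\Af(\iota_B')^{-1}\,\Af(\iota_A')\,\Af(\iota_A)^{-1}\colon\Af(\Mbb)\longrightarrow\Af(\act{\Lambda}{\Mbb}),
\]
which reduces to $\id_{\Af(\Mbb)}$ for the constant path.

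Four properties then need checking. (i)~$\tau_\Mbb[\gamma]$ intertwines the gauge group, $\tau_\Mbb[\gamma]\,\alpha_\Mbb=\alpha_{\act{\Lambda}{\Mbb}}\,\tau_\Mbb[\gamma]$ for $\alpha\in\Aut(\Af)$, which is immediate from naturality of $\alpha$ as $\tau_\Mbb[\gamma]$ is a composite of maps $\Af(\iota)^{\pm1}$; this makes $\phi$ trivial. (ii)~$\tau_\Mbb[\gamma]$ depends on $\gamma$ only through $[\gamma]$ and is independent of the choices of $\Sigma^\pm$ and $\chi$: for two admissible choices realising homotopic paths, one recognises the comparison map as a relative Cauchy evolution $\rce_\bullet[\tilde\Lambda]$ of an $\Lc_0$-valued $\tilde\Lambda$ that is trivial outside a time-compact region and, locally, null-homotopic relative to the complement of that region; localising via additivity to truncated multi-diamonds, dynamical local Lorentz invariance then forces the two constructions to agree. (iii)~$\tau_\bullet[\gamma]$ is natural in $\Mbb$: this is clear for Cauchy morphisms by functoriality, follows for inclusions $\iota_{\Nbb;O}$ of causally convex regions by choosing the deformation on $\Nbb$ to extend one on $\Nbb|_O$ and invoking additivity, and hence holds for a general $\FLoc$-morphism since it factors as an isomorphism followed by such an inclusion. (iv)~Stacking the deformation behind $\tau_\Mbb[\gamma]$ (from $e$ to $\Lambda e$) beneath the one behind $\tau_{\act{\Lambda}{\Mbb}}[\gamma']$ (from $\Lambda e$ to $\Lambda'\Lambda e$, $\Lambda'=\pi(S')$) is an admissible deformation realising the concatenation $1\to\Lambda\to\Lambda'\Lambda$; by functoriality $\tau_{\act{\Lambda}{\Mbb}}[\gamma']\circ\tau_\Mbb[\gamma]$ equals $\tau_\Mbb$ of this deformation, which by (ii) and the description of the group law of $\Sc$ by concatenation equals $\tau_\Mbb[\gamma_{S'S}]$. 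Setting $\eta(S)_\Mbb:=\tau_\Mbb[\gamma_S]$, items (i)--(iv) supply $\eta(1)=\id_\Af$, naturality of $\eta(S)\colon\Af\nto\act{S}{\Af}$, the relation $\eta(S'S)_\Mbb=\eta(S')_{\act{S}{\Mbb}}\,\eta(S)_\Mbb$ and triviality of $\phi$ --- that is, a trivial-cocycle implementation of the $\Sc$-covariance, which by Theorem~\ref{thm:cohom} also gives the trivial cohomology class.

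I expect the real work to lie in (ii), the homotopy invariance of $\tau_\Mbb[\gamma]$. Dynamical local Lorentz invariance is only a \emph{local} hypothesis --- it annihilates $\rce_\Mbb[\tilde\Lambda]$ solely for $\tilde\Lambda$ null-homotopic relative to the complement of a time-compact set --- whereas the difference of two interpolating deformations for homotopic paths need not be globally of that form when the Cauchy surfaces carry nontrivial topology. Bridging this gap is exactly what forces the use of additivity and the truncated-multi-diamond decomposition, with careful $\pi_1(\Lc_0)$-bookkeeping along the way; it is equally the failure of this invariance for non-null-homotopic loops that makes the universal cover $\Sc$, rather than $\Lc_0$ itself, the correct symmetry group.
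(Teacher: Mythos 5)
Your proposal follows essentially the same route as the paper: construct a candidate isomorphism $\tau_\Mbb[\gamma]$ by sandwiching an interpolating coframe deformation between Cauchy morphisms and invoking the timeslice property, establish homotopy independence via dynamical local Lorentz invariance, check naturality with the aid of additivity, and verify the cocycle is trivial by stacking deformations. Your steps (a), (i) and (iv) faithfully reproduce parts (a) and (c) of the paper's proof. However, you have mislocated where the burden of additivity falls and underestimate the difficulty of naturality, which is where the paper spends its technical effort. In your step (ii), the paper needs no additivity at all: two interpolating $\tilde\Lambda,\hat\Lambda$ of the same relative homotopy class $S$ automatically have $\hat\Lambda\tilde\Lambda^{-1}$ null-homotopic relative to the complement of a time-compact set, and dynamical local Lorentz invariance applies directly; the worry about Cauchy surfaces with nontrivial topology is absorbed into the paper's notion of ``homotopy class $S$ relative to $J^+_\Mbb(\Sigma^+)\cup J^-_\Mbb(\Sigma^-)$'' (the footnote's per-curve condition constrains which $S$ are admissible; it is not the definition of the class). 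You flagged (ii) as the real work, but it is a few lines in the paper.

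The genuine gap is in your step (iii). The phrase ``choosing the deformation on $\Nbb$ to extend one on $\Nbb|_O$'' names precisely the step the paper points out fails in general: $\psi_*\tilde\Lambda$ may have boundary points to which it cannot be extended continuously, so the interpolating spacetime for $\Mbb$ need not embed into one for $\Nbb$. The paper circumvents this by a different route: localize via additivity to truncated multi-diamonds $D\subset\Mbb$ (so the $\Af(\iota_{\Mbb;D})$ are jointly epic), and for each such $D$ construct a temporal mollifier $\mu$ on $\Mbb$ together with a \emph{compatible} temporal mollifier $\nu$ on $\Nbb$ that agree on a causally convex middle region $\Ubb$ nested between two smaller truncated multi-diamonds $\Rbb^\pm$ (Lemma~\ref{lem:local_zeta}). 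The existence of such $\nu$ is not automatic; Lemma~\ref{lem:natural_zeta} establishes it by an explicit geometric construction using the orthogonal splitting of the $e$-metric on a Cauchy temporal foliation, Cauchy multi-balls, and a smooth Tietze extension. Your sketch would need to be replaced by this mollifier-compatibility argument; as stated it relies on an extension that generically does not exist.
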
 
Before giving the proof we make some remarks and draw out some consequences. First,
as discussed in the introduction, Theorem~\ref{thm:CM} is an analogue of the Coleman--Mandula theorem~\cite{ColeMand:1967} insofar as it is based on dynamics (specifically, the timeslice property and dynamical local Lorentz invariance), rather than on a group
theoretic analysis such as~\cite{Michel:1964,ORaif:1965,Jost:1966}. However, we re-emphasize 
that our result is not a direct generalization of the Coleman--Mandula theorem in either its
statement or its method of proof. It is also worth noting that the proof of
Theorem~\ref{thm:CM} does not utilize special properties of Minkowski spacetime, or 
of the theory $\Af$ restricted to Minkowski spacetime. In this, it differs from results such 
as the spin-statistics connection~\cite{Verch01}.

Second, triviality of the cohomology class implies that the corresponding extended symmetry
group is a direct product $E=\Aut(\Af)\times\Sc$. The single-component
fields $\Fld(\Af)$ therefore form multiplets under the action of $E$, and the restrictions of
this action to $\Aut(\Af)$ or $\Sc$ are also true representations. Thus fields arise in $\Sc$-multiplets, just as in Minkowski spacetime. By Corollary~\ref{cor:nomix},
inequivalent irreducible representations of $\Sc$ 
(or indeed of the gauge group $\Aut(\Af)$) cannot be mixed by the action of $E$, so 
finite-dimensional multiplets of different spinor-tensor type do not mix. 

Third, the proof of Theorem~\ref{thm:CM} explicitly constructs
an implementation of the $\Sc$-covariance. In Minkowski spacetime, 
this can be connected to the standard action of the Lorentz
group in Wightman theory -- see Section~\ref{sec:Mink}. 

Fourth, any $\Sc$-covariant theory is also $\Lc_0$-covariant with an implementation
given by $\Lambda\mapsto\eta(\Lambda)=\zeta(S_\Lambda)$, where
$\Lambda\mapsto S_\Lambda$ is any section of the covering homomorphism
$\pi:\Sc\to\Lc_0$ with $S_\II=1$.
The corresponding cocycle is easily calculated, using triviality of that induced by $\zeta$,
and is $(\zeta\circ z,1)$, where $z:\Lc_0\times\Lc_0\to\ker\pi$ is
given as
$z(\Lambda',\Lambda)=S_{\Lambda'}S_\Lambda S_{\Lambda'\Lambda}^{-1}$. 
The restriction of $\zeta$ to $\ker\pi$ is therefore of interest.
\begin{lemma} \label{lem:invol} $\zeta$ restricts
to a homomorphism from $\ker\pi$ to the centre $Z(\Aut(\Af))$. 
\end{lemma}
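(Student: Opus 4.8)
The plan is to exploit the two facts about $\zeta$ that have already been used to compute the cocycle in the fourth remark preceding the lemma: namely, that $\zeta$ implements the $\Sc$-covariance with a \emph{trivial} cocycle, so that
\begin{equation*}
\zeta(S'S)_\Mbb = \zeta(S')_{\act{S}{\Mbb}}\,\zeta(S)_\Mbb, \qquad
\zeta(S)_\Mbb\,\alpha_\Mbb = \alpha_{\act{S}{\Mbb}}\,\zeta(S)_\Mbb
\end{equation*}
for all $S',S\in\Sc$, $\alpha\in\Aut(\Af)$ and $\Mbb\in\FLoc$. The first identity is just the statement $\eta(g'g)=\eta(g')_{\act{g}{\cdot}}\eta(g)$ from \eqref{eq:eta_triv}, and the second is $\eta(g)_C\alpha_C=\alpha_{\act{g}{C}}\eta(g)_C$. (These are exactly what triviality of $[\Af]_\Sc$ gives us via Theorem~\ref{thm:CM}.)

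First I would observe that if $S\in\ker\pi$ then $\pi(S)=\II$, so $\Tf(\pi(S))$ is the identity functor on $\FLoc$; hence $\act{S}{\Mbb}=\Mbb$ for every object and $\act{S}{\psi}=\psi$ for every morphism. Consequently each $\zeta(S)_\Mbb$ is an \emph{auto}morphism of $\Af(\Mbb)$ (not just an isomorphism $\Af(\Mbb)\to\Af(\act{S}{\Mbb})$), and the whole family $\zeta(S)$ is a natural isomorphism of $\Af$ with itself, i.e.\ $\zeta(S)\in\Aut(\Af)$. Restricted to $\ker\pi$, the first displayed identity above becomes $\zeta(S'S)=\zeta(S')\zeta(S)$ with \emph{all} subscripts equal, so $\zeta|_{\ker\pi}:\ker\pi\to\Aut(\Af)$ is a group homomorphism.

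Next I would show the image lands in the centre $Z(\Aut(\Af))$. Take $S\in\ker\pi$ and $\alpha\in\Aut(\Af)$ arbitrary. The second displayed identity, evaluated at this $S$ and using $\act{S}{\Mbb}=\Mbb$, reads $\zeta(S)_\Mbb\,\alpha_\Mbb=\alpha_\Mbb\,\zeta(S)_\Mbb$ for every $\Mbb$; since two elements of $\Aut(\Af)$ agree iff all their components agree, this says $\zeta(S)\alpha=\alpha\zeta(S)$ in $\Aut(\Af)$. As $\alpha$ was arbitrary, $\zeta(S)\in Z(\Aut(\Af))$, and we are done.

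There is really no serious obstacle here: the lemma is a near-immediate corollary of the triviality of the cocycle established in Theorem~\ref{thm:CM}, once one notices that elements of $\ker\pi$ act trivially on the category $\FLoc$ so that the isomorphisms $\zeta(S)$ have equal source and target. The only point requiring a word of care is the standing assumption that $\zeta$ is chosen so as to implement the $\Sc$-covariance with the trivial cocycle (legitimate precisely because $[\Af]_\Sc$ is trivial); with a different implementation one would only get a homomorphism into $\Aut(\Af)$ modulo conjugation, which is why fixing the good implementation at the outset is essential.
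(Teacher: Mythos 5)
Your proof is correct and follows essentially the same route as the paper: observe that $\ker\pi$ acts trivially on $\FLoc$ so each $\zeta(S)$ ($S\in\ker\pi$) lies in $\Aut(\Af)$, then read off the homomorphism and centrality properties directly from the two identities in \eqref{eq:eta_triv} that express triviality of the cocycle. The paper's proof is a condensed version of exactly this argument, including the reference to \eqref{eq:eta_triv}.
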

\begin{proof}
For $S\in\ker\pi$, $\act{S}{\Mbb}=\Mbb$ for each $\Mbb$ and so
$\zeta(S)\in\Aut(\Af)$. Triviality of the $\Sc$ cocycle induced by $\zeta$
(cf.~\eqref{eq:eta_triv}) implies that $\zeta|_{\ker\pi}$ is a homomorphism and that $\zeta(S)\alpha=\alpha\zeta(S)$ for all $\alpha\in\Aut(\Af)$, $S\in\ker\pi$.  \cmpqed
\end{proof}
The kernel of $\pi$ is the homotopy group $\pi_1(\Lc_0)$.   
In spacetime dimensions $n\ge 4$,  $\ker\pi \cong\ZZ_2$, 
and $\zeta(-1)$ is thus an involutive, central element of $\Aut(\Af)$, 
while in $n=3$, $\ker\pi$ is the infinite cyclic group, and in $n=2$,
it is trivial. The extended group corresponding to 
$\Lc_0$-covariance is a quotient of $\Aut(\Af)\times\Sc$; for example,
if $n\ge 4$, it is $(\Aut(\Af)\times\Sc)/\ZZ_2$, where the $\ZZ_2$
is generated by $(\zeta(-1),-1)$. As all fields transform in true $\Lc_0$-representations if $(\zeta\circ z,1)$ is trivial, one has:
\begin{corollary} If $n\ge 3$, let $\Af$ obey the conditions of Theorem~\ref{thm:CM}.  A necessary condition for $\Fld(\Af)$ to contain multiplets of noninteger spin is that $\Aut(\Af)$ carries a nontrivial homomorphic image of $\pi_1(\Lc_0)$ 
(induced by $\zeta$). In particular, $\zeta(-1)$ must
be a nontrivial involutive central element in dimension $n\ge 4$. 
\end{corollary}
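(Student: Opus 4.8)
The plan is to leverage the explicit $\Sc$-covariance produced by Theorem~\ref{thm:CM} together with the multiplet structure of Theorem~\ref{thm:grouprep}, so as to reduce the statement to the behaviour on $\ker\pi$ of the implementation $\zeta$ already analysed in Lemma~\ref{lem:invol}. Working with $\Phys=\Alg$ so that $\Fld(\Af)$ is available, Theorem~\ref{thm:CM} furnishes an implementation $\zeta$ of the $\Sc$-covariance of $\Af$ with trivial cocycle (the same $\zeta$ as in Lemma~\ref{lem:invol}). One first observes that $\Df$ is $\Sc$-covariant with trivial cocycle: the $\Sc$-action on $\FLoc$ is $\Tf\circ\pi$ and $\Tf(\Lambda)$ leaves both the underlying manifold and the underlying maps of morphisms untouched, so $\act{S}{\Df}=\Df$ and the identity implements the covariance. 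Applying Theorem~\ref{thm:grouprep} with $G=\Sc$, with its $\eta$ taken to be our $\zeta$ and its $\Df$-implementation taken to be the identity, equips $\Fld(\Af)$ with the true action of $E=\Aut(\Af)\times\Sc$; in particular the $\Sc$-action $\ast$ is fixed by $(S\ast\Phi)_{\act{S}{C}}=\Vf(\zeta(S)_C)\Phi_C$ for $C\in\FLoc$, by \eqref{eq:gstardef}.

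The crux is to restrict to $S\in\ker\pi$. Then $\pi(S)=\II$, so the $\Sc$-action fixes every object, $\act{S}{C}=C$, and $(S\ast\Phi)_C=\Vf(\zeta(S)_C)\Phi_C$; comparing with \eqref{eq:dotdef} this says $S\ast\Phi=\zeta(S)\cdot\Phi$ for all $\Phi\in\Fld(\Af)$ and $S\in\ker\pi$, where $\zeta(S)\in Z(\Aut(\Af))$ by Lemma~\ref{lem:invol}. Thus on $\ker\pi$ the geometric and the gauge actions on fields agree. Now let $U\subseteq\Fld(\Af)$ be a multiplet of noninteger spin, i.e.\ the subrepresentation of $\Sc$ carried by $U$ does not factor through $\Lc_0=\Sc/\ker\pi$; equivalently, some $S\in\ker\pi$ acts nontrivially on $U$, so there is $\Phi\in U$ with $S\ast\Phi\neq\Phi$, hence $\zeta(S)\cdot\Phi\neq\Phi$ and therefore $\zeta(S)\neq 1_{\Aut(\Af)}$. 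So $\Aut(\Af)$ carries the nontrivial homomorphic image $\zeta(\ker\pi)\subseteq Z(\Aut(\Af))$ of $\pi_1(\Lc_0)\cong\ker\pi$, induced by $\zeta$. When $n\ge 4$, $\ker\pi=\{1,-1\}$, so the nontrivial element is sent to $\zeta(-1)\neq 1$, while $\zeta(-1)^2=\zeta(1)=1$ and centrality show that $\zeta(-1)$ is a nontrivial involutive central element.

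I expect no serious obstacle: the argument is formal once the pieces are in place. The only points requiring care are to pin down what ``multiplet of noninteger spin'' should mean --- namely that $\ker\pi$ acts nontrivially on it, which is why the dimension restriction $n\ge 3$ (where $\ker\pi\neq 1$) is needed for the statement to have content --- and to keep separate the two objects written $\zeta$, one being the $\Sc$-implementation of $\Af$ and the other the $\Df$-implementation appearing in Theorem~\ref{thm:grouprep}. An equivalent route runs through the fourth remark following Theorem~\ref{thm:CM}: were $\zeta$ trivial on $\ker\pi$, then $\zeta\circ z$ would be the constant cochain $1$, the $\Lc_0$-cocycle $(\zeta\circ z,1)$ would be trivial, and $\Fld(\Af)$ would transform under genuine $\Lc_0$-representations, hence in integer-spin multiplets only --- contradicting the hypothesis.
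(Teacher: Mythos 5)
Your proof is correct, and it lands on the same essential fact as the paper: the action of $\ker\pi$ on fields is nothing but the central gauge action $\zeta|_{\ker\pi}$, so noninteger spin forces $\zeta|_{\ker\pi}\neq 1$. The paper reaches this by descending to $\Lc_0$-covariance: choosing a section $S_\Lambda$ of $\pi$, the induced $\Lc_0$-cocycle is $(\zeta\circ z,1)$, and if $\zeta|_{\ker\pi}$ were trivial that cocycle would be the trivial one, forcing all fields into true $\Lc_0$-representations (integer spin). You instead stay with the $\Sc$-multiplet structure of Theorem~\ref{thm:grouprep} and read off directly from \eqref{eq:gstardef} that, for $S\in\ker\pi$, $\act{S}{C}=C$ and $S\ast\Phi=\zeta(S)\cdot\Phi$, so nontrivial $\ker\pi$-action on a multiplet is literally a nontrivial central gauge transformation. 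This is a mild but genuine streamlining: it avoids introducing the section $S_\Lambda$, the cochain $z$, and the $\Lc_0$-cocycle altogether, and makes manifest why Lemma~\ref{lem:invol} is the only extra input beyond the trivial-cocycle conclusion of Theorem~\ref{thm:CM}. Your final sentence, pointing out the alternative route via $(\zeta\circ z,1)$, is precisely what the paper does. One small point worth keeping in mind (which you handle implicitly): the premise that the $\Df$-implementation is trivial (the identity) is what collapses \eqref{eq:gstardef} to $S\ast\Phi=\zeta(S)\cdot\Phi$ on $\ker\pi$; and your characterisation of ``noninteger spin'' as ``some $S\in\ker\pi$ acts nontrivially'' is exactly the right reading, and is why the hypothesis $n\ge 3$ (so $\ker\pi\neq 1$) matters.
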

The structure of $Z(\Aut(\Af))$ therefore constrains
the possible spins of fields associated with $\Af$. For example, any theory described by algebras of \emph{observables} (as opposed to possibly unobservable quantities) has trivial global 
gauge group and thus can only support multiplets of integer spin. We will return elsewhere~\cite{Few_spinstats} to the role of the \emph{univalence} $\zeta(-1)$ in the spin-statistics connection (see~\cite{Fewster:MG2015,Few_Regensburg:2015}
for brief accounts). 
Finally, it has already been noted that all theories on $\FLoc$ of the form $\Af=\Bf\circ\FfL$ are $\Lc_0$-covariant with trivial cocycle. Accordingly the fields in $\Fld(\Af)$ transform
under true $\Lc_0$-representations, proving that no theory with noninteger spin
can be constructed on $\Loc$.

\subsection{Proof of Theorem~\ref{thm:CM}}
The proof has three parts: (a) 
for $\Mbb\in\FLoc$, $S\in\Sc$, we construct
isomorphisms $\zeta_\Mbb(S):\Af(\Mbb)\to\Af(\act{S}{\Mbb})$; (b) we prove that the $\zeta_\Mbb(S)$
cohere to form natural isomorphisms and therefore implement $\Sc$-covariance of $\Af$; (c) we compute the corresponding $2$-cocycle. Additivity is used
in part (b) for reasons discussed below, while dynamical local Lorentz invariance is used
to show that $\zeta_\Mbb(S)$ is independent of various choices made in its construction,
which is important in (b) and (c). Throughout, we use the fact that 
elements of $\Sc$ can be regarded as homotopy equivalence classes of curves in $\Lc_0$ with a base-point at the identity $I$. We now take these parts in turn.

\paragraph{(a) Construction of $\zeta_\Mbb(S)$} Fix $\Mbb=(\Mc,e)$ and
$S\in\Sc$. Choose $\tilde{\Lambda}\in C^\infty(\Mc;\Lc_0)$ obeying 
$\tilde{\Lambda}\equiv I$ on  $J_{\Mbb}^-(\Sigma^-)$ and $\tilde{\Lambda}\equiv \Lambda$ on  $J_{\Mbb}^+(\Sigma^+)$, where $\Sigma^\pm$ are smooth spacelike Cauchy surfaces with $\Sigma^\pm\subset I_\Mbb^\pm(\Sigma^\mp)$; it is required that $\tilde{\Lambda}$ has homotopy class $S$ relative to $J^+_\Mbb(\Sigma^+)\cup J^-_\Mbb(\Sigma^-)$ (in every component of $\Mbb$).\footnote{In each component, every timelike curve from the past of $\Sigma^-$ to the future of
$\Sigma^+$ induces a curve connecting $I$ to $\Lambda$ in $\Lc_0$, and these curves must have common homotopy type.} Next, define $\tilde{\Mbb}=(\Mc,\tilde{\Lambda}e)$ (abusing notation,
we will sometimes write $\tilde{\Mbb}=\act{\tilde{\Lambda}}{\Mbb}$) and also $\Mbb^\pm=(\Mc^\pm,e|_{\Mc^\pm})$ where $\Mc^\pm=I^\pm_{\Mbb}(\Sigma^\pm)$. 
The obvious Cauchy morphisms induced by subset inclusions
\begin{equation}\label{eq:chain}
\Mbb \xlongleftarrow{\iota^-}{\Mbb^-} \xlongrightarrow{\tilde{\iota}^-}{\tilde{\Mbb}} \xlongleftarrow{\tilde{\iota}^+}  \act{\Lambda
}{\Mbb}^+  \xlongrightarrow{\iota^+}{\act{\Lambda
}{\Mbb}}
\end{equation}
(see Figure~\ref{fig:rot}) 
induce an isomorphism $\zeta(\tilde{\Lambda}):\Af(\Mbb)\to\Af(\act{\Lambda}{\Mbb})$,
\begin{equation}\label{eq:zeta1}
\zeta(\tilde{\Lambda}) = 
\Af(\iota^+)
\Af(\tilde{\iota}^+)^{-1}
\Af(\tilde{\iota}^-)
\Af(\iota^-)^{-1}
\end{equation}
by the timeslice property.
We will describe $\zeta(\tilde{\Lambda})$ as being formed by `chasing the arrows' in \eqref{eq:chain} from $\Mbb$ to $\act{\Lambda}{\Mbb}$. 
Note that if $\Af=\Bf\circ\FfL$ then, because $\FfL(\Mbb)=\FfL(\tilde{\Mbb})$, we have
$\iota^\pm=\tilde{\iota}^\pm$ (recall that each of these morphisms has
an inclusion as its underlying map) and hence $\zeta(\tilde{\Lambda})=\id_{\Bf(\Mbb)}$
for any $\tilde{\Lambda}$. 

We now show that the construction of $\zeta(\tilde{\Lambda})$ is independent of the choice
of $\Sigma^\pm$ and depends only on the homotopy class of $\tilde{\Lambda}$.
Starting with the Cauchy surfaces, note that whenever $\Sigma$ and $\Sigma'$
are smooth spacelike Cauchy surfaces, there is a smooth spacelike Cauchy surface
$\Sigma''$ in their common future (or past).
Hence it is enough to show that $\zeta(\tilde{\Lambda})$ is 
also obtained if smooth spacelike Cauchy surfaces $\hat{\Sigma}^\pm
\subset I_\Mbb^\pm(\Sigma^\pm)$ are used in place of $\Sigma^\pm$,
leaving $\tilde{\Lambda}$ unchanged.
Defining $\hat{\Mbb}{}^\pm$ by
analogy with $\Mbb^\pm$, the Cauchy morphisms of $\hat{\Mbb}{}^-$
into $\Mbb$ and $\tilde{\Mbb}$ factor via the Cauchy morphism
$j^-:\hat{\Mbb}{}^-\to\Mbb^-$, i.e., $\hat{\iota}^-=\iota^-\circ j^-$, 
and $\tilde{\hat{\iota}}^-=\tilde{\iota}^-\circ j^-$.  
Thus 
\begin{equation}
\Af(\tilde{\hat{\iota}}^-)
\Af(\hat{\iota}^-)^{-1}
=\Af(\tilde{\iota}^-)\Af(j^-)\Af(j^-)^{-1}\Af(\iota^-)^{-1}
=\Af(\tilde{\iota}^-)
\Af(\iota^-)^{-1};
\end{equation} 
a similar argument applies to $\Af(\iota^+)
\Af(\tilde{\iota}^+)^{-1}$ and establishes the required independence. 
Similarly, the isomorphism $\zeta(\tilde{\Lambda})$ is
also unchanged if we replace $\Mbb^\pm$ by causally convex subsets thereof
that contain Cauchy surfaces of $\Mbb$.   
\begin{figure}
\begin{center}
\begin{tikzpicture}[scale=0.7]
\definecolor{Green}{rgb}{0,.80,.20}
\definecolor{Gold}{rgb}{.93,.82,.24}
\definecolor{Orange}{rgb}{1,0.5,0}
\draw[fill=lightgray] (-6,0) -- ++(2,0) -- ++(0,3) -- ++(-2,0) -- cycle; 
\draw[fill=Gold] (-6,0.5) -- ++(2,0) -- ++(0,0.5) -- ++(-2,0) -- cycle; 

\draw[fill=Green] (6,0) -- ++(2,0) -- ++(0,3) -- ++(-2,0) -- cycle;
\draw[fill=Gold] (6,2) -- ++(2,0) -- ++(0,0.5) -- ++(-2,0) -- cycle; 

\draw[fill=lightgray] (0,0) -- ++(2,0) -- ++(0,1) -- ++(-2,0) -- cycle;
\draw[top color=Green,bottom color=lightgray] (0,1) -- ++(2,0) -- ++(0,1) -- ++(-2,0) -- cycle;
\draw[fill=Green] (0,2) -- ++(2,0) -- ++(0,1) -- ++(-2,0) -- cycle;
\draw[fill=Gold] (0,2) -- ++(2,0) -- ++(0,0.5) -- ++(-2,0) -- cycle;
\draw[fill=Gold] (0,0.5) -- ++(2,0) -- ++(0,0.5) -- ++(-2,0) -- cycle;

\draw[fill=Gold] (3,2) -- ++(2,0) -- ++(0,0.5) -- ++(-2,0) -- cycle;
\draw[fill=Gold] (-3,0.5) -- ++(2,0) -- ++(0,0.5) -- ++(-2,0) -- cycle;

\draw[color=blue,line width=2pt,->] (2.75,2.25) -- (2.25,2.25); 
\draw[color=blue,line width=2pt,->] (-0.75,0.75) -- (-0.25,0.75); 
\draw[color=blue,line width=2pt,->] (5.25,2.25) -- (5.75,2.25); 
\draw[color=blue,line width=2pt,->] (-3.25,0.75) -- (-3.75,0.75); 
\node[anchor=north] at (-5,0) {$\Mbb$};
\node[anchor=north] at (1,0) {$\tilde{\Mbb}$};
\node[anchor=north] at (7,0) {$\act{\Lambda}{\Mbb}$};
\node[anchor=north] at (4,0) {$\act{\Lambda}{\Mbb}^+$};
\node[anchor=north] at (-2,0) {$\Mbb^-$};
\end{tikzpicture}
\end{center}
\caption{Diagram of spacetimes involved in constructing  $\zeta(\tilde{\Lambda})$.}
\label{fig:rot}
\end{figure}
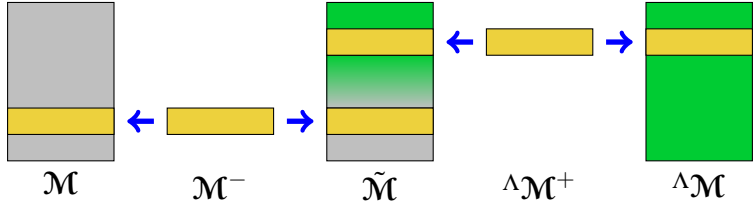

Next, let $\hat{\Lambda}\in C^\infty(\Mc;\Lc_0)$ obey the same conditions as
$\tilde{\Lambda}$ (relative to a common choice of Cauchy surfaces $\Sigma^\pm$ without loss of generality), thus also having homotopy class $S$. Then 
we have the following diagram of Cauchy morphisms
\begin{equation}\label{eq:Cauchy_diamond}
\begin{tikzpicture}[baseline=0 em, description/.style={fill=white,inner sep=2pt}]
\matrix (m) [ampersand replacement=\&,matrix of math nodes, row sep=-0.1em,
column sep=4em, text height=1.5ex, text depth=0.25ex]
{        \&                \& \tilde{\Mbb}   \&                  \&  \\
  \Mbb \& \Mbb^- \&      \& \act{\Lambda}{\Mbb}^+ \& \act{\Lambda}{\Mbb} \\
        \&                \& \hat{\Mbb}   \&                  \&  \\ };
\path[->]
(m-2-2) edge node[above] {$ \iota^- $} (m-2-1)
           edge node[above,sloped] {$ \tilde{\iota}^- $} (m-1-3)
           edge node[below,sloped] {$ \hat{\iota}^- $} (m-3-3)
(m-2-4) edge node[above] {$ \iota^+ $} (m-2-5)
           edge node[above,sloped] {$ \tilde{\iota}^+ $} (m-1-3)
           edge node[below,sloped] {$ \hat{\iota}^+ $} (m-3-3);
\end{tikzpicture}
\end{equation}
and the isomorphism $\zeta(\hat{\Lambda}):\Af(\Mbb)\to\Af(\act{\Lambda}{\Mbb})$ is
formed by chasing arrows along the lower branch from $\Mbb$ to $\act{\Lambda}{\Mbb}$. 
Now, $\hat{\Lambda}\tilde{\Lambda}^{-1}$ acts trivially outside
a timelike-compact set and is by assumption null-homotopic relative to the
complement of the time-compact subset of $\Mc$ bounded by $\Sigma^\pm$. 
Dynamical local Lorentz invariance then implies that 
$\rce_{\tilde{\Mbb}}[\hat{\Lambda}\tilde{\Lambda}^{-1}]$ is trivial, so
the diamond in \eqref{eq:Cauchy_diamond} commutes and  $\zeta(\tilde{\Lambda})=\zeta(\hat{\Lambda})$. As the isomorphism
depends only on the homotopy class $S$, it will henceforth be denoted
$\zeta_\Mbb(S)$.  
 
\paragraph{(b) Naturality of $\zeta(S)$} Let $\psi:\Mbb\to\Nbb$ be arbitrary. We must show that 
\begin{equation}\label{eq:zeta_nat}
\zeta_{\Nbb}(S)\circ\Af(\psi) = \Af(\act{S}{\psi})\circ\zeta_\Mbb(S) 
\end{equation}
holds for the isomorphisms defined in part (a). The obstruction to a straightforward proof of \eqref{eq:zeta_nat} is that the interpolating spacetime $\tilde{\Mbb}$ used to construct
$\zeta_\Mbb(S)$ (see Fig.~\ref{fig:rot}) might not embed in
an interpolating spacetime for the construction of $\zeta_{\Nbb}(S)$. Indeed, the function $\tilde{\Lambda}\in C^\infty(\Mbb;\Lc_0)$ might not be the pull-back of a function in $C^\infty(\Nbb;\Lc_0)$ -- for example, $\psi(\Mbb)$ might have boundary points to which the push-forward $\psi_*\tilde{\Lambda}$ cannot be extended continuously. We circumvent this
problem using an argument based on additivity. 

A further definition is required: Given open subsets $R^\pm$ of a spacetime $\Mbb\in\FLoc$, $\mu\in C^\infty(\Mbb,\arxivbreak [0,1])$ is a \emph{temporal mollifier} for the ordered pair $(R^-,R^+)$ if 
there exist smooth spacelike Cauchy surfaces $\Sigma^\pm$ for $\Mbb$
so that $R^\pm\subset I^\pm_\Mbb(\Sigma^\pm)$, and $\mu$ vanishes identically on $J^-_\Mbb(\Sigma^-)$ (hence on $R^-$) and is identically unity on $J_\Mbb^+(\Sigma^+)$ (hence on $R^+$). Temporal mollifiers exist for $(R^-,R^+)$ if and only
$R^{+\smash{/}-}$ lie to the future/past of a smooth spacelike Cauchy surface.

The proof of \eqref{eq:zeta_nat} falls into two parts: first, in Lemma~\ref{lem:local_zeta} 
we show that it holds on subspacetimes of $\Mbb$ provided suitable temporal mollifiers
can be found; second, in Lemma~\ref{lem:natural_zeta}, we show how such mollifiers
may be constructed on a sufficiently large class of subspacetimes to establish naturality if
$\Af$ is additive. 

\begin{lemma}\label{lem:local_zeta} Fix
$S\in\Sc$ and also a smooth path $\sigma:[0,1]\to\Sc$
with $\sigma(0)=\II$, $\sigma(1)=S$.
 (i) Let $\Mbb\in\FLoc$ and suppose morphisms 
$\rho^\pm:\Rbb^\pm\to\Mbb$ and $\upsilon:\Ubb\to\Mbb$ are given with image regions
obeying $\Im\rho^\pm\subset\Im\upsilon\subset D_\Mbb(\Im\rho^+)$ (see Fig.~\ref{fig:trio}(a)).
If $\mu$ is a temporal mollifier for $(\Im\rho^-,\Im\rho^+)$ in $\Mbb$, then 
\begin{equation}\label{eq:local_zeta}
\zeta_\Mbb(S)\Af(\rho^-) = \Af(\act{S}{\rho}^+)\Af(\kappa^+)^{-1}\Af(\kappa^-),
\end{equation}
where the morphisms $\Rbb^-\xlongrightarrow{\kappa^-}\act{\upsilon^*(\sigma\circ \mu)}{\Ubb}\xlongleftarrow{\kappa^+}\act{S}{\Rbb}^+$, of which $\kappa^+$ is Cauchy,
are uniquely determined by the requirement that 
$\FfL(\upsilon)\circ\FfL(\kappa^\pm)=\FfL(\rho^\pm)$.  

(ii) If, additionally, $\psi:\Mbb\to\Nbb$, suppose that a temporal mollifier $\nu$
exists for $(\Im\psi\circ\rho^-,\Im\psi\circ\rho^+)$ in $\Nbb$ that obeys $\psi^*\nu=\mu$ on $\upsilon(\Ubb)$. 
Then
\begin{equation}\label{eq:local_zeta_natural}
\zeta_{\Nbb}(S)\Af(\psi)\Af(\rho^-)=\Af(\act{S}{\psi})\zeta_\Mbb(S)\Af(\rho^-).
\end{equation}
\end{lemma}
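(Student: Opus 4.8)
\medskip\noindent\emph{Proof strategy.} The plan is to prove (i) by computing $\zeta_\Mbb(S)$ with an interpolating frame tailored to $\mu$, and then to deduce (ii) by applying (i) twice. For (i) I would first set $\tilde\Lambda:=\pi\circ\sigma\circ\mu\in C^\infty(\Mc;\Lc_0)$, taking $\Sigma^\pm$ to be the Cauchy surfaces witnessing that $\mu$ is a temporal mollifier for $(\Im\rho^-,\Im\rho^+)$. Since $\mu\equiv 0$ on $J^-_\Mbb(\Sigma^-)$ and $\mu\equiv 1$ on $J^+_\Mbb(\Sigma^+)$, one has $\tilde\Lambda\equiv I$, resp.\ $\tilde\Lambda\equiv\pi(S)=\Lambda$, there, and along any timelike curve from the past of $\Sigma^-$ to the future of $\Sigma^+$ the path traced by $\tilde\Lambda$ is homotopic rel endpoints to $\pi\circ\sigma$, so $\tilde\Lambda$ has homotopy class $S$. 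Hence $\tilde\Lambda$ is an admissible choice in part~(a), and $\zeta_\Mbb(S)$ is the composite \eqref{eq:zeta1} obtained by chasing \eqref{eq:chain}, with interpolating spacetime $\tilde\Mbb=(\Mc,\tilde\Lambda e)$. A fact to stress from the start is that $\FfL(\tilde\Mbb)=\FfL(\Mbb)$, because the Lorentz action fixes the $e$-metric and the orientations, so $\Mbb$ and $\tilde\Mbb$ have identical causal structure.

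Next I would introduce the ``lift'' $\tilde\upsilon:\act{\upsilon^*(\sigma\circ\mu)}{\Ubb}\to\tilde\Mbb$ of $\upsilon$ (same underlying map; target coframes match because $\pi\circ(\sigma\circ\mu)=\tilde\Lambda$), and check the $\kappa^\pm$ are well defined: their underlying maps are forced by $\FfL(\upsilon)\circ\FfL(\kappa^\pm)=\FfL(\rho^\pm)$ together with $\Im\rho^\pm\subset\Im\upsilon$ (and uniqueness follows since $\FfL(\upsilon)$ is monic), while the coframe conditions hold precisely because $\tilde\Lambda$ equals $I$ on $\Im\rho^-$ and $\Lambda$ on $\Im\rho^+$; $\kappa^+$ is Cauchy because $\Im\upsilon\subset D_\Mbb(\Im\rho^+)$ with $\Im\upsilon$ causally convex forces $\Im\rho^+$ to contain a Cauchy surface of $\Ubb$. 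The rest is morphism calculus in $\FLoc$: using $\Im\rho^\pm\subset\Mc^\pm=I^\pm_\Mbb(\Sigma^\pm)$, factor $\rho^-=\iota^-\circ\lambda^-$ and $\act{S}{\rho}^+=\iota^+\circ\lambda^+$ through $\Mbb^-$ and $\act{\Lambda}{\Mbb}^+$; then, comparing underlying maps, $\tilde\iota^-\circ\lambda^-=\tilde\upsilon\circ\kappa^-$ and $\tilde\iota^+\circ\lambda^+=\tilde\upsilon\circ\kappa^+$. Substituting \eqref{eq:zeta1} into $\zeta_\Mbb(S)\Af(\rho^-)$ and cancelling the isomorphisms $\Af(\iota^\pm),\Af(\tilde\iota^+),\Af(\kappa^+)$ (all Cauchy, invertible by timeslice) reduces the expression to $\Af(\iota^+)\Af(\tilde\iota^+)^{-1}\Af(\tilde\upsilon)\Af(\kappa^-)$ and then to $\Af(\iota^+)\Af(\lambda^+)\Af(\kappa^+)^{-1}\Af(\kappa^-)=\Af(\act{S}{\rho}^+)\Af(\kappa^+)^{-1}\Af(\kappa^-)$, which is \eqref{eq:local_zeta}.

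For (ii) the plan is to run (i) on $\Mbb$ with the given data and on $\Nbb$ with $\rho^\pm,\upsilon,\mu$ replaced by $\psi\circ\rho^\pm,\psi\circ\upsilon,\nu$, then match the outputs. First I would verify the hypotheses of (i) transfer to $\Nbb$: $\Im(\psi\circ\rho^\pm)\subset\Im(\psi\circ\upsilon)$ is clear, and $\Im(\psi\circ\upsilon)\subset D_\Nbb(\Im(\psi\circ\rho^+))$ follows since $\Im\upsilon\subset D_\Mbb(\Im\rho^+)$ with $\Im\upsilon$ causally convex gives $\Im\upsilon\subset D_{\Im\upsilon}(\Im\rho^+)$, and $\psi$---an isometric causal embedding with causally convex image---transports this to $\psi(\Im\upsilon)\subset D_{\psi(\Im\upsilon)}(\psi(\Im\rho^+))\subset D_\Nbb(\psi(\Im\rho^+))$. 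The compatibility $\psi^*\nu=\mu$ on $\upsilon(\Ubb)$ then yields $(\psi\circ\upsilon)^*(\sigma\circ\nu)=\upsilon^*(\sigma\circ\mu)$, so the interpolating object produced by (i) on $\Nbb$ is \emph{literally} $\act{\upsilon^*(\sigma\circ\mu)}{\Ubb}$, and the $\Nbb$-connecting morphisms $\theta^\pm$ (uniquely characterised by $\FfL(\psi\circ\upsilon)\circ\FfL(\theta^\pm)=\FfL(\psi\circ\rho^\pm)$) coincide with the earlier $\kappa^\pm$ after cancelling the monic $\FfL(\psi)$. Hence (i) on $\Nbb$ reads $\zeta_\Nbb(S)\Af(\psi)\Af(\rho^-)=\Af(\act{S}{(\psi\circ\rho^+)})\Af(\kappa^+)^{-1}\Af(\kappa^-)$; writing $\Af(\act{S}{(\psi\circ\rho^+)})=\Af(\act{S}{\psi})\Af(\act{S}{\rho}^+)$ by functoriality of the $\Sc$-action and invoking (i) on $\Mbb$ for the tail $\Af(\act{S}{\rho}^+)\Af(\kappa^+)^{-1}\Af(\kappa^-)=\zeta_\Mbb(S)\Af(\rho^-)$ gives \eqref{eq:local_zeta_natural}.

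I expect the main obstacle to lie entirely in part~(i): recognising that the freedom in choosing $\tilde\Lambda$ (secured in part~(a) via dynamical local Lorentz invariance) lets one take $\tilde\Lambda=\pi\circ\sigma\circ\mu$, so the generic interpolating spacetime becomes adapted to the mollifier, and then keeping the $\FLoc$-morphism bookkeeping straight---repeatedly using that a morphism is determined by its underlying map and carefully tracking the source and target of each of $\iota,\tilde\iota,\lambda,\kappa,\tilde\upsilon$. Once (i) is established, part~(ii) is a clean double application, hinged on the single compatibility condition $\psi^*\nu=\mu$ on $\upsilon(\Ubb)$.
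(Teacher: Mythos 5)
Your proposal is correct and follows essentially the same route as the paper: choose $\tilde\Lambda=\pi\circ\sigma\circ\mu$ as an admissible interpolating frame adapted to the mollifier, identify $\tilde\Ubb=\act{\upsilon^*(\sigma\circ\mu)}{\Ubb}$ as a subspacetime of $\tilde\Mbb$ via a lift $\tilde\upsilon$ of $\upsilon$, use the factorizations through $\Mbb^\pm$ and $\act{\Lambda}{\Mbb}^+$, and cancel Cauchy isomorphisms; then deduce (ii) by a double application of (i) with the crucial observation $(\psi\circ\upsilon)^*(\sigma\circ\nu)=\upsilon^*(\sigma\circ\mu)$, so that the same $\kappa^\pm$ appear. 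Your extra verification that the hypotheses of (i) transfer to $\Nbb$ is a minor explicit addition the paper leaves implicit.
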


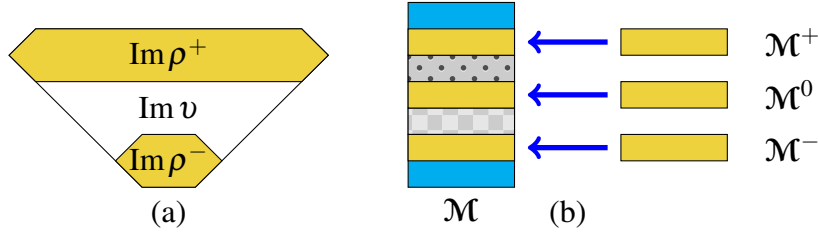
\begin{figure}
\begin{center}
\begin{tikzpicture}[scale=0.7]
\definecolor{Green}{rgb}{0,.80,.20}
\definecolor{Gold}{rgb}{.93,.82,.24}
\definecolor{Orange}{rgb}{1,0.5,0} 
\draw[fill=cyan] (-1,0) -- ++(2,0) -- ++(0,3.5) -- ++(-2,0) -- cycle; 
\draw[pattern=crosshatch dots gray] (-1,2) -- ++(2,0) -- ++(0,0.5) -- ++(-2,0) -- cycle;
\draw[pattern=checkerboard light gray] (-1,1) -- ++(2,0) -- ++(0,0.5) -- ++(-2,0) -- cycle;
\draw[fill=Gold] (-1,2.5) -- ++(2,0) -- ++(0,0.5) -- ++(-2,0) -- cycle;
\draw[fill=Gold] (-1,1.5) -- ++(2,0) -- ++(0,0.5) -- ++(-2,0) -- cycle;
\draw[fill=Gold] (-1,0.5) -- ++(2,0) -- ++(0,0.5) -- ++(-2,0) -- cycle;
\draw[fill=Gold] (3,2.5) -- ++(2,0) -- ++(0,0.5) -- ++(-2,0) -- cycle;
\draw[fill=Gold] (3,0.5) -- ++(2,0) -- ++(0,0.5) -- ++(-2,0) -- cycle;
\draw[fill=Gold] (3,1.5) -- ++(2,0) -- ++(0,0.5) -- ++(-2,0) -- cycle;
\node[anchor=west] at (5.5,2.75) {$\Mbb^+$};
\node[anchor=west] at (5.5,1.75) {$\Mbb^0$};
\node[anchor=west] at (5.5,0.75) {$\Mbb^-$};
\draw[color=blue,line width=2pt,->] (2.75,2.75) -- ++(-1.5,0);
\draw[color=blue,line width=2pt,->] (2.75,1.75) -- ++(-1.5,0);
\draw[color=blue,line width=2pt,->] (2.75,0.75) -- ++(-1.5,0);
\node[anchor=north] at (0,0) {$\Mbb$}; 
\draw (-6,0) -- ++ (1,0) -- ++(2.5,2.5) -- ++(-0.5,0.5) -- ++(-5,0) -- ++(-0.5,-0.5) -- cycle;
\draw[fill=Gold] (-6,0) -- ++ (1,0) -- ++(0.5,0.5) -- ++(-0.5,0.5) -- ++(-1,0) -- ++(-0.5,-0.5) -- cycle;
\draw[fill=Gold] (-8,3) -- ++ (5,0) -- ++(0.5,-0.5) -- ++(-0.5,-0.5) -- ++(-5,0) -- ++(-0.5,0.5) -- cycle;
\node at (-5.5,0.5) {$\Im \rho^-$};
\node at (-5.5,2.5) {$\Im \rho^+$};
\node at (-5.5,1.5) {$\Im \upsilon$};
\node at (-5.5,-0.5) {(a)};
\node at (2,-0.5) {(b)};
\end{tikzpicture}
\end{center}
\caption{(a) The arrangement of image regions $\Im\rho^\pm\subset\Im\upsilon\subset D_\Mbb(\Im\rho^+)$ used in Lemma~\ref{lem:local_zeta}. (b) The spacetimes used to compute the cocycle. The map $\tilde{\Lambda}_T$ (resp. $\tilde{\Lambda}_S$) is locally constant outside the chequered (resp., dotted) region.}
\label{fig:trio}
\end{figure}

\begin{proof} 
(i) Select smooth spacelike Cauchy surfaces $\Sigma^\pm$ in $\Mbb$ so that
$\mu$ vanishes to the past of $\Sigma^-$ and is identically unity on 
the future of $\Sigma^+$, arranging also that $\Im\rho^\pm\subset
I_\Mbb^\pm(\Sigma^\pm)$. Using these Cauchy surfaces to define
$\Mbb^\pm$ as in part (a), and building the interpolating spacetime $\tilde{\Mbb}$
using $\tilde{\Lambda}=\pi(\tilde{S})$ where $\tilde{S}=\sigma\circ\mu$, 
we construct $\zeta_\Mbb(S)$ by chasing
arrows from left to right along the top line of  
\[
\includegraphics[]{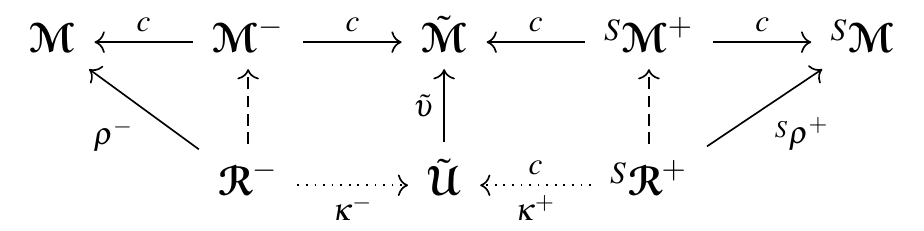}
\]
in which $\tilde{\Ubb}=\act{\upsilon^*\tilde{S}}{\Ubb}$ and $\tilde{\upsilon}=\act{\upsilon^*\tilde{S}}{\upsilon}$. We now  establish the existence of the dashed and dotted morphisms and show that the diagram  commutes, from which~\eqref{eq:local_zeta} follows by functoriality and the timeslice
property.
As $\Im\rho^\pm\subset I_\Mbb^\pm(\Sigma^\pm)$, there 
are unique dashed morphisms as shown, making the two triangles commute, and inducing morphisms from $\Rbb^-$ and
$\act{S}{\Rbb}^+$ to $\tilde{\Mbb}$ via $\Mbb^-$ and $\act{S}{\Mbb}^+$;
these morphisms have the same underlying functions as $\rho^\pm$. The conditions
on $\Im\rho^\pm$ and $\Im\upsilon$ entail that there are (unique)
morphisms $\kappa^\pm$ making the squares commute and with $\kappa^+$ Cauchy, with
underlying functions obeying $\upsilon\circ\kappa^\pm=\rho^\pm$; more formally we may write 
$\FfL(\upsilon)\circ\FfL(\kappa^\pm)=\FfL(\rho^\pm)$, and this relation
determines $\kappa^\pm$ uniquely because their codomain $\tilde{\Ubb}$ is fixed.   
Part (i) is thus proved.  

(ii) We apply part (i) to $\psi\circ\rho^\pm:\Rbb^\pm\to\Nbb$, and $\psi\circ\upsilon:\Ubb\to\Nbb$
using $\pi(\sigma\circ\nu)$ to build an interpolating spacetime for the construction of
$\zeta_\Nbb(S)$. Note that $(\psi\circ\upsilon)^*\nu=\upsilon^*\psi^*\nu=\upsilon^*\mu$, so 
\[
\act{(\psi\circ\upsilon)^*(\sigma\circ \nu)}{\Ubb}=\act{\upsilon^*(\sigma\circ \mu)}{\Ubb}=\tilde{\Ubb}.
\] 
Therefore one has $\zeta_\Nbb(S)\Af(\psi\circ\rho^-) = 
\Af(\act{S}{(\psi\circ\rho}^+))\Af(\kappa^+)^{-1}\Af(\kappa^-)$
with the same morphisms $\kappa^\pm$ as in the original application of part (i), because those morphisms obviously satisfy the characterising equation $\FfL(\psi\circ\upsilon)\circ\FfL(\kappa^\pm)=\FfL(\psi\circ\rho^\pm)$ and have the same codomain $\tilde{\Ubb}$. Combining this with \eqref{eq:local_zeta} gives
\[
\zeta_{\Nbb}(S)\Af(\psi)\Af(\rho^-)=  \Af(\act{S}{\psi})\Af(\act{S}{\rho}^+)
\Af(\kappa^+)^{-1}\Af(\kappa^-) = \Af(\act{S}{\psi})\zeta_\Mbb(S)\Af(\rho^-).\qquad\cmpqed
\] 
\end{proof} 
The above circumstances can be achieved
for sufficiently many $\rho^-$ to form a jointly epic set of morphisms 
$\Af(\rho^-)$.
\begin{lemma} \label{lem:natural_zeta}
Suppose $\psi:\Mbb\to\Nbb$ in $\FLoc$ and let $S\in\Sc$. For any truncated
multi-diamond $D$ of $\Mbb$, we have
\begin{equation}\label{eq:local_zeta_natural2}
\zeta_\Nbb(S)\Af(\psi)\Af(\iota_{\Mbb;D}) = \Af(\act{S}{\psi})\zeta_\Mbb(S)\Af(\iota_{\Mbb;D}).
\end{equation}
\end{lemma}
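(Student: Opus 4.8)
The plan is to deduce \eqref{eq:local_zeta_natural2} from Lemma~\ref{lem:local_zeta}(ii), taking $\rho^-=\iota_{\Mbb;D}$ (so $\Im\rho^-=D$) and constructing, for each truncated multi-diamond $D$, the remaining ingredients that lemma requires: morphisms $\upsilon:\Ubb\to\Mbb$ and $\rho^+:\Rbb^+\to\Mbb$ with $\Im\rho^\pm\subset\Im\upsilon\subset D_\Mbb(\Im\rho^+)$, a temporal mollifier $\mu$ for $(D,\Im\rho^+)$ in $\Mbb$, and a temporal mollifier $\nu$ for $(\psi(D),\psi(\Im\rho^+))$ in $\Nbb$ with $\psi^*\nu=\mu$ on $\upsilon(\Ubb)$. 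Granting this, Lemma~\ref{lem:local_zeta}(ii) yields \eqref{eq:local_zeta_natural} for $\rho^-=\iota_{\Mbb;D}$, which is \eqref{eq:local_zeta_natural2}. (Once \eqref{eq:local_zeta_natural2} is known for all $D$, the additivity hypothesis, which makes the family $\{\Af(\iota_{\Mbb;D})\}_D$ jointly epic, upgrades it to the full naturality equation \eqref{eq:zeta_nat} in the step that follows.)

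On the $\Mbb$ side the data are produced by hand. Writing $D=\Nc\cap D_\Mbb(B)$ with base a Cauchy multi-ball $B$, enlarge $B$ to a slightly larger Cauchy multi-ball $B'$ with $\overline{D}\subset V:=D_\Mbb(B')$, and set $\upsilon=\iota_{\Mbb;V}$; being a Cauchy development, $V$ is globally hyperbolic and causally convex in $\Mbb$ with $\overline{V}$ compact, so $\Mbb|_V$ is a legitimate $\FLoc$-object. Taking the collar $\Nc$ thin, one may choose a chain of smooth spacelike Cauchy surfaces $\Sigma^-\subset I^-_\Mbb(\Sigma^+)\subset I^-_\Mbb(\Sigma')$ of $\Mbb$, with $\overline{D}\subset I^-_\Mbb(\Sigma^-)$ and $\Sigma'\cap V$ a Cauchy surface of $V$, and take $\Im\rho^+$ to be a globally hyperbolic, causally convex open subset of $V$ lying in $I^+_\Mbb(\Sigma^+)$ and containing $\Sigma'\cap V$ (a thin slab of $V$ around $\Sigma'\cap V$), with $\rho^+=\iota_{\Mbb;\Im\rho^+}$. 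Then $\Im\rho^\pm\subset\Im\upsilon$, while $\Im\upsilon\subset D_\Mbb(\Im\rho^+)$ since $\Im\rho^+$ contains the Cauchy surface $\Sigma'\cap V$ of $V$. For the mollifier, take a Cauchy time function $t_\Mbb$ of $\Mbb$ with $\Sigma^-=\{t_\Mbb=a\}$ and $\Sigma^+=\{t_\Mbb=b\}$ among its level sets ($a<b$); then $\mu:=\chi\circ t_\Mbb$, with $\chi$ smooth nondecreasing, $\chi\equiv0$ on $(-\infty,a]$ and $\chi\equiv1$ on $[b,\infty)$, is a temporal mollifier for $(D,\Im\rho^+)$, since $D\subset I^-_\Mbb(\Sigma^-)=\{t_\Mbb<a\}$ and $\Im\rho^+\subset I^+_\Mbb(\Sigma^+)=\{t_\Mbb>b\}$.

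The essential step is the construction of $\nu$, which I would reduce to extending a time function across $\psi$. It suffices to produce a Cauchy time function $\tau$ of $\Nbb$ with $\tau\circ\psi=t_\Mbb$ on $V$: then $\nu:=\chi\circ\tau$ obeys $\psi^*\nu=\chi(\tau\circ\psi)=\chi(t_\Mbb)=\mu$ on $V$, and it is a temporal mollifier for $(\psi(D),\psi(\Im\rho^+))$ because $\{\tau=a\}$ and $\{\tau=b\}$ are Cauchy surfaces of $\Nbb$ with $\psi(D)$ strictly to the past of the former (as $t_\Mbb<a$ on $\overline{D}$) and $\psi(\Im\rho^+)$ strictly to the future of the latter (as $t_\Mbb>b$ on $\Im\rho^+$). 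Such a $\tau$ exists because an $\FLoc$-morphism has causally convex image: $\psi(V)$ is then causally convex in $\Nbb$, $\psi$ carries each level set $\{t_\Mbb=s\}\cap V$---a relatively compact acausal spacelike partial Cauchy surface of $\Mbb$---to a relatively compact acausal spacelike partial Cauchy surface of $\Nbb$, and the resulting temporal function $t_\Mbb\circ\psi^{-1}$ on $\psi(V)$ can be extended to a smooth global Cauchy temporal function on $\Nbb$ by the standard smoothing-and-extension techniques for spacelike Cauchy surfaces (in the manner of Bernal and S\'anchez), carried out compatibly in the parameter $s$.

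I expect the only genuine difficulty to be this last point---building a global Cauchy time function of $\Nbb$ that restricts to $t_\Mbb\circ\psi^{-1}$ on $\psi(V)$---for which causal convexity of the morphism image is exactly what is needed, together with some care in the Lorentzian-geometry extension argument. The facts used on the $\Mbb$ side (that $V$ and $\Im\rho^+$ are honest $\FLoc$-objects, $\overline{D}\subset V$, and $\Sigma'$, $\Sigma^\pm$ can be chosen as stated) are routine consequences of the structure of truncated multi-diamonds---their bases have compact closure, Cauchy developments of Cauchy multi-balls are causally convex with compact closure, and images of Cauchy multi-balls under $\FLoc$-morphisms are again Cauchy multi-balls (cf.\ \cite{FewVer:dynloc_theory}). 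With $\rho^\pm$, $\upsilon$, $\mu$, $\nu$ assembled, Lemma~\ref{lem:local_zeta}(ii) completes the proof of \eqref{eq:local_zeta_natural2}.
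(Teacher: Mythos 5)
Your overall strategy matches the paper's: reduce to Lemma~\ref{lem:local_zeta}(ii) by building $\rho^\pm$, $\upsilon$, $\mu$ and the compatible mollifier $\nu$, then invoke additivity. But there is a genuine gap at the step you yourself flag as ``the only genuine difficulty'': the claim that $t_\Mbb\circ\psi^{-1}$ on $\psi(V)$ extends to a global Cauchy temporal function $\tau$ of $\Nbb$. This is not a routine consequence of Bernal--S\'anchez. What those results give you is the existence of Cauchy temporal functions, or the extension of a single acausal spacelike compact hypersurface to a Cauchy surface; they do not give a parametrized extension of a prescribed temporal function on a causally convex relatively compact region to a global one, which is a substantially stronger statement and would require its own proof (controlling that the boundary level sets fit coherently into a global foliation). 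You are asking for more than you need: a temporal mollifier is merely a smooth $[0,1]$-valued function that is $\equiv 0$ past $\Sigma^-$ and $\equiv 1$ future of $\Sigma^+$; it need not be $\chi$ of a Cauchy temporal function.

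The paper sidesteps this entirely. It never tries to extend a time function. It observes that $\mu\circ\psi^{-1}$ is \emph{locally constant} (identically $0$ or $1$) on $\overline{\psi(U)}$ outside a thin slab, chooses Cauchy surfaces $\Sigma^\pm$ of $\Nbb$ passing through suitable Cauchy multi-ball images $\psi(\{4t^+/9\}\times B^+)$ and $\psi(\{5t^+/9\}\times B^+)$ so that $\mu\circ\psi^{-1}$ is already $0$ on $J^-_\Nbb(\Sigma^-)\cap\overline{\psi(U)}$ and $1$ on $J^+_\Nbb(\Sigma^+)\cap\overline{\psi(U)}$, defines $\nu$ on the closed set $F=J^-_\Nbb(\Sigma^-)\cup J^+_\Nbb(\Sigma^+)\cup\overline{\psi(U)}$ by piecing those data together, and extends to $C^\infty(\Nbb,[0,1])$ by the smooth Tietze extension theorem (a partition-of-unity argument). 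That is elementary and requires nothing like a global temporal extension. To make this work, the paper also arranges the geometry more tightly than you do: it uses the Bernal--S\'anchez orthogonal splitting $\RR\times\Sigma$ adapted to $D$, replaces $D$ by a thin truncated multi-diamond $R^-$ sharing its base (related to $D$ via the timeslice property, giving $\Af(\iota_{\Mbb;D})=\Af(\iota_{\Mbb;R^-})\circ\vartheta$), and builds $R^+$ and $U$ as explicit neighbourhoods in the split coordinates. Your ``taking the collar $\Nc$ thin'' is not available as a choice---$\Nc$ is part of the given $D$---so you would in any case need the paper's timeslice manoeuvre to pass to a smaller region; once you are doing that, you may as well adopt the split coordinates and the direct Tietze construction of $\nu$.

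A secondary point worth noting: your choices do place $R^-$ and $R^+$ far apart in time when $D$ is thick, whereas the paper's construction keeps $R^\pm$, $U$ in a controlled slab near the base of $D$; this is precisely what makes the consistency check on $F$ (that the three prescriptions for $\nu$ agree on overlaps) transparent. I would recommend rewriting the proof to follow the paper's strategy for $\nu$: the idea of using Lemma~\ref{lem:local_zeta}(ii) is right, but the temporal-function-extension claim is a hole.
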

The additivity assumption on $\Af$ entails that the $\Af(\iota_{\Mbb;D})$ are jointly epic as $D$ runs over the truncated multi-diamonds. Therefore 
one has $\zeta_\Nb(S)\Af(\psi)=\Af(\act{S}{\psi})\zeta_\Mbb(S)$, 
and as $\psi:\Mbb\to\Nbb$ was arbitrary naturality of $\zeta(S)$ is established. 
\begin{proof}[\ifnonCMP Proof \fi of Lemma~\ref{lem:natural_zeta}]
Let $D$ be based on a Cauchy multi-ball $B^-\subset\Sigma$, where $\Sigma$ is a smooth spacelike Cauchy surface. 
By~\cite[Thm 1.2]{Bernal:2005qf}, we may find a Cauchy temporal
function foliating $\Mc$ as $\RR\times\Sigma$ so that $D$ has base $\{0\}\times B^-$ and the $e$-metric splits orthogonally as $\beta \oplus -h_t$, where $h_t$ is a smooth
Riemannian metric on $\Sigma$ for each $t\in\RR$ and $\beta\in C^\infty(\Mc)$ is nonnegative.   
The significance of this splitting is that each $\{t\}\times\Sigma$ is a smooth
spacelike Cauchy surface and all curves $t\mapsto(t,\sigma)$ for fixed $\sigma\in\Sigma$ 
are timelike. This facilitates bounds on Cauchy developments, e.g., $D_\Mbb(\{t\}\times B)\subset
\RR\times B$. 

The form of the $e$-metric allows us to choose another Cauchy multi-ball $\{0\}\times B^+$ containing the closure of $\{0\}\times B^-$ and $\epsilon>0$ such that $\{0\}\times B^-\subset D_\Mbb(\{t\}\times B^+)$ for all $0<t<\epsilon$
(cf.~\cite[Lem.~2.5]{Few_split:2015}). Choosing $t^+\in(0,\epsilon)$ and setting $t^-=0$, we define truncated multi-diamonds
\[
R^\pm =\{(t,\sigma)\in  D_\Mbb(\{t^\pm\}\times B^\pm):~|t-t^\pm|<t^+/3\}  
\]
based on the Cauchy multi-balls $\{t^\pm\}\times B^\pm$. Setting 
\[
U=\{(t,\sigma)\in D_\Mbb(R^+): -t^+/3<t<4t^+/3\},
\]
it is evident that $R^\pm\subset U\subset D_\Mbb(R^+)$. We may choose a temporal mollifier $\mu$ for $(R^-,R^+)$ so that $\mu$ vanishes on $(-\infty,4t^+/9]\times\Sigma$ and $\mu$ is unity on $[5t^+/9,\infty)\times\Sigma$. 

Supposing that $\psi:\Mbb\to\Nbb$, we now construct a temporal mollifier $\nu$ for $(\psi(R^-),\psi(R^+))$ so that $\psi^*\nu$ and $\mu$ agree on 
$L$. Choose  Cauchy surfaces $\Sigma^-$ (resp., $\Sigma^+$) in $\Nbb$ containing the Cauchy multi-ball  $\psi(\{4t^+/9\}\times B^+)$ (resp., $\psi(\{5t^+/9\}\times B^+)$) 
and so that $\Sigma^\pm\subset I_\Nbb^\pm(\Sigma^\mp)$.\footnote{As $\psi(\{4t^+/9\}\times B^+)$ is a Cauchy multi-ball it lies in a smooth spacelike Cauchy surface
$\Sigma^-$ of $\Nbb$. Similarly, there exists $\Sigma^+$ in the (globally hyperbolic region) $I_\Nbb^+(\Sigma^-)$ containing $\psi(\{5t^+/9\}\times B^+)$. As $\Sigma^\pm$ are Cauchy surfaces, one also has $\Sigma^-\subset I_\Nbb^-(\Sigma^+)$.} Owing to the split form of the $e$-metric, $R^\pm\subset I^\pm\times B^\pm$, where $I^\pm=\{t\in\RR:~|t-t^\pm|<t^+/3\}$. Accordingly, 
$R^+\subset (2t^+/3,4t^+/3)\times B^+\subset I^+_\Mbb(\{5t^+/9\}\times B^+)$ and hence $\psi(R^+)\subset I^+_\Nbb(\Sigma^+)$;
similarly $R^-\subset (-t^+/3,t^+/3)\times B^-\subset I^-_\Mbb(\{4t^+/9\}\times B^+)$ so $\psi(R^-)\subset I^-_\Nbb(\Sigma^-)$. Let $F$ be the closed set formed as the union
of $J_\Nbb^\pm(\Sigma^\pm)$ and the closure of $\psi(U)$. Due to the properties of 
$\mu$ and $\Sigma^\pm$, we may choose a smooth function $\nu$ on $F$ that vanishes
on $J_\Nbb^-(\Sigma^-)$, is identical to unity on $J_\Nbb^+(\Sigma^+)$, and 
agrees with $\mu\circ\psi^{-1}$ on $\overline{\psi(U)}$.
Every point $p\in F$ has a neighbourhood in which $\nu$ can be extended to a smooth function taking values in $[0,1]$ -- for  $p\in J^\pm_\Nbb(\Sigma^\pm)$ this is
obvious, while for $p$ in the closure of $\psi(U)$ one may use $\mu\circ\psi^{-1}$. 
By the smooth Tietze extension theorem (a partition of unity argument) one may obtain an 
extension of $\nu$ in $C^\infty(\Nbb,[0,1])$. In particular, $\nu$ is a temporal mollifier for $(\psi(R^-),\psi(R^+))$ and $\psi^*\nu$ agrees with $\mu$ on 
$U$. 

Letting $\Rbb^\pm=\Mbb|_{R^\pm}$, $\rho^\pm=\iota_{\Mbb;R^\pm}$, $\Ubb=\Mbb|_U$, and $\upsilon=\iota_{\Mbb;U}$,  parts (i) and (ii) of Lemma~\ref{lem:local_zeta} apply and give \eqref{eq:local_zeta_natural}. By the timeslice property, 
$\Af(\iota_{\Mbb;D})=\Af(\iota_{\Mbb;R^-})\circ \vartheta$ for
some isomorphism $\vartheta$, because $D$ and $R^-$ are
truncated multi-diamonds with a common base (there
are Cauchy morphisms from $\Mbb|_{D\cap R^-}$ to each
of $\Mbb|_D$ and $\Mbb_{R^-}$). Therefore
\eqref{eq:local_zeta_natural2} holds. \cmpqed
\end{proof}

\paragraph{(c) Computation of the $2$-cocycle} The construction of 
$S\mapsto\zeta(S)$ shows that $\Af$ is $\Sc$-covariant; we now show that
the corresponding cocycle $(\xi,\phi)$ is trivial. Starting with $\phi$,
let $\alpha\in\Aut(\Af)$ and consider the diagram
\[
\includegraphics[]{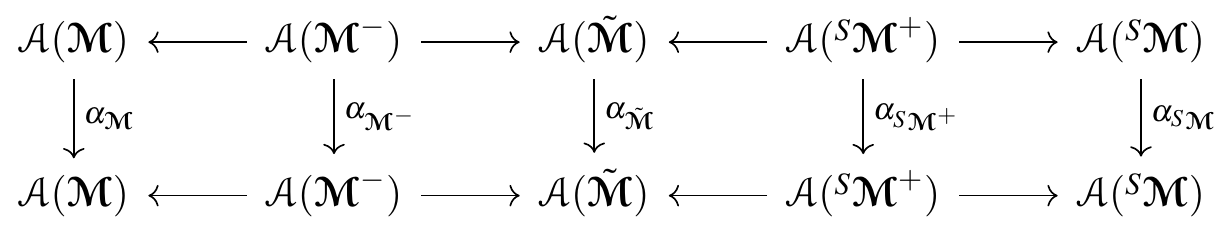}
\]
in which unlabelled arrows are isomorphisms arising as images of
Cauchy morphisms in \eqref{eq:chain} and each arrow on the bottom row is identical to the one vertically above it.  Each square commutes by naturality of $\alpha$
and one has
$\zeta(S)_\Mbb\alpha_\Mbb = \alpha_{\act{S}{\Mbb}}\zeta(S)_\Mbb$
by definition of $\zeta(S)$. Thus $\phi(S)(\alpha)=\alpha$, for all $S$ and $\alpha$.

It remains to prove that 
\begin{equation}\label{eq:triv_zeta}
\zeta_\Mbb(ST)= \zeta_{\act{T}{\Mbb}}(S) \zeta_\Mbb(T) \qquad(S,T\in\Sc,~\Mbb\in\FLoc).
\end{equation}
Fix $\Mbb=(\Mc,e)\in\FLoc$ and choose a Cauchy temporal function $\tau\in C^\infty(\Mc,\RR)$  
for $\Mbb$ -- i.e., $\nabla\tau$ is everywhere $e$-timelike and future-pointing, and the level sets of $\tau$ are smooth spacelike Cauchy surfaces. Also choose
disjoint open bounded intervals $I^-,I^0,I^+$ of $\RR$ such that $0\in I^0$ and $I^\pm\subset\RR^\pm$, and define submanifolds $\Mc^{-\smash{/}0\smash{/}+}=\tau^{-1}(I^{-\smash{/}0\smash{/}+})$. 
Finally, choose $\tilde{S},\tilde{T}\in C^\infty(\Mc,\Sc)$  
so that $\tilde{S}\equiv 1$ on $J^-_\Mbb(\Mc^0)$ and $\tilde{S}\equiv S$ in $J^+_\Mbb(\Mc^+)$
while $\tilde{T}\equiv 1$ on $J^-_\Mbb(\Mc^-)$ and $\tilde{T}\equiv T$ on $J^+_\Mbb(\Mc^0)$.
Then $\tilde{\Lambda}_S=\pi(S)$ and $\tilde{\Lambda}_T=\pi(T)$ have the homotopy types of $S$ and $T$ relative to  $J_\Mbb^-(\Mc^0)\cup J^+_\Mbb(\Mc^+)$ and 
to $J_\Mbb^-(\Mc^-)\cup J^+_\Mbb(\Mc^0)$ respectively. Evidently 
$\tilde{\Lambda}_S\tilde{\Lambda}_T\in C^\infty(\Mc,\Lc_0)$ takes the constant values $1$ on $J_\Mb^-(\Mc^-)$, $\pi(T)$ on $\Mc^0$ and  
$\pi(ST)$ on $J_\Mb^+(\Mc^+)$, and has the homotopy type of $ST$ relative to 
$J_\Mbb^-(\Mc^-)\cup J^+_\Mbb(\Mc^+)$.
Given these choices, we define various spacetimes: $\Mbb^{-\smash{/}0\smash{/}+}=\Mbb|_{\Mc^{-\smash{/}0\smash{/}+}}$
are slabs of $\Mbb$ sandwiching the regions where $\tilde{\Lambda}_S$ and $\tilde{\Lambda}_T$
can vary (see Fig.~\ref{fig:trio}), while 
\[
\tilde{\Mbb}_T=(\Mc,\tilde{\Lambda}_T e), \qquad\tilde{\Mbb}_S=(\Mc,\tilde{\Lambda}_S \pi(T) e)
\qquad\text{and}\quad
\tilde{\Mbb}_{ST}=(\Mc,\tilde{\Lambda}_S\tilde{\Lambda}_T e)
\]
are interpolating spacetimes used in the constructions of $\zeta_\Mbb(T)$, $\zeta_{\act{T}{\Mbb}}(S)$
and $\zeta_\Mbb(ST)$ respectively. 
Now consider the following diagram of Cauchy morphisms:
\begin{equation}\label{eq:diag1}
\begin{tikzpicture}[baseline=0 em, description/.style={fill=white,inner sep=2pt}]
\matrix (m) [ampersand replacement=\&,matrix of math nodes, row sep=0.8em,
column sep=1.5em, text height=1.5ex, text depth=0.25ex]
{ \Mbb       \& \Mbb^-  \& \tilde{\Mbb}_T       \&                              \&       \\
                 \&  \tilde{\Mbb}_{ST}               \&         \& \act{T}{\Mbb_0} \& \act{T}{\Mbb} \\
\act{ST}{\Mbb} \& \act{ST}{\Mbb}^+  \& \tilde{\Mbb}_{S}   \&                              \&  \\ };
\path[->]
(m-1-2)  edge (m-1-1)
             edge (m-1-3)
             edge (m-2-2)
(m-3-2)  edge (m-3-1)
             edge (m-3-3)
             edge (m-2-2)
(m-2-4) edge  (m-2-5)
            edge[dashed]  (m-2-2)
            edge  (m-1-3)
            edge  (m-3-3);
\end{tikzpicture}
\end{equation}
all of which are induced by inclusion morphisms. (The dashed morphism is well-defined because $\tilde{\Lambda}_S\tilde{\Lambda}_T$ takes the constant value $\pi(T)$ in $\Mc^0$.)
The isomorphism $\zeta_\Mbb(T)$
is obtained by chasing the images under $\Af$ of the arrows, left to right, on the upper line from $\Mbb$ to $\act{T}{\Mbb}$; 
$\zeta_{\act{T}{\Mbb}}(S)$ is obtained by chasing the arrows on the lower line, right to left, from $\act{T}{\Mbb}$ to $\act{ST}{\Mbb}$, while $\zeta_\Mbb(ST)$ is obtained by chasing 
from $\Mbb$ to $\act{ST}{\Mbb}$ along the shortest route. One sees that the identity \eqref{eq:triv_zeta} can be proved by focussing on the central portion of diagram~\eqref{eq:diag1}
(deleting the external legs) and establishing that the isomorphism from $\Af(\Mbb^-)$ to $\Af(\act{ST}{\Mbb}^+)$ induced by chasing via $\act{T}{\Mbb}^0$ is equal to that obtained by chasing vertically downwards. Using the dashed arrow the task splits into two problems, with diagrams
\[
\begin{tikzpicture}[baseline=0 em, description/.style={fill=white,inner sep=2pt}]
\matrix (m) [ampersand replacement=\&,matrix of math nodes, row sep=0.1em,
column sep=2em, text height=1.5ex, text depth=0.25ex]
{      \Mbb^-  \& \& \tilde{\Mbb}_T                                             \\
\& \tilde{\Mbb} \& \\
        \tilde{\Mbb}_{ST}               \&         \& \act{T}{\Mbb}^0  \\ };
\path[->]
(m-1-1)  edge (m-1-3)
             edge (m-3-1) 
(m-3-3) edge  (m-1-3)
            edge[dashed]  (m-3-1)
(m-1-1) edge[dotted]  (m-2-2) 
(m-2-2) edge[dotted]  (m-1-3)
(m-2-2) edge[dotted]  (m-3-1)
(m-3-3) edge[dotted]  (m-2-2);       
\end{tikzpicture}
\quad\text{and}\quad
\begin{tikzpicture}[baseline=0 em, description/.style={fill=white,inner sep=2pt}]
\matrix (m) [ampersand replacement=\&,matrix of math nodes, row sep=0.1em,
column sep=2em, text height=1.5ex, text depth=0.25ex]
{ 
    \tilde{\Mbb}_{ST}           \&   \&           \act{T}{\Mbb}^0    \\
   \& \hat{\Mbb} \& \\
  \act{ST}{\Mbb}^+  \& \& \tilde{\Mbb}_{S}                         \\ };
\path[->] 
(m-3-1)  edge (m-3-3)
             edge (m-1-1) 
(m-1-3) edge[dashed]  (m-1-1)
            edge  (m-3-3)
(m-2-2) edge[dotted]  (m-1-1) 
(m-1-3) edge[dotted]  (m-2-2)
(m-3-1) edge[dotted]  (m-2-2)
(m-2-2) edge[dotted]  (m-3-3); 
\end{tikzpicture}
\] 
where again we must show the equivalence of the chase around the right-hand portions 
to that obtained by passing vertically down from $\Mbb^-$ or $\tilde{\Mbb}_{ST}$ respectively.
In these diagrams, the solid and dashed morphisms are those in the previous diagram, 
$\tilde{\Mbb}=\tilde{\Mbb}_{T}|_{J^-_\Mbb(\Mc^0)}$,  $\hat{\Mbb}=\tilde{\Mbb}_{ST}|_{J^+_\Mbb(\Mc^0)}$, 
and the dotted morphisms are defined by inclusion maps. 
Every small triangle in these diagrams is a commuting triangle of Cauchy morphisms
induced by an inclusion. Taking images under $\Af$, every small triangle is a commuting triangle of isomorphisms and therefore the isomorphisms induced by chasing along the right-hand portions of the diagrams coincide with the isomorphism
induced by the left-hand vertical line. This concludes the proof of~\eqref{eq:triv_zeta}
and hence of Theorem~\ref{thm:CM}.
 
\subsection{Minkowski space}\label{sec:Mink}

Define $n$-dimensional Minkowski space to be the object $\Mbb_0=(\RR^n,(dX^\mu)_{\mu=0}^{n-1})$ of $\FLoc$, where $X^\mu:\RR^n\to\RR$ are the coordinate functions $X^\mu(x^0,\ldots,x^{n-1})=x^\mu$. The corresponding object $\Mb_0:=\FfL(\Mbb_0)$ of $\Loc$ has the restricted Poincar\'e group as its group of automorphisms. In $\FLoc$, however, Lorentz symmetry is broken by the choice of frame and $\Mbb_0$ only admits spacetime translations as automorphisms. Instead, Lorentz
transformations map between distinct objects: each 
$\Lambda\in \Lc_0$ induces an active Lorentz transformation $\psi_\Lambda:\RR^n\to\RR^n$
by matrix multiplication, $X^\mu \circ \Lambda =\Lambda^\mu_{\phantom{\mu}\nu}X^\nu$; as
$\psi_\Lambda^*dX^\mu=\Lambda^\mu_{\phantom{\mu}\nu}
dX^\nu$, $\psi_\Lambda$ induces an morphism $\psi_\Lambda: \Mbb_0\to \act{\Lambda^{-1}}{\Mbb_0}$ in $\FLoc$. Given a second $\Lambda'\in \Lc_0$, the
morphism $\act{\Lambda^{-1}}{\psi}_{\Lambda'}: \act{\Lambda^{-1}}{\Mbb}_0\to 
\act{(\Lambda'\Lambda)^{-1}}{\Mbb}_0$
has underlying map $\Lambda'$, and therefore $\act{\Lambda^{-1}}{\psi}_{\Lambda'}\circ\psi_\Lambda$ has underlying map $\Lambda'\Lambda$, giving the equality of morphisms
\begin{equation}\label{eq:psi_cp}
\psi_{\Lambda'\Lambda}=\act{\Lambda^{-1}}{\psi}_{\Lambda'}\circ\psi_\Lambda.
\end{equation}
As $\psi_\Lambda$ has inverse $\act{\Lambda^{-1}}{\psi}_{\Lambda^{-1}}$ it is therefore an isomorphism in $\FLoc$. Of course, $\FfL(\psi_\Lambda)$ is simply the 
Lorentz transformation $\Lambda$ as an automorphism of $\Mb_0$. To economize
on notation we also write $\psi_S$ for $\psi_{\pi(S)}$ if $S\in\Sc$.

Whereas a functor on $\Loc$ automatically represents Lorentz transformations
in the automorphism group of $\Af(\Mb_0)$, the same is not true of functors
on $\FLoc$. This is remedied precisely by $\Sc$-covariance: 
for each $S\in\Sc$, we may define 
\begin{equation}
\Xi(S)=\Af(\act{S}{\psi_S}) \circ\zeta(S)_{\Mbb_0}=
\zeta(S)_{\act{S^{-1}}{\Mbb_0}}\circ\Af(\psi_{S}),
\end{equation}
an automorphism of $\Af(\Mbb_0)$ with some important properties. 
First, it is clear that $\Xi(1)=\id_{\Af(\Mbb_0)}$ and more generally that, 
if $S\in\ker\pi$ covers the identity
Lorentz transformation, then $\Xi(S)=\zeta(S)_{\Mbb_0}$. For example,
in $n\ge 4$ spacetime dimensions, $\Xi(-1)=\zeta(-1)_{\Mbb_0}$.
Second, note that  
\begin{align}
\Xi(S')\Xi(S) &= \Af(\act{S'}{\psi_{S'}})\zeta(S')_{\Mbb_0}
 \Af(\act{S}{\psi_{S}})\zeta(S)_{\Mbb_0} \nonumber\\&
= \Af(\act{S'}{\psi_{S'}})
 \Af(\act{S}{\psi_S})\zeta(S')_{\act{S}{\Mbb_0}} \zeta(S)_{\Mbb_0} \nonumber \\ &
 = \Af(\act{S'S}{\psi_{S'S}})\zeta(S'S)_{\Mbb_0}  =\Xi(S'S),
\end{align}
where, in the penultimate step, we use the fact that $\zeta$ has
a trivial cocycle, and also the identity \eqref{eq:psi_cp}.
Third, the action on fields is 
\begin{align*}
\Xi(S)\Phi_{\Mbb_0}(f) &=  \Af(\act{S}{\psi_S})\zeta(S)_{\Mbb_0} 
\Phi_{\Mbb_0}(f)  
= \Af(\act{S}{\psi_S}) (S\ast\Phi)_{\act{S}{\Mbb_0}}( f)    \\
& = (S\ast\Phi)_{\Mbb_0}(\pi(S)_* f) 
\end{align*}
for all $f\in\CoinX{\RR^n}$ ($\Df$ is $\Sc$-covariant with a trivial implementation).  
 
Fourth, given a state $\omega_0$ on $\Af(\Mbb_0)$ that is invariant under these automorphisms, 
i.e., $\omega_0\circ\Xi(S)=\omega_0$ for all $S$, the corresponding GNS representation $(\HH_0,\DD_0,\pi_0,\Omega_0)$ will carry a unitary implementation of the $\Xi(S)$, so that
\[
\pi_0(\Xi(S)A)=U(S)\pi_0(A)U(S)^{-1}, \qquad U(S)\Omega_0=\Omega_0
\]
for all $S\in\Sc$, recovering the standard transformation laws
of fields in Minkowski QFT. The use of $\FLoc$ has distinguished two aspects of
the Lorentz transformation: the active transformation of points and algebra elements 
under $\Af(\psi_S)$, and the passive relabelling of field multiplets arising from $\Sc$-covariance.

\section{Conclusion}\label{sec:conclusion}

We have given a general analysis of $G$-covariance of locally
covariant theories in terms of nonabelian cohomology. Among the general features
uncovered are the existence of an associated canonical cohomology class, and the structure of field multiplets. As well as discussing rigid scale covariance,
we have established a no-go theorem on mixing of internal and Lorentz symmetries
analogous to the Coleman--Mandula theorem. Our approach here is completely new
and makes no use of Minkowski spacetime structures. This gives a new perspective
on results of this type and further demonstrates the utility of  
relative Cauchy evolution. 

Directions in which this work could be extended include the following.
First one could study smooth $G$-covariance using, e.g., the
smooth nonabelian cohomology of~\cite{Neeb:2007}. Topologies on $\Aut(\Af)$ and
$\Fld(\Af)$ can be given in terms of suitable state spaces~\cite[\S 3.2]{Fewster:gauge}. 
Second, the proof of Theorem~\ref{thm:CM} would apply to other rigid group actions that can be achieved by smooth deformation (e.g., the conformal group).
Finally, an obvious question is whether an analogue of the Haag--{\L}opusza\'{n}ski--Sohnius theorem~\cite{HaagLopSoh:1975} can be proved for theories on a suitable category of supermanifolds, perhaps using the enriched category methods of~\cite{HacHanSch:2016}.

\begin{acknowledgement}  It is a pleasure to thank Michael Bate and Stephen
Donkin for discussions on group cohomology and Klaus Fredenhagen, Markus Fr\"ob, Igor Khavkine, Nicola Pinamonti, 
Katarzyna Rejzner, Rainer Verch, for useful conversations and remarks on the main results.  
I am also grateful to the organisers and participants of the workshop
{\em Algebraic quantum field theory: its status and its future} held
at the Erwin Schr\"odinger Institute, Vienna, May 2014, at which 
some of these ideas were first presented and further developed. 
\end{acknowledgement} 


%

{\small

}
\end{document}

\begin{align*}
(u\otimes_j v)(p_1,\ldots,p_{k+\ell-2j}) &=j!\binom{k}{j}\binom{\ell}{j} \Sym 
\int_{q_1,\ldots,q_{2j}} 
u(p_1,\ldots,p_{k-j},q_1,q_3,q_{2j-1}) \\
&\qquad\times v(p_{k-j+1},\ldots,p_{k+\ell-2j},q_2,\ldots,q_{2j})\prod_{i=1}^j W_\Mb(q_{2i-1},q_{2i}).
\end{align*}